\documentclass[envcountsame]{llncs}
\usepackage{macros}
\setboolean{extended}{true}
\begin{document}
\title{Visualising CTL Witnesses and Counterexamples
\ifextended --- Extended Version
\fi}
\author{Arend Rensink\orcidID{0000-0002-1714-6319}}
\institute{University of Twente, \href{mailto:arend.rensink@utwente.nl}{\email{arend.rensink@utwente.nl}}}
\maketitle

\begin{abstract}
One of the advantages of \LTL over \CTL is that the notion of a counterexample is easy to grasp, visualise and process: it is a trace that violates the property at hand. In this paper we propose a notion of \emph{evidence} for \CTL properties on explicit-state models --- which equally serves as witness for satisfied properties and counterexample for violated ones --- and how to visualise it, with the main aim of (human) comprehension. The main contribution consists of a formal model of evidence, a characterisation of minimal evidence per temporal operator, and a concrete, implemented proposal for its visualisation.

\ifextended\smallskip
This is the extended version of \cite{CTLviz-SPIN}, containing the proofs of all results.
\fi
\end{abstract}

\keywords{CTL Model Checking
\and Evidence
\and Witness
\and Counterexample
\and Visualisation}

\section{Introduction}
\label{sec:intro}

Temporal logic is used to express properties of the evolution of a system over time, such as ``a certain property $\phi$ holds in the system until another property $\psi$ starts to hold'', or ``provided $\phi$ keeps occurring again and again, $\psi$ is never reachable''. The family of temporal logics consists of two main branches, which are incomparable in the sense that either can express properties that the other cannot: \emph{Linear Temporal Logic} (\LTL, ``linear time'')
%, originally proposed by Pnueli 
\cite{LTL}, and \emph{Computational Tree Logic} (\CTL, ``branching time'')
%, originally proposed by Clark and Emerson 
\cite{CTL1,CTL2}. A more powerful logic that combines the features of these two is \CTLstar
%(Emerson and Halpern 
\cite{CTL*}%)
, and a yet more expressive variant is the modal $\mu$-calculus 
%(credited to Scott and De Bakker but first published by Kozen 
\cite{mu1,mu2}%)
.

Much has been said about the relative merits of \LTL and \CTL, regarding expressivenes, computational complexity of model checking (i.e., the process of finding out whether or not a temporal property holds in a given system model), and usability in practice; see, e.g., 
%Vardi 
\cite{CTL-vs-LTL}. In this paper, we address a relatively minor difference, namely the kind of information we can glean from the satisfaction, respectively violation of a temporal property by a given model --- besides the bare, boolean fact itself. Such knowledge is naturally generated as a by-product of the model checking process --- or in sometimes, as in \cite{Counterexample-driven-MC}, is actually central to the model checking algorithm itself.
One of the reasons why \LTL is found to be useful is that, if a property $\phi$ is \emph{not} satisfied, there is at least one (possibly infinite) trace that violates $\phi$, called a \emph{counterexample} to $\phi$. Counterexamples are very useful because they can be further analysed, in order to understand how the violation of $\phi$ came about. Another scenario is where one actually wants to find a concrete trace that \emph{does} satisfy some \LTL property $\phi$: this can be obtained by checking $\neg\phi$ --- a counterexample then serves as a trace of the desired kind.

In \CTL, there is no such simple notion of counterexample. This is due to the nature of the logic, which (as the respective adjectives ``branching'' and ``linear'' imply) takes the model's branching structure into account, rather than essentially regarding it as a set of (finite and infinite) linear traces. In general, the non-satisfaction of a property can therefore only be fully understood by taking at least part of that branching structure aboard, making this conceptually more complex than a single trace or even a set of them. What part of the branching structure, and how to capture it, are questions we address in this paper.

\medskip\noindent Another difference between \LTL and \CTL we want to recall is that violating a \CTL property is equivalent to satisfying its negation. That is not the case for \LTL: e.g., considering the property $\phi=\F p$, which is satisfied by every trace that reaches a state where $p$ holds, we can easily imagine a system where neither $\phi$ nor $\neg\phi$ is satisfied: it merely has to include one trace leading to a $p$-state, and a second one that loops without every reaching such a $p$-state. The closest alternative in \CTL would be $\EF p$ (``there is a trace that reaches a state where $p$ holds''), but if this is violated by a given system then its negation $\neg\EF p$ (being equivalent to $\AG \neg p$, ``for all traces, $\neg p$ holds all along'') is certainly satisfied.

As a consequence, in the case of \CTL there is actually no reason to focus on counterexamples for violated properties: we can simultaneously and equivalently consider the dual notion of \emph{witnesses} for property satisfaction. As an overarching term, we use \emph{evidence} for a property $\phi$ on a given model $\M$: if $\M$ satisfies $\phi$, the evidence is a witness of that fact, whereas if $\M$ violates $\phi$, the evidence is a counterexample.

\medskip\noindent The overall question addressed by this paper is twofold:
\begin{inumerate}
\item what is the simplest notion of evidence that precisely captures the reason for satisfaction or violation of an arbitrary \CTL formula, and
\item\label{RQ2} how can we effectively visualise this notion?
\end{inumerate}
With respect to~\ref{RQ2}, we are obviously limited by the fact that visualisation inherently stops being useful for human comprehension for any but small models.

After setting up the definitions in \cref{sec:definitions}, the main technical contribution of this paper is given in \cref{sec:evidence}. \Cref{sec:proofs,sec:combined} are mostly concerned with visualisation. \Cref{sec:related,sec:conclusion} wrap up with related work and conclusions.
\ifextended \Cref{sec:appendix} contains the proofs of all results.
\else

An extended version of this paper that includes the proofs of all results can be found at \cite{CTLviz-extended}.
\fi

\section{Definitions}
\label{sec:definitions}

We start off by presenting the concepts and notations used in this paper. Most of them are standard, but we sometimes adopt alternative formulations to serve the flow of the paper. The only technical novelties are those of \emph{closed states} and \emph{submodel} (\cref{def:model,def:submodel}).

We use directed graphs $\tupof{X,R}$, where $X$ is countable and $R\subseteq X\times X$. Such a graph is called \emph{cyclic} if there are $\setof{(x_i,x_{i+1})}_{1\leq i\leq n}\subseteq R$ for some $n\geq 1$ such that $x_{n+1}=x_1$, and \emph{acyclic} otherwise.

We use finite and infinite sequences over a set $X$, denoted $X\finseq$ and $X\infseq$, as well as $X\finfseq=X\finseq\cup X\infseq$. We write $|\sigma|\in\Nat\cup\setof\omega$ to denote the length of $\sigma\in X\finfseq$, $\sigma\proj i$ for its $i$th element (for any $i\in\Nat$ such that $1\leq i\leq|\sigma|$) and $\sigma\last=\sigma\proj{|\sigma|}$ for its last element (if $\sigma\in X\finseq$). We use $[\sigma]=\gensetof{\sigma\proj i}{1\leq i\leq |\sigma|}$ for the set of elements in $\sigma$, ${<_\sigma}= \gensetof{(\sigma\proj i,\sigma\proj{i+1})}{1\leq i<|\sigma|}$ for the induced ordering, and $\tupof\sigma=\tupof{[\sigma],{<_\sigma}}$ for its graph. $\sigma$ is called acyclic if $\tupof\sigma$ is so. We write $\emptystr$ for the empty sequence, $\sigma_1\cc \sigma_2$ for the concatenation of $\sigma_1$ and $\sigma_2$ (also extended pointwise to sets of sequences), and $\sigma\prfeq \sigma'$ to indicate that $\sigma$ is a prefix of $\sigma'$ (meaning that $\sigma'=\sigma\cc\sigma''$ for some $\sigma''$).

We write $f\of X\pto Y$ to denote that $f$ is a partial function from $X$ to $Y$; $\dom f\subseteq X$ then denotes its domain and $\hat f=\gensetof{(x,f(x))}{s\in\dom f}$ its graph;\footnote{This is the common term in this context; however, everywhere else in this paper we will exclusively use \emph{graph} for structures $\tupof{X,R}$ as introduced earlier.} moreover, $f\proj Z\of X\pto Y$ for $Z\subseteq X$ denotes the restriction of $f$ to the elements in $Z$. The set of booleans is denoted $\Bool$, with elements $\fff,\ttt$.

\medskip\noindent
To build formulas, we assume a set of \emph{basic propositions} $\Prop$ and a set of \emph{logical operators} $\Oper$. Every operator $o\in \Oper$ has an \emph{arity} $\alpha_o\in \Nat$. For instance, \PL (propositional logic) has basic operators $\Oper_\PL=\setof{\True^{(0)},\False^{(0)},{\lneg}^{(1)},{\land}^{(2)},{\lor}^{(2)}}$ (where a superscript $(i)$ indicates arity $i$),\footnote{We use programming-like notation for the logical operators to be consistent with the visualisation; in particular, $\land$ stands for conjunction and $\lor$ for disjunction.} and \CTL (Computation Tree Logic) has $\Oper_\CTL= \Oper_\PL \cup \setof{\AX^{(1)},\EX^{(1)},\AF^{(1)},\EF^{(1)},\AG^{(1)},\EG^{(1)},\EU^{(2)},\AU^{(2)}}$.

From propositions and operators we build \emph{formulas}, as a set $\Form$ such that $\Form\subseteq \Prop\cup (\Oper\times\Form\finseq))$.
\begin{definition}[formula]\label{def:formula}
The set of \emph{formulas} $\Form$ is the smallest set such that $\Prop\subseteq \Form$ and $\gensetof{(o,\bar\phi)}{o\in \Oper,\bar\phi\in \Form\finseq, |\phi|=\alpha_o}\subseteq \Form$.
%
%The set of \emph{indexed formulas} is defined as $\IxForm=\Nat\finseq\times \Form$. Given two indexed formulas $\iota_i=(\sigma_i,\phi_i)\in \IxForm$ for $i=1,2$, $\iota_1$ is the \emph{parent} of $\iota_2$ if $\sigma_2=\sigma_1\cc k$ and $\phi_1=(o,\psi_1\cdots\psi_n)$ with $\phi_2=\psi_k$ for some $k\in \Nat$.
\end{definition}
We use $\child(\phi)$ to denote the set of arguments of $\phi$, setting $\child(p)=\emptyset$ for all $p\in \Prop$. For any set of formulas $F\subseteq \Form$, we use $F_\Prop=F\cap \Prop$ to denote the subset of propositions, and $F_\Oper=F\cap (\Oper\times \Form\finseq)$ to denote the subset of composite formulas.
A set of formulas $F$ is called \emph{subformula-closed} if $\phi\in F$ implies $\child(\phi)\subset F$. The subformula-closed sets of formulas are collected in $\CForm$.

\medskip\noindent The semantics of \CTL is usually defined in terms of Kripke structures \cite{Kripke}, in which the states (taken from some universe $\State$) are labelled with the sets of propositions that hold there. In this paper we use a variant, in which we add a subset of \emph{closed} states (which start playing a role in \cref{def:submodel}), and use a more general labelling function that 
\begin{inumerate}
\item is partial, and
\item includes arbitrary formulas (and not just propositions).
\end{inumerate}
Hence, labellings are partial functions $L\of S\times F\pto \Bool$ for some set of states $S$ and set of formulas $F$. We use $L_\Prop=L\cap (S\times F_\Prop)$ and $L_\Oper=L\cap (S\times F_\Oper)$ to denote the subfunctions for propositional and composed formulas.

\begin{definition}[model, path]\label{def:model}
Let $F\in \CForm$. A \emph{model over $F$} is a tuple $M=\tupof{S,C,R,L}$ where $S$ is a set of states with subset $C\subseteq S$ of \emph{closed} states, $R\subseteq S\times S$ a binary relation over $S$ and $L\of S\times F\pto \Bool$ a labelling. $M$ is called \emph{Kripke} if $C=S$ and $L=L_\Prop$, and \emph{full} if $C=S$ and $L$ is total.

A \emph{path} in $M$ is a sequence $\rho\in S\finfseq$ such that $(\rho\proj i,\rho\proj{i+1})\in R$ for all $1\leq i< |\rho|$. $\rho$ is called \emph{maximal} if there is no path of which it is a proper prefix.
\end{definition}
Note that paths may contain repeated occurrences of states. The set of models over $F$ is denoted $\Model(F)$. We commonly write $s\to s'$ for $(s,s')\in R$. We use $\maxpath(s)$ for all maximal paths starting in $s$, $\succ(s)=\gensetof{s'}{s\to s'}$ to denote the set of direct successors of $s$, and $S\proj{\phi,v}\subseteq S$ (with $v\in\Bool$) to denote the subset of states $s$ for which $L(s,\phi)=v$, and analogous for $C$.

Given $F\in\CForm$ and an arbitrary state set $S$, every labelling $L\of S\times F\pto \Bool$ gives rise to a ``discrete'' model $\tupof L = \tupof{S,\emptyset,\emptyset,L}\in \Model(F)$.

\begin{example}\label{ex:game}
As a running example, we use a very simple (and stupid) single-player game, played on the board depicted in \Cref{fig:board}. The rules are that, as long as the game hasn't ended, the player alternates between throwing a two-sided die and moving the pawn forward for the indicated number of eyes. Landing on an \textsf{L}-field ends the game with a loss; landing on \textsf{W} ends it with a win.
\end{example}
\vspace*{-\topsep}
\begin{wrapfigure}{r}{.3\textwidth}\centering
\vspace*{-3mm}
\includegraphics[scale=.5]{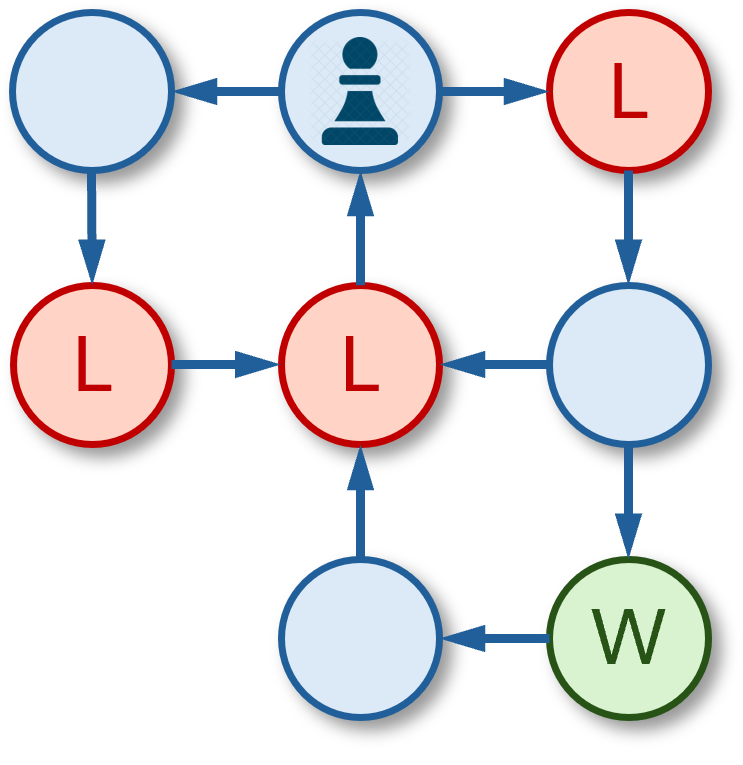}
\vspace*{-3mm}
\caption{Game board (\textsf{L} = loss, \textsf{W} = win)}
\vspace*{-3mm}
\label{fig:board}
\end{wrapfigure}

\Cref{fig:state-space} shows the state space of this game in the form of a Kripke model. Transitions are labelled for comprehensibility, but their labels play no role in the formal model. Propositions are shown in green if they are satisfied (i.e., labelled $\ttt$), and in red otherwise.

For this game, one might be interested in questions such as \emph{can the player win without throwing 1?} or \emph{does there exist a maximal path where winning always remains possible but is never actually reached?} These can be formalised as \CTL properties; respectively $\EUof{\,\lneg\done\,}{\win}$ and $\EG( \lneg\win\land \EF\win)$.
\qed
\begin{figure}[tb]\centering
\includegraphics[scale=.22]{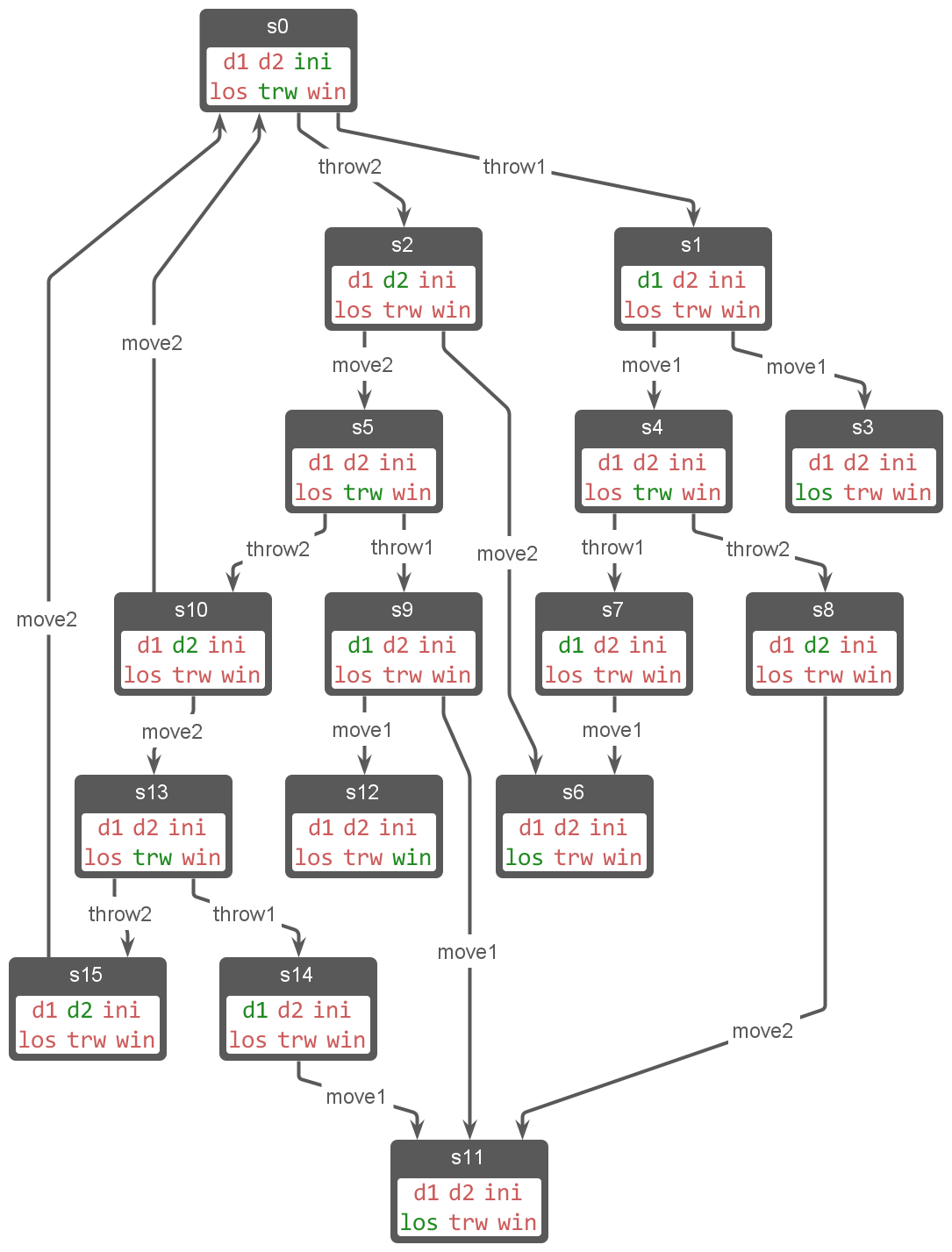}
\caption{Kripke model of the game state space (cf.\ \Cref{ex:game}). $\Prop$ consists of \done (the die shows 1 and the pawn should move), \dtwo (the die shows 2 and the pawn should move), \ini (the pawn is in the initial field), \los (the game is lost), \trw (the die should be thrown) and \win (the game is won). Propositions are green where satisfied, red elsewhere.}
\label{fig:state-space}
\end{figure}

\medskip\noindent
We now come to the standard semantics of \CTL, which is captured by a family of relations ${\sat^M}\subseteq S\times F$ for Kripke models $M$, defined as follows (limited to the existential fragment consisting of operators $\setof{\True, {\lneg}, {\lor}, \EX, \EU, \EG}$ as all others can be generated from those\footnote{See, e.g., \url{https://en.wikipedia.org/wiki/Computation_tree_logic}}):
\begin{align*}
s & \sat^M p
  && \text{if } L(s,p)=\ttt \\
s & \sat^M \True
  && \text{always} \\
s & \sat^M \lneg\phi
  && \text{if } s\not\sat^M \phi \\
s & \sat^M \phi_1\lor \phi_2
  && \text{if } s\sat^M \phi_1 \text{ or } s\sat^M \phi_2 \\
s & \sat^M \EX\phi
  && \text{if there is a } s'\in\succ(s) \text{ such that } s'\sat^M \phi \\
s & \sat^M \EUof{\phi_1}{\phi_2}
  && \begin{array}[t]{l}
     \text{if there is a } \rho\in \maxpath(s) \text{ and } n < |\rho| \\
     \text{such that } \rho\proj i\sat^M \phi_1 \text{ for all } 1\leq i\leq n \text{ and } \rho\proj{n+1} \sat^M \phi_2
     \end{array} \\
s & \sat^M \EG\phi
  && \begin{array}[t]{l}
     \text{if there is a } \rho\in\maxpath(s) \\
     \text{such that } \rho\proj {i+1}\sat^M \phi \text{ for all } 0\leq i<|\rho|
     \end{array}
\end{align*}
For instance, checking the properties of \Cref{ex:game} on the Kripke model of \Cref{fig:state-space} yields $s_0\not\sat \EUof{\,\lneg\done\,}\win$ (from the initial state, winning is impossible without throwing 1) and $s_0\sat \EG(\lneg\win \land \EF\win)$ (by staying in the loop $s_0\,s_2\,s_5\,s_{10}\,s_0$, the player does not win, but the win state $s_{12}$ always remains reachable.)

\medskip\noindent The distinction between open and closed states comes into play when we consider \emph{submodels}. A submodel of $M$ consists of subsets of $S_M$, $C_M$, $R_M$ and $L_M$, with the proviso that a closed state of the submodel must already have all outgoing transitions that it has in $M$. Formally:
\begin{definition}[submodel]\label{def:submodel}
Let $F\in \CForm$, and let $M_1,M_2\in \Model(F)$. $M_1$ is a \emph{submodel} of $M_2$, denoted $M_1\submodel M_2$, if $S_1\subseteq S_2$, $C_1\subseteq C_2$, $R_1\subseteq R_2$ and $L_1\subseteq L_2$, and $(s,s')\in R_2\setminus R_1$ implies $s\notin C_1$.
\end{definition}
Among other things, $M_1\submodel M_2$ implies that none of the ``new'' states in $S_2\setminus S_1$ are reachable from any of the closed ``old'' states in $C_1$.
\begin{definition}[soundness, consistency]
Let $F\in \CForm$. A model $M\in \Model(F)$ is called \emph{sound} if it is
full and for all $(s,\phi)\in S\times F$, $L(s,\phi)=\ttt$ iff $s\sat^M \phi$ (and conversely,  $L(s,\phi)=\fff$ iff $s\not\sat^M \phi$). $M$ is called \emph{consistent} if it has a sound supermodel.
\end{definition}
Note that every Kripke model in which $\dom L=S\times F_\Prop$ is consistent. Hence, the traditional task of model checking a formula $\phi$ over a given Kripke model $M$ comes down to finding the smallest sound supermodel $N\supmodel M$ (where the qualification ``smallest'' implies that $N$ is unique and has $S_N=S_M$, $C_N=C_M$ and $R_N=R_M$). Instead, in this paper, we rather focus on the equivalent task of establishing whether (or actually: why) a given full model is sound.

\section{Evidence}
\label{sec:evidence}

Given a labelling function $L\of S\times F\pto \Bool$, we refer to the elements of $\hat L$ (which have the shape $(s,\phi,b)$) as \emph{assertions}, ranged over by $a$. Such an assertion should be read as the claim that $\phi$ holds in $s$ if $b=\ttt$, or fails to hold in $s$ if $b=\fff$. For propositions ($\phi\in\Prop$), assertions are axiomatic --- they cannot be (dis)proved but must be taken at face value. For compound formulas ($\phi\in \Form_\Oper$) however, the correctness of an assertion depends on the subformulas of $\phi$.
We define \emph{evidence} for an assertion involving $\phi$ as sufficient information about the children of $\phi$ to guarantee that the assertion is correct.
\begin{definition}[evidence, witness, counterexample]\label{def:evidence}
Let $F\in \CForm$ and let $M\in \Model(F)$ be consistent. $M$ is \emph{evidence} for an assertion $a\in S\times F_\Oper\times \Bool$ if
\begin{enumerate*}[label=(\roman*)]
\item $\dom L_M\subseteq S\times \child(\phi_a)$ and 
\item $a\in \hat L_N$ for every sound supermodel $N\supmodel M$.
\end{enumerate*}
Moreover, $M$ is called a \emph{witness} for $(s,\phi)\in S\times F$ if it is evidence for $(s,\phi,\ttt)$, and a \emph{counterexample} for ($s,\phi)$ if it is evidence for $(s,\phi,\fff)$.
\end{definition}
For instance, starting with a sound model $M$ and an arbitrary assertion $a\in \hat L$, a straightforward case of evidence is obtained by just taking the transition structure of $M$ and reducing the labelling to the children of $\phi_a$.
\begin{restatable}{proposition}{propevidence}\label{prop:evidence}
Let $F\in \CForm$ and let $M\in \Model(F)$ be sound. For every $a\in \hat L$ with $\phi_a\in F$, the model $\tupof{S,C,R,L\proj{S\times \child(\phi_a)}}$ is evidence for $a$.
\end{restatable}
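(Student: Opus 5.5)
Write $M'=\tupof{S,C,R,L\proj{S\times \child(\phi_a)}}$ and $a=(s,\phi,b)$. The plan is to check the three requirements of \cref{def:evidence} in turn: $M'$ is consistent, its labelling only concerns children of $\phi$, and every sound supermodel of $M'$ labels $(s,\phi)$ with $b$.

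\emph{Domain condition.} Since $F$ is subformula-closed and $\phi\in F$, we have $\child(\phi)\subseteq F$, so $M'$ is a model over $F$. Its labelling domain is contained in $S\times\child(\phi)$ by construction, which gives condition (i).

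\emph{Consistency.} We have $M'\submodel M$: states, closed states and transitions coincide, and the labelling is a restriction. The side condition on $R_M\setminus R_{M'}$ is vacuous because this set is empty. Since $M$ is sound, $M'$ has a sound supermodel and is therefore consistent.

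\emph{Condition (ii).} This is the substantive step. Let $N\supmodel M'$ be sound. Then $C_N=S_N$ and $C_{M'}=C_M=S$, and I claim $M$ itself is full, so $C=S$. The submodel condition says no transition of $N$ outside $R$ may leave a state in $C=S$. Hence every $N$-transition out of a state of $S$ already lies in $R$, and $S$ is closed under $N$-successors.

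The key step is then to show that for every $s'\in S$ and $\psi\in F$, we have $s'\sat^N\psi$ iff $s'\sat^M\psi$. I would prove this by induction on $\psi$. For propositions, I would use $L_N\supseteq L_M\proj{S\times\child(\phi)}$ only where needed; in general, though, only the children of $\phi$ are shared.

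It is simpler to induct only inside the semantics of $\phi$'s top operator. For each child $\psi$ of $\phi$ and each $s'\in S$, soundness of $N$ gives $L_N(s',\psi)=\ttt$ iff $s'\sat^N\psi$. Since $L_N$ extends $L_M$ on these pairs, this yields $s'\sat^N\psi$ iff $L_M(s',\psi)=\ttt$ iff $s'\sat^M\psi$.

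The satisfaction clause for $\phi$'s operator at $s$ refers only to the children's truth values along successors and maximal paths from $s$. All such successors and maximal paths coincide in $N$ and $M$, by the closure of $S$ established above. Hence $s\sat^N\phi$ iff $s\sat^M\phi$, which holds iff $b=\ttt$ by soundness of $M$. Soundness of $N$ then gives $L_N(s,\phi)=b$, i.e.\ $a\in\hat L_N$.

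The main obstacle is this path-coincidence argument: showing that $\maxpath_N(s)=\maxpath_M(s)$ and that states in $S_N\setminus S$ are unreachable. Both rest entirely on $C=S$ (fullness of $M$) together with the submodel proviso. A short induction on path prefixes should settle it, treating the $\EU$ and $\EG$ clauses uniformly via paths. The remaining operators $\AX$, $\AU$, etc.\ are covered either by the same path argument or by the derived definitions.
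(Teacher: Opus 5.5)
Your proof is correct and takes the same route as the paper's: consistency because $M$ itself is a sound supermodel, the domain condition by construction, and agreement of the children's truth values in $N$ and $M$ on $S$, which forces $L_N(s,\phi)=L_M(s,\phi)=b$. You also make explicit a step the paper's one-line argument leaves implicit, namely that fullness of $M$ makes every state of $S$ closed, so $N$ adds no transitions out of $S$ and $\succ(s)$ and $\maxpath(s)$ coincide in $N$ and $M$; you should, however, delete the abandoned claim that satisfaction of every formula in $F$ agrees between $N$ and $M$, since it fails for propositions outside $\child(\phi)$.
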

Evidence may easily contain information that is not strictly required to explain the assertion. This is already exemplified by \Cref{prop:evidence}: the model $M_a$ shown there to be evidence for an assertion $a$ contains the entire transition structure of the original model $M$, including states that are not even reachable from $s_a$. In the remainder of this section, we obtain more precise characterisations of evidence for the core \CTL formulas.

As a first step, \Cref{tab:evidence} defines concrete ``syntactic'' conditions $\witness$ and $\counter$ that are sufficient and, for a certain class of formulas, necessary to ensure that a model is evidence for a given assertion.
\begin{table}[t]
\caption{Witness and counterexample conditions $\witness_M,\counter_M\of S\times F_\Oper\to \Bool$ for a given model $M\in \Model(F)$.}
\label{tab:evidence}
\vspace*{-3mm}
\[\begin{array}{l@{{\enspace}}l}
\phi & \witness_M(s,\phi) \\
\hline
\True & \text{always} \\
\neg\psi & s\in S\proj{\psi,\fff} \\
\psi_1\lor\psi_2 & s\in S\proj{\psi_1,\ttt}\cup S\proj{\psi_2,\ttt} \\
\EX\psi & \succ(s) \text{ overlaps with } S\proj{\psi,\ttt} \\
\EUof{\psi_1}{\psi_2} & \maxpath(s) \text{ overlaps with } S\proj{\psi_1,\ttt}\finseq\cc S\proj{\psi_2,\ttt}\cc S\finfseq \\
\EG{\psi} & \maxpath(s) \text{ overlaps with } (S\proj{\psi,\ttt}\finseq\cc C\proj{\psi,\ttt}) \cup S\proj{\psi,\ttt}\infseq \\[\bigskipamount]
\phi & \counter_M(s,\phi) \\
\hline
\True & \text{never} \\
\neg\psi & s\in S\proj{\psi,\ttt} \\
\psi_1\lor\psi_2 &  s\in S\proj{\psi_1,\fff}\cap S\proj{\psi_2,\fff} \\
\EX\psi &  s\in C \text{, and } \succ(s) \text{ is a subset of } S\proj{\psi,\fff} \\
\EUof{\psi_1}{\psi_2} & \maxpath(s)\text{ is a subset of }C\proj{\psi_2,\fff}\finfseq\cup \bigl(C\proj{\psi_2,\fff}\finseq\cc (S\proj{\psi_1,\fff}\cap S\proj{\psi_2,\fff})\cc S\finfseq\bigr) \\
\EG{\psi} & \maxpath(s)\text{ is a subset of }C\finseq \cc S\proj{\psi,\fff}\cc S\finfseq. %\\
\end{array}\]
\end{table}
We first state their sufficiency.

\begin{samepage}
\begin{restatable}[$\witness$ and $\counter$ are sufficient]{theorem}{thmevidenceif}\label{thm:evidence-if}
Let $F\in \CForm$, let $M\in \Model(F)$ be consistent and let $s\in S$ and $\phi\in F_\Oper$.
\begin{enumerate}[smallsep]
\item $M$ is a witness for $(s,\phi)$ if $\witness_M(s,\phi)$ holds;
\item $M$ is a counterexample for $(s,\phi)$ if $\counter_M(s,\phi)$ holds.
\end{enumerate}
\end{restatable}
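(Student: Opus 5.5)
We must show that for every sound supermodel $N\supmodel M$, $L_N(s,\phi)$ equals the claimed value; condition (i) of \Cref{def:evidence} is understood as a restriction on which $M$ qualify, so the substance is condition (ii). Since $N$ is sound, it suffices to show $s\sat^N\phi$ (for $\witness$) or $s\not\sat^N\phi$ (for $\counter$). The three facts about a sound supermodel $N$ that do all the work come straight from \cref{def:submodel}. First, $L_M\subseteq L_N$, so every label $L_M(t,\psi)=b$ transfers to $N$, and by soundness $t\sat^N\psi$ iff $b=\ttt$. Second, $R_M\subseteq R_N$, so every path of $M$ is a path of $N$. Third, for closed $t\in C_M$ the successors in $N$ coincide with those in $M$. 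We then proceed by case distinction on the top-level operator of $\phi$ in the existential fragment.

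For the propositional cases ($\True,\lneg,\lor$) the conditions in \cref{tab:evidence} are literally labels on $s$ of the children, and transferring them to $N$ gives the result immediately. For the witness cases of $\EX$, $\EU$ and $\EG$ we take the path supplied by $\witness_M$ and turn it into a witnessing path in $N$.

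\begin{itemize}
\item For $\EX$, a successor in $M$ is a successor in $N$.
\item For $\EU$, take $\rho\in\maxpath_M(s)$ with a prefix in $S\proj{\psi_1,\ttt}\finseq\cc S\proj{\psi_2,\ttt}$. This prefix is a path in $N$ and extends to some maximal path of $N$, which witnesses $s\sat^N\EUof{\psi_1}{\psi_2}$.
\item For $\EG$ with an infinite path in $S\proj{\psi,\ttt}\infseq$, the path is already maximal in $N$ and all its states satisfy $\psi$.
\item For $\EG$ with the other shape, the path ends in a closed state $t\in C\proj{\psi,\ttt}$ and is maximal in $M$, so $t$ has no successors in $M$. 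By the third fact $t$ has none in $N$ either, so the path is maximal in $N$ and all its states satisfy $\psi$.
\end{itemize}

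The counterexample cases for $\EX$, $\EU$ and $\EG$ are the main obstacle, because they quantify over all maximal paths of $N$, which may include transitions absent from $M$. The key lemma is: \emph{if $\pi$ is a path in $N$ from $s$ whose states except possibly the last lie in $C_M$, then $\pi$ is a path in $M$.} This follows by induction along $\pi$ using the third fact. For $\EX$ it gives $\succ_N(s)=\succ_M(s)$, and each successor is labelled $\psi$-false.

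For $\EG$, take any maximal $\pi$ in $N$ from $s$, and let $k$ be the length of its longest prefix lying in $C_M$ and also forming a path in $M$. We show that this prefix extends to a prefix of some $\rho\in\maxpath_M(s)$ in which the first non-closed position carries a $\psi$-false state. Concretely:
\begin{itemize}
\item If the prefix is finite, the next state of $\pi$ is an $M$-successor by the lemma. Extending it to a maximal $M$-path, the condition $C\finseq\cc S\proj{\psi,\fff}\cc S\finfseq$ forces that state to be labelled $\psi$-false, because every element of $\maxpath_M(s)$ must contain a $\psi$-false state right after its closed prefix.
\item If $\pi$ stays in $C_M$ forever, then $\pi$ is an infinite path in $M$. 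This contradicts the condition, since it would have to hit $S\proj{\psi,\fff}$, and we argue it would then do so within the closed prefix, where we handle this by choosing the decomposition minimally.
\end{itemize}

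Getting this bookkeeping right is delicate. The careful point is that the condition is phrased over $\maxpath_M$, so every $N$-prefix must be matched to an $M$-maximal path sharing it. That requires each closed state's $M$-successors to agree with its $N$-successors, which the lemma gives, and the last state of the prefix is what lets the path exit into open territory. $\EU$ is analogous, with the two alternatives $C\proj{\psi_2,\fff}\finfseq$ and $C\proj{\psi_2,\fff}\finseq\cc(S\proj{\psi_1,\fff}\cap S\proj{\psi_2,\fff})\cc S\finfseq$. Either every $N$-path stays in closed $\psi_2$-false states, and so is an $M$-path that never satisfies $\psi_2$, or it reaches a state falsifying both $\psi_1$ and $\psi_2$ before any $\psi_2$-state. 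In both situations $s\not\sat^N\EUof{\psi_1}{\psi_2}$.
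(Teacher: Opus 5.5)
\textbf{There is a genuine gap, in the $\EG$ counterexample case (and by extension $\EU$).}

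Your route differs from the paper's. The paper shows that $\witness$ and $\counter$ are preserved upwards along $\submodel$, and then uses that in a full model $\witness$ coincides with satisfaction and is complementary to $\counter$. You instead argue semantically inside the sound supermodel $N$, which avoids the complementarity step altogether. Your witness cases are correct, and so is your lemma that an $N$-path whose non-final states lie in $C_M$ is an $M$-path.

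The failure is at the step you yourself flag as delicate. You claim that $\counter_M(s,\EG\psi)$ forces the first non-closed state of $\pi$ to be $\psi$-false, because every $\rho\in\maxpath_M(s)$ carries a $\psi$-false state ``right after its closed prefix''. This is false. A decomposition in $C\finseq\cc S\proj{\psi,\fff}\cc S\finfseq$ only requires the states \emph{before} the $\psi$-false one to be closed. That state may itself be closed, and may be followed by further closed states and then an open state about which nothing is known. For instance, take $S=\setof{s,t}$, $C=\setof s$, $R=\setof{(s,t)}$ and $L=\setof{(s,\psi)\mapsto\fff}$. The condition holds (empty closed prefix, then $s$), yet the first open state $t$ is unlabelled and may satisfy $\psi$ in a sound supermodel. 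No choice of decomposition, minimal or otherwise, changes this. Your ``analogous'' $\EU$ case inherits the same error for the state falsifying both $\psi_1$ and $\psi_2$.

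\textbf{The repair.} The correct invariant is that the false state lies \emph{at or before} the first open position, and that suffices.
Given $\pi\in\maxpath_N(s)$, let $j$ be the first position with $\pi\proj j\notin C_M$.
By your lemma, $\pi\proj 1\cdots\pi\proj j$ is a path in $M$; extend it to some $\rho\in\maxpath_M(s)$.
Decompose $\rho=\rho_1\cc t\cc\rho_2$ with $[\rho_1]\subseteq C_M$ and $t\in S\proj{\psi,\fff}$.
Since $\rho\proj j\notin C_M$, we get $|\rho_1|<j$. Hence $t=\pi\proj{|\rho_1|+1}$ lies on $\pi$ and is $\psi$-false in $N$.

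For $\EU$ the same matching works. It rules out the alternative $C\proj{\psi_2,\fff}\finfseq$ for $\rho$. It also places the closed $\psi_2$-false prefix and the doubly false state inside $\pi\proj 1\cdots\pi\proj j$.

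If no such $j$ exists, $\pi$ lies entirely in $C_M$ and is itself in $\maxpath_M(s)$. This includes a finite $\pi$ ending in a closed state without successors, a case your bullets omit because they only treat paths staying in $C_M$ ``forever''. Then $\counter_M$ applied to $\pi$ directly gives the required false state(s); there is no contradiction to derive.

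With this repair your proof is complete. This prefix matching up to the first open state is also what the paper's own upward-closure argument really needs. Its supermodel-preservation lemma asserts that every maximal $N$-path has a maximal $M$-path as a prefix. That assertion fails as soon as an open state of $M$ acquires an extra successor in $N$.
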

\end{samepage}
In general, $\witness$ and $\counter$ are not necessary for $M$ to be evidence: there exist witnesses that do not satisfy $\witness$, and counterexamples that do not satisfy $\counter$. In a sense, this is not surprising. Consider the case where $\phi$ is a valid formula (i.e., satisfied by any state of any sound model); then \emph{every} consistent model with $\dom L\subseteq S\times \child(\phi)$ is a witness for $(s,\phi)$ (for any $s$), even the trivial, $\submodel$-minimum $\tupof{\setof s,\emptyset,\emptyset,\emptyset}$ that contains no information whatsoever. A necessary condition for being a witness would therefore have to check for validity, which is more than can be expected from a simple ``syntactic'' condition such as $\witness$.

In fact, there are even simpler examples of evidence that satisfy $\witness$ nor $\counter$.

\begin{example}\label{ex:not-necessary}
Let $M=\tupof{S,\setof s,R,\emptyset}$ be a consistent model over some $F$ that includes $\phi=\EX\True$. Clearly, neither $\witness(s,\phi)$ nor $\counter(s,\phi)$ is satisfied; however, $M$ is either evidence for $(s,\phi,\ttt)$ (if $\succ(s)$ is non-empty) or for $(s,\phi,\fff)$ (if $\succ(s)$ is empty).\qed
\end{example}
One way to understand this example is that, although the model $M$ does not label the subformula $\True$ to have the value $L(s,\True)=\ttt$, it is not possible to label it any other way and still remain consistent. In other words, the labelling $L$ is \emph{constrained}, for some state/formula pairs where it is currently undefined, to assign a particular value in order to give rise to a consistent model. We will see that, in cases where the children of a formula are \emph{not} constrained in this way, $\witness$ and $\counter$ are necessary conditions for witnesses, respectively counterexamples. The following makes the notion of constrained formulas precise.
\begin{definition}[constrained formula set]\label{def:constrained}
Let $F\in \CForm$. A set of formulas $G\subseteq F$ is \emph{constrained} if there exists an inconsistent model $M\in\Model(F)$ with $\dom L\subseteq S\times G\pto \Bool$.
\end{definition}
\begin{example}\label{ex:constrained}
Here are some examples of constrained formula sets.
\begin{itemize}[smallsep]
\item $G=\setof{p,p\land q}$ is constrained, since any model with $s\in S$, $L(s,p)=\fff$ and $L(s,p\land q)=\ttt$ is inconsistent. Note that no proper subset of $G$ is constrained.
\item $\setof\True$ is constrained, since any model with $s\in S$ and $L(s,\True)=\fff$ is inconsistent.
\item $\setof{\EX\True}$ is constrained, since for any model $M$ with $s\in C$ and $L=\emptyset$, either $M\sqcup\tupof{\setof{(s,\EX\True) \mapsto\ttt}}$ or $M\sqcup\tupof{\setof{(s,\EX\True) \mapsto\fff}}$ is inconsistent (see also \Cref{ex:not-necessary}).\qed
\end{itemize}
\end{example}
On the other hand, the following proposition characterises a class of unconstrained formula sets.
\begin{restatable}[unconstrained formula sets]{proposition}{propunconstrained}
Let $F\in \CForm$ such that the only operators in $F$ are $\lneg$, $\lor$ and $\EU$, and let $G\subseteq F$ be non-empty. If all propositions occurring in $G$ are distinct, then $G$ is unconstrained.
\end{restatable}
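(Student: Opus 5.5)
The plan is to build, for an arbitrary model $M=\tupof{S,C,R,L}\in\Model(F)$ with $\dom L\subseteq S\times G$, a sound supermodel $N\supmodel M$ directly. I will keep the transition structure fixed and choose only a valuation of the propositions. Concretely, $N=\tupof{S,S,R,L_N}$. Here $L_N$ is the labelling induced by $\sat^{N}$ from a suitably chosen total propositional labelling on $S\times F_\Prop$; propositions not occurring in $G$ get arbitrary values. Such an $N$ is full and sound by construction. It is a supermodel of $M$ because $S$ and $R$ are unchanged and $C\subseteq S$, so the side condition of \Cref{def:submodel} holds vacuously ($R_N\setminus R=\emptyset$). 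The only thing left to ensure is $L\subseteq L_N$: for every $\phi\in G$, the computed truth value of $\phi$ must agree with $L(s,\phi)$ wherever the latter is defined.

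First I make the hypothesis precise. ``All propositions occurring in $G$ are distinct'' means that each proposition occurs at most once in the syntax trees of all formulas of $G$ together. Two consequences follow.
\begin{inumerate}
\item Every formula built from $\lneg,\lor,\EU$ contains at least one proposition, since these operators have positive arity and $\True$ is excluded. Hence distinct formulas of $G$ have disjoint sets of propositions, and no formula of $G$ is a subformula of another.
\item Each $\phi\in G$ is linear: its own propositions are pairwise distinct.
\end{inumerate}
By (i) it suffices to treat each $\phi\in G$ separately and combine the resulting valuations, which touch disjoint sets of propositions.

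The key lemma, proved by induction on $\phi$, reads: for every linear formula $\phi$ over $\lneg,\lor,\EU$, every fixed transition structure $\tupof{S,R}$ (with all states closed), and every target function $t\of S\to\Bool$, there is a valuation of the propositions of $\phi$ under which $s\sat\phi$ iff $t(s)=\ttt$. Set $t(s)=L(s,\phi)$ where defined, and arbitrary elsewhere. The cases are as follows.
\begin{itemize}
\item For a proposition $p$, put $L(s,p)=t(s)$.
\item For $\lneg\psi$, realise $\lneg t$ for $\psi$.
\item For $\psi_1\lor\psi_2$, realise $t$ for $\psi_1$ and the constant $\fff$ for $\psi_2$. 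This is possible independently because the two children share no propositions.
\item For $\EUof{\psi_1}{\psi_2}$, realise the constant $\fff$ for $\psi_1$ and $t$ for $\psi_2$. By the semantics, with $n=0$ forced, $s\sat\EUof{\psi_1}{\psi_2}$ iff $s\sat\psi_2$ for every $s$. This holds even for deadlock states, because a maximal path always has length at least~1.
\end{itemize}

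The step I expect to need the most care is not the induction itself but the bookkeeping around the hypothesis. One must check that interpreting ``distinct'' as ``occurring at most once across $G$'' really excludes overlap between members of $G$, including a formula together with its own subformula. This is where $\True$ and the other operators must be absent, since $\setof\True$ and $\setof{\EX\True}$ are constrained by \Cref{ex:constrained}. One must also check that combining the per-formula valuations into a single labelling yields a well-defined model whose induced full labelling extends $L$. Non-emptiness of $G$ plays no real role in the argument.
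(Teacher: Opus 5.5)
Your proof is correct, but it is organised differently from the paper's. The paper does a well-founded induction over formula \emph{sets}. It removes a maximal-depth $\phi$ from $H$, adds $\child(\phi)$, and transfers only the labels actually present on $\phi$ to its children: $\neg$ flips the value, $\lor$ copies it to both disjuncts, and for $\EUof{\psi_1}{\psi_2}$ a $\ttt$ goes to $\psi_2$ while a $\fff$ goes to both children. It then checks that a sound supermodel of the reduced model also respects the original labels; its base case builds exactly your $N=\langle S,S,R,L_N\rangle$. You instead prove a stronger per-formula lemma by structural induction: every total target $t\colon S\to\Bool$ is realisable on the fixed, fully closed structure. You then glue the members of $G$ together using the disjointness of their propositions. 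The underlying mechanism is the same: push the value into one child, and make $\EU$ local by falsifying $\psi_1$.

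Your decomposition is the more robust one. The paper's measure (depth, then cardinality) does not actually decrease in its induction step. For $H=\{p\lor q,\ r\lor u\}$, replacing $p\lor q$ by its children yields a set of equal depth and larger cardinality. The measure also tacitly requires $G$ to be finite. Structural induction on a single formula is trivially well-founded, and your gluing works for infinite $G$ too. You also make explicit the disjointness consequences of the distinctness hypothesis, which the paper uses silently. You are right that non-emptiness only serves to make the paper's $\depth(G)$ defined.

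The one step you leave implicit is that every state has some maximal path: extend greedily, and note that infinite paths are automatically maximal. The direction from $s\sat\psi_2$ to $s\sat\EUof{\psi_1}{\psi_2}$ needs this, not just $|\rho|\geq 1$.
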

We can now state the announced necessity of $\witness$ and $\counter$ for formulas of which the children are unconstrained.

\begin{restatable}[$\witness$ and $\counter$ are sometimes necessary]{theorem}{thmnecessary}\label{thm:evidence-only-if}
Let $F\in \CForm$, let $M\in \Model(F)$  and let $\phi\in F_\Oper$ such that $\child(\phi)$ is unconstrained and $\dom L\subseteq S\times\child(\phi)$. For any $s\in S$:
\begin{enumerate}[smallsep]
\item If $M$ is a witness for $(s,\phi)$, then $\witness(s,\phi)$ holds;
\item If $M$ is a counterexample for $(s,\phi)$, then $\counter(s,\phi)$ holds.
\end{enumerate}
\end{restatable}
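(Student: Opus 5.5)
We prove both parts by contraposition. Assume $\witness_M(s,\phi)$ fails (respectively $\counter_M(s,\phi)$ fails). We then construct a sound supermodel $N\supmodel M$ with $L_N(s,\phi)=\fff$ (respectively $\ttt$). This contradicts $M$ being a witness (respectively counterexample).

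The construction has two stages, and we do a case analysis on the operator of $\phi$. In the first stage we extend $M$ to a model $M'$ by adding transitions, closing states, and adding labels, but only for formulas in $\child(\phi)$. The resulting $M'$ must be total on $S'\times\child(\phi)$ and closed everywhere, and $\phi$ must get the "wrong" value in $M'$ under the standard semantics, evaluated from the child labels of $M'$. In the second stage we use that $\child(\phi)$ is unconstrained. Unconstrainedness means that every model whose labelling lies in $S\times\child(\phi)$ is consistent. Hence $M'$ has a sound supermodel $N$, and we must check two things about $N$:

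\begin{itemize}
\item $N$ cannot add transitions at the states of $M'$, because they are all closed.
\item Soundness of $N$ forces the children's labels, which are already fixed, so $L_N(s,\phi)$ is computed from exactly the structure of $M'$.
\end{itemize}

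Since $N$ is also a supermodel of $M$, this gives the contradiction.

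The case analysis for the first stage, starting with witnesses (where $\witness$ fails):

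\begin{itemize}
\item For $\lneg\psi$ and $\lor$, we set the unlabelled child values at $s$ to make $\phi$ false. The value of $\True$ never matters, since $\witness$ holds for it trivially.
\item For $\EX\psi$, no successor of $s$ is labelled $\psi\mapsto\ttt$. We label every unlabelled successor $\fff$ and close all states without adding transitions.
\item For $\EU$ and $\EG$, we label every unlabelled pair with $\psi_2\mapsto\fff$ (respectively $\psi\mapsto\fff$) and leave $\psi_1$ arbitrary. We then close all states, adding no transitions.
\end{itemize}

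For the $\EU$ and $\EG$ cases we must check that no maximal path of $M'$ lies in the witnessing language. For $\EG$ there is a subtlety. Closing an open state $t$ that ends a finite $\psi$-path turns that path into a maximal path, and such a path has the form $S\proj{\psi,\ttt}\finseq\cc C\proj{\psi,\ttt}$ only if $t$ was already closed in $M$. So for open dead-end states labelled $\psi\mapsto\ttt$, we instead add a fresh successor state labelled $\psi\mapsto\fff$.

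For counterexamples (where $\counter$ fails), the goal is to make $\phi$ true:

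\begin{itemize}
\item For $\EX\psi$, either $s\notin C$, in which case we add a fresh successor labelled $\psi\mapsto\ttt$, or some successor is unlabelled or labelled $\ttt$, in which case we label it $\ttt$.
\item For $\EU$, some maximal path either reaches a non-closed state while $\psi_2$ is never $\ttt$ and $\psi_1$ is never jointly $\fff$ with $\psi_2$, or has a state where $\psi_2$ is not $\fff$. In the first situation we extend at the open state with a fresh $\psi_2\mapsto\ttt$ successor. Otherwise we fill in labels so that an earlier state satisfies $\psi_2$ and the preceding states satisfy $\psi_1$.
\item For $\EG$, a maximal path avoids $C\finseq\cc S\proj{\psi,\fff}$. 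We fill its unlabelled $\psi$-positions with $\ttt$, and if it hits an open state, we add a fresh self-looping successor labelled $\psi\mapsto\ttt$.
\end{itemize}

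The main obstacle is the path-based cases, $\EU$ and $\EG$. There, completing the labels along one chosen path must not be contradicted by other paths, and we must get maximality right at open states and at states with no successors. The path-language conditions in \Cref{tab:evidence} must be unwound carefully, splitting on whether the first "bad" position on a path falls at an open state or at a labelled state. Using fresh states keeps the added structure independent of the rest of $M$.
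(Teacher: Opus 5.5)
Your overall strategy matches the paper's. You complete $M$ adversarially: close every state, add fresh successors at open states where needed, and fix the child labels. You then use unconstrainedness of $\child(\phi)$ to obtain a sound supermodel $N$ in which $\phi$ gets the wrong value. The paper uses one uniform completion: a fresh successor for every open state, and every unlabelled pair in $\hat S\times\child(\phi)$ set to the value adverse to the assertion. It then pulls $\witness'$ or $\counter'$ back to $M$. Your contrapositive, case-by-case version is an inessential variation of that.

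However, your witness case for $\EUof{\psi_1}{\psi_2}$ fails as stated. Leaving $\psi_1$ arbitrary is not harmless, because an $\EU$-witness needs the $\psi_1$-prefix as much as the final $\psi_2$-state. Take $\psi_1=p$, $\psi_2=q$ and $M=\tupof{\setof{s,t},\emptyset,\setof{(s,t)},\setof{(s,q)\mapsto\fff,(t,q)\mapsto\ttt}}$. Here $\witness_M(s,\EUof{p}{q})$ fails: the only maximal path $s\,t$ does not begin with a $q$-state, and no state is labelled $p\mapsto\ttt$. But if your completion sets $p\mapsto\ttt$ at $s$, then every sound $N\supmodel M'$ has $s\sat^N\EUof{p}{q}$ via $s\,t$, so your argument yields no contradiction. ($M$ is nevertheless not a witness, as the closed completion with $p\mapsto\fff$ at $s$ shows.)

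The fix is to also set every unlabelled $(t,\psi_1)$ to $\fff$, i.e., to complete all child labels adversely, as the paper does. Then any witnessing path in $N$ uses only $R$-transitions and states already labelled $\psi_1\mapsto\ttt$ resp.\ $\psi_2\mapsto\ttt$ in $M$. It therefore extends to a maximal path of $M$ in $S\proj{\psi_1,\ttt}\finseq\cc S\proj{\psi_2,\ttt}\cc S\finfseq$, contradicting the failure of $\witness_M(s,\phi)$.

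Some smaller precision points:
\begin{itemize}
\item In the $\EG$ counterexample case you must divert into the fresh $\psi$-loop at the \emph{first} open state of the chosen path, since later states may be labelled $\psi\mapsto\fff$.
\item In the $\EU$ counterexample case, the position to work from is the first one outside $C\proj{\psi_2,\fff}$.
\item The local counterexample cases ($\True$, $\lneg$, $\lor$) are missing, though they are trivial.
\end{itemize}
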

As noted before (e.g., \Cref{prop:evidence}), evidence is easy to come by, and may very well contain states and assertions that are not actually required for the formula to hold. We are in fact interested in \emph{simple} or \emph{small} evidence. In the remainder of this section, we consider \emph{minimal evidence} with respect to $\submodel$. \Cref{tab:min-evidence} gives conditions on models that we will show to characterise such minimal evidence.
\begin{table}
\caption{Minimal witness and counterexample conditions $\minwitness,\mincounter\of S\times F_\Oper\to\Bool$ for a given model $M\in\Model(F)$.}
\vspace*{-3mm}
\label{tab:min-evidence}
\[\begin{array}{l@{{\enspace}}l}
\phi & \minwitness_M(s,\phi) \\
\hline
\True
  & M=\tupof{\setof{s},\emptyset,\emptyset,\emptyset} \\
\lneg\psi
  & M=\tupof{\setof{s}, \emptyset, \emptyset, \setof{(s,\psi)\mapsto\fff}} \\
\psi_1\lor\psi_2
  & M=\tupof{\setof{s}, \emptyset, \emptyset, \setof{(s,\psi_i)\mapsto\ttt}} \text{ for } i=1 \text{ or } i=2 \\
\EX\psi
  & M=\tupof{\setof{s,s'}, \emptyset, \setof{(s,s')}, \setof{(s',\psi)\mapsto \ttt}} \\
\EUof{\psi_1}{\psi_2}
  & M=\tupof{[\rho],\emptyset,<_\rho,\gensetof{(s',\psi_1)\mapsto\ttt}{s'\in[\rho]\setminus \setof{\rho\last}} \cup \setof{(\rho\last,\psi_2)\mapsto \ttt}} \\
  & \qquad\text{where } \rho \in s\cc \State\finseq \text{ is acyclic}
   \\
\EG{\psi}
  & M=\tupof{[\rho],\max <_\rho,<_\rho,\gensetof{(s',\psi)\mapsto\ttt}{s'\in [\rho]}} \\
  & \qquad\text{where } \rho\in s\cc \State\finfseq \text{ such that every } \rho'\prf\rho \text{ is acyclic}
\\[\bigskipamount]
\phi & \mincounter_M(s,\phi) \\
\hline
\True
  & \text{never} \\
\lneg\psi
  & M=\tupof{\setof{s}, \emptyset, \emptyset, \setof{(s,\psi)\mapsto\ttt}} \\
\psi_1\lor\psi_2
  & M=\tupof{\setof{s}, \emptyset, \emptyset, \setof{(s,\psi_1)\mapsto\fff,(s,\psi_)\mapsto\fff}} \\
\EX\psi
  & M=\tupof{\setof s\cup \gensetof{s_i}{i\in I}, \setof s, \gensetof{(s,s_i)}{i\in I}, \gensetof{(s_i,\psi)\mapsto \fff}{i\in I}} \\
\EUof{\psi_1}{\psi_2}
  & M=\tupof{S,C,R,\gensetof{(s',\psi_1)\mapsto \fff}{s'\in S}\cup \gensetof{(s',\psi_2)\mapsto \fff}{s'\in S\setminus C}} \\
  & \qquad\text{where all of } S \text{ is $R$-reachable from } s \text{ and } C\supseteq S\setminus \max R \\
\EG{\psi}
  & M=\tupof{S,S\setminus\max R,R,\gensetof{(s',\psi)\mapsto \fff}{s'\in \max_R S}} \\
  & \qquad\text{where } S \text{ is finite, } R \text{ is acyclic and } \min R=\setof s.
\end{array}
\]
\end{table}
To give an intuition for these technical conditions, \Cref{fig:core-minimal} graphically represents some typical minimal witnesses and counterexamples, for top-level temporal operators. We briefly discuss them.
\begin{itemize}
\item For a witness for $\EX\psi$, it suffices to have one outgoing transition to a state satisfying $\psi$. That state may either be another state, or $s$ itself. All states can be open. Conversely, for a counterexample, \emph{all} successor states (which may or may not include $s$ itself) must \emph{fail} to satisfy $\psi$; and in addition, $s$ must be closed, otherwise there are supermodels in which $s$ has additional successors that do satisfy $\psi$.

\item A witness for $\EUof{\psi_1}{\psi_2}$ only requires a single (finite) path only, consisting of open states, the last of which satisfies $\psi_2$ whereas all other ones satisfy $\psi_1$.

\item To witness $\EG\psi$, again we need only a single path, but this may be infinite; and if not, it must end in a closed state (which guarantees that no supermodel can have a path strictly extending this one). All other states can remain open. Infinite paths may arise in two different ways: through a loop, or through an infinite sequence of distinct states. All states in the witness must satisfy $\psi$.
\end{itemize}
\begin{figure}
\centering
\subcaptionbox{Minimal $\EX$-witnesses\label{fig:EX-minimal-witness}}[.33\textwidth]%
  {\includegraphics[scale=.4]{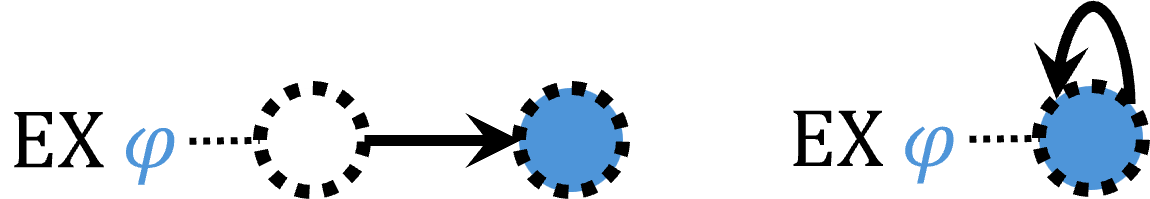}}%
%  \quad
\subcaptionbox{Minimal $\EU$-witness\label{fig:EU-minimal-witness}}[.33\textwidth]% 
  {\includegraphics[scale=.4]{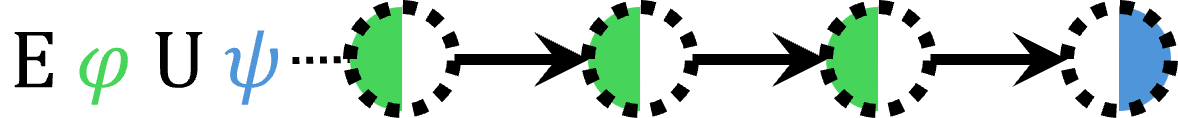}}%
%  \quad
\subcaptionbox{Minimal $\EG$-witnesses\label{fig:EG-minimal-witness}}[.33\textwidth]%
  {\includegraphics[scale=.4]{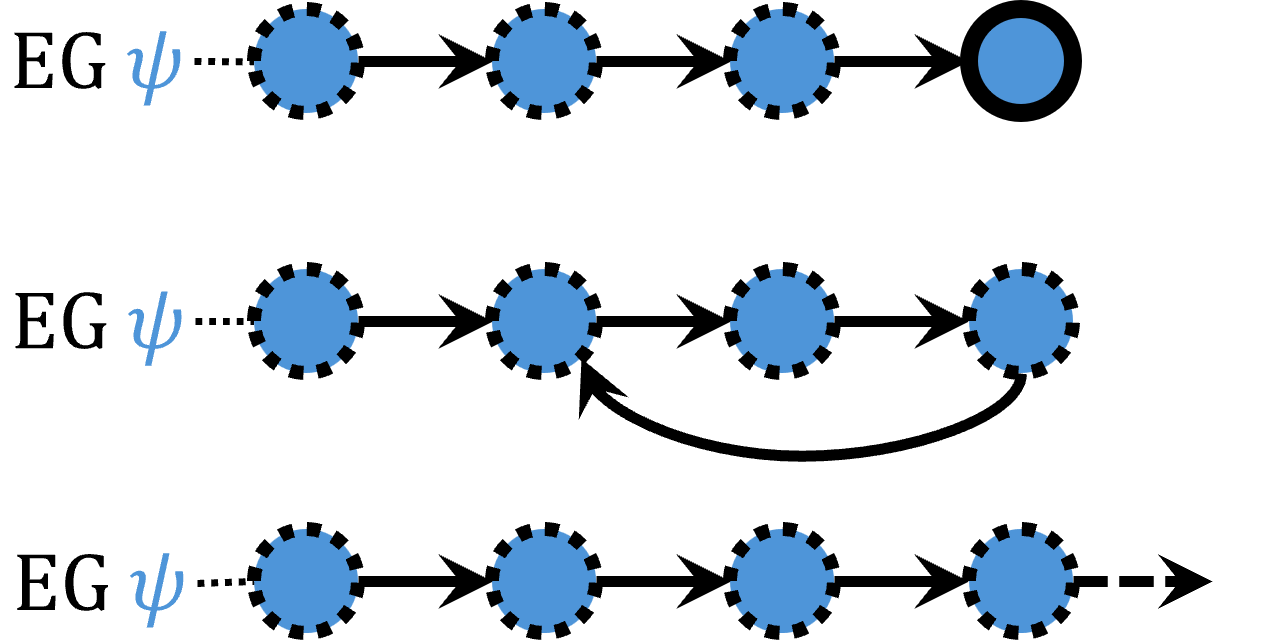}}%
  \\[\medskipamount]
\subcaptionbox{Minimal $\EX$-counterexamples\label{fig:EX-minimal-counterexample}}[.33\textwidth]%
  {\includegraphics[scale=.4]{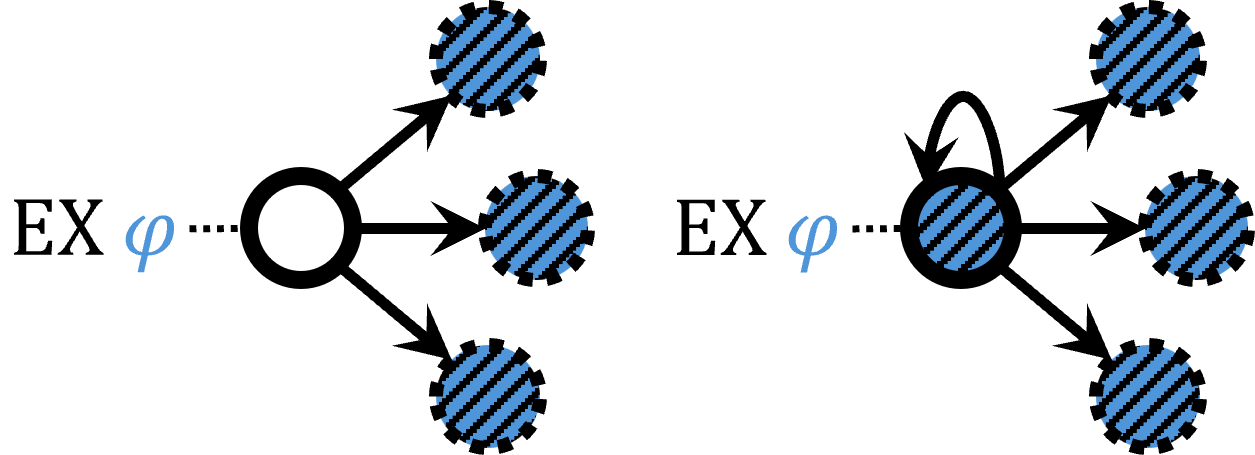}}%
%  \quad
\subcaptionbox{Minimal $\EU$-counterexample\label{fig:EU-minimal-counterexample}}[.33\textwidth]%
  {\includegraphics[scale=.4]{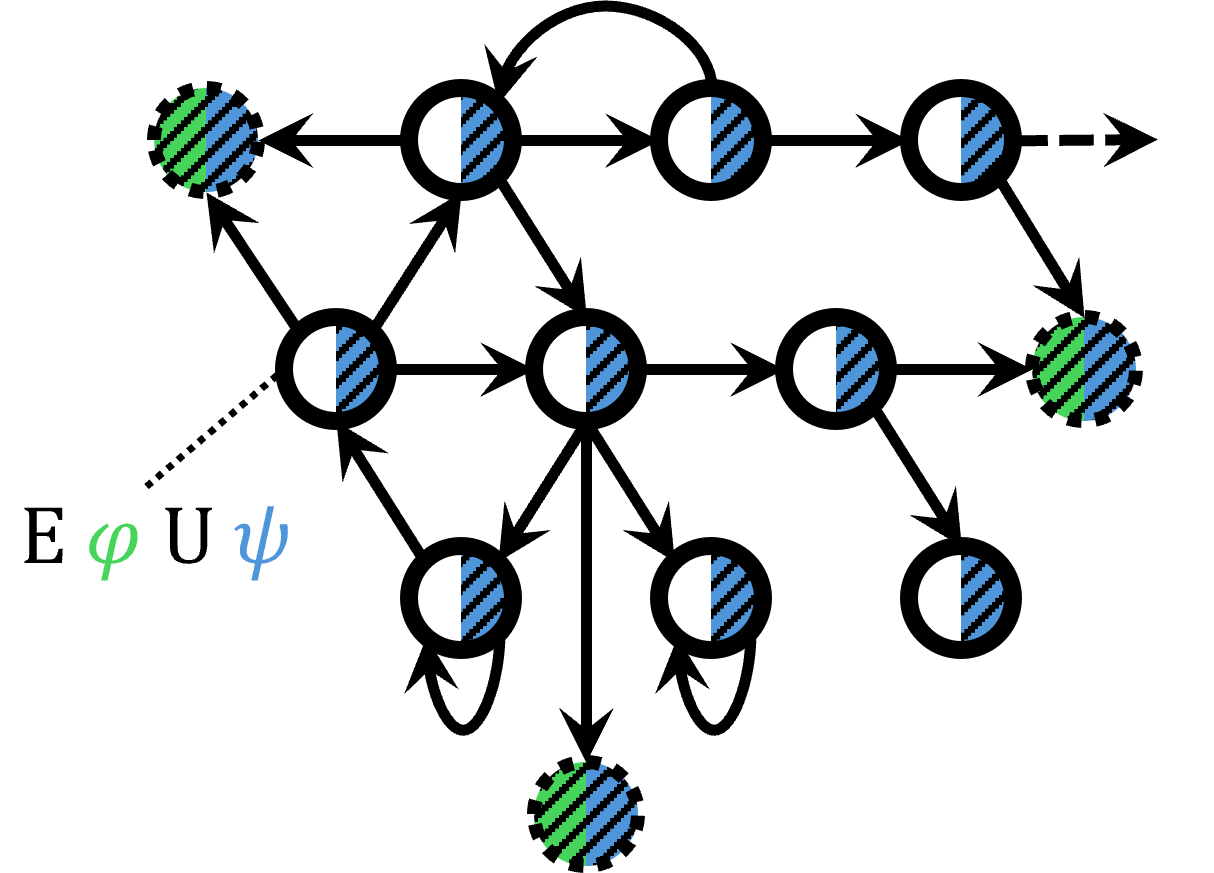}}%
%  \quad
\subcaptionbox{Minimal $\EG$-counterexample\label{fig:EG-minimal-counterexample}}[.33\textwidth]%
  {\includegraphics[scale=.4]{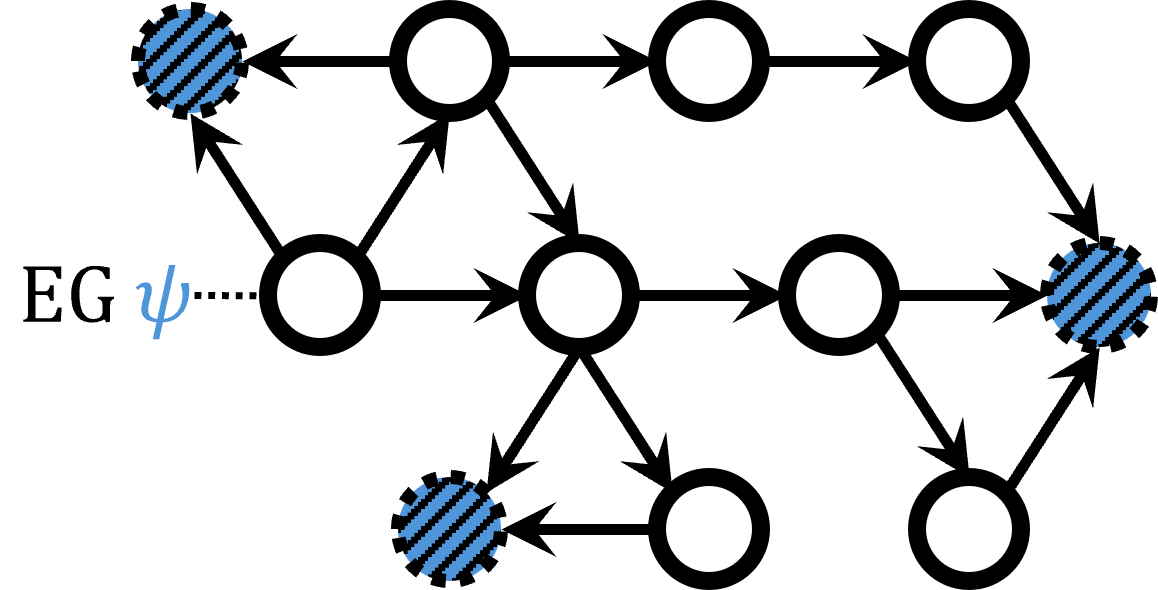}}%
\caption{Sample minimal evidence for temporal formulas. Dotted nodes are open; dashed hatch patterns stand for negated formulas; dashed arrows stand for infinite chains of states with the characteristics of their sources.}
\label{fig:core-minimal}
\end{figure}
The following result states that the conditions in \Cref{tab:min-evidence} indeed characterise $\submodel$-minima among the models that satisfy the conditions in \Cref{tab:evidence}.

\begin{restatable}[minimal syntactic evidence]{theorem}{thmminevidence}\label{thm:min-syntactic-evidence}
Let $F\in\CForm$, $M\in\Model(F)$, $s\in S$ and $\phi\in F_\Oper$.
\begin{enumerate}
\item $M$ is a $\submodel$-minimum of $\gensetof{N\in \Model(F)}{\witness_N(s,\phi)}$ iff $\minwitness_M(s,\phi)$ holds.
\item $M$ is a $\submodel$-minimum of $\gensetof{N\in \Model(F)}{\counter_N(s,\phi)}$ iff $\mincounter_M(s,\phi)$ holds.
\end{enumerate}
\end{restatable}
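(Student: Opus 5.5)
The proof goes operator by operator, and within each case in both directions. "$\submodel$-minimum" is read as "$\submodel$-minimal element", since e.g.\ two distinct acyclic paths give incomparable $\EU$-witnesses.

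For each $\phi$ we show two things:
\begin{enumerate*}[label=(\alph*)]
\item every $M$ with $\minwitness_M(s,\phi)$ (resp.\ $\mincounter_M(s,\phi)$) satisfies $\witness_M(s,\phi)$ (resp.\ $\counter_M(s,\phi)$), and no proper submodel of $M$ does;
\item every $N$ with $\witness_N(s,\phi)$ (resp.\ $\counter_N(s,\phi)$) has a submodel $M\submodel N$ with $\minwitness_M(s,\phi)$ (resp.\ $\mincounter_M(s,\phi)$).
\end{enumerate*}
Together, (a) and (b) give both directions: minimal elements are exactly the $\minwitness$/$\mincounter$ models. If $N$ is minimal, (b) yields $M\submodel N$ with the min-condition, $M$ satisfies the base condition by (a), and minimality of $N$ forces $M=N$. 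Conversely, (a) says directly that a min-model is minimal.

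The propositional cases $\True$, $\lneg$ and $\lor$ are immediate. The state $s$ must be present, and the table fixes exactly which labels are forced. No closed states or transitions are needed, since dropping them keeps the $\submodel$ side condition vacuous.

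For $\EX$-witnesses, take one successor $s'$ with $L(s',\psi)=\ttt$ and keep only $s$, $s'$, the edge $s\to s'$ and that label, all states open. Open states make the submodel condition trivially true.

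For $\EU$-witnesses, from an overlapping maximal path we take the finite prefix up to the first $\psi_2$-state. We then shortcut repetitions (if $\rho\proj i=\rho\proj j$ with $i<j$, drop the segment in between) to obtain an acyclic $\rho$, and keep $[\rho]$, ${<_\rho}$ and the required labels, with $C=\emptyset$. Minimality holds because removing any state, edge or label destroys the only witnessing path. Since $\rho$ is acyclic, every maximal path of the submodel from $s$ is $\rho$ itself, which ensures $\witness$ holds in the submodel.

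$\EG$-witnesses are handled similarly. For a finite witnessing path, we keep the last state closed and shortcut cycles. We must check that this last state still has all its $N$-successors, which is where $\max<_\rho$ being closed must be reconciled with $R_1\subseteq R_2$. For an infinite path, we either shortcut to a lasso (first repetition, so every proper prefix is acyclic) or keep an infinite path of distinct states.

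The counterexample cases are the main obstacle, because the conditions quantify over \emph{all} maximal paths, so closedness propagates. For $\EX$, $s$ must be closed; hence all its $N$-successors must be kept, each with label $\fff$ for $\psi$, and nothing else is needed.

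For $\EU$, we take $S$ to be the set of states reachable from $s$ through closed $\psi_2$-false states. States where some path first hits a $\psi_1\land\psi_2$-false state become open leaves with both labels. All other states stay closed, with $\psi_2$ false and (by the table) $\psi_1$ false. I expect the delicate point to be showing that the condition $C\supseteq S\setminus\max R$ together with labelling $\psi_1$ false everywhere is exactly what is forced. We must argue that removing a closed state's $\psi_1$-label or opening it breaks $\counter$; the latter holds because an open nonterminal state could then be extended by a supermodel path.

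For $\EG$, we use König's lemma: every maximal path from $s$ passes through closed states to a $\psi$-false state. Cutting each path at its first $\psi$-false state yields a tree-like acyclic subgraph. Finiteness follows because an infinite finitely-branching part would yield an infinite path of closed $\psi$-true states, contradicting $\counter$. This step needs finite branching, or else the argument must go through compactness on the bar of first $\psi$-false states. Minimality then follows since every leaf and every edge lies on some maximal path whose violation they certify.
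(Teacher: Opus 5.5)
Your direction (a) matches the paper's proof. The paper proves only that $\minwitness_M(s,\phi)$ (resp.\ $\mincounter_M(s,\phi)$) implies $\witness_M(s,\phi)$ (resp.\ $\counter_M(s,\phi)$) and that the condition fails for every immediate predecessor of $M$. It then extends this to all proper submodels using upward closure (\Cref{lem:witness-counter-upward-closed}); your ``removing any state, edge or label'' needs that lemma too, plus un-closing a state as a further kind of removal. Your direction (b) is the ``only if'' direction, which the paper never proves. It is a real addition and it is fine for all witness rows and for $\EX$. But it fails for both path-based counterexample rows, because these rows of \Cref{tab:min-evidence} are wrong as printed, so what you set out to show there is false.

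\textbf{The $\EU$ counterexample row.} You give the closed states the label $\psi_1\mapsto\fff$ ``by the table'', but $N$ need not contain that label. $\counter_N$ only forces the states before the exit into $C\proj{\psi_2,\fff}$. Your case split already turns every state of $S\proj{\psi_1,\fff}\cap S\proj{\psi_2,\fff}$ into an open leaf, so the states you keep closed are exactly those \emph{not} labelled $\psi_1\mapsto\fff$ in $N$. Hence your $M$ is not a submodel of $N$.

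The printed row has $\psi_1$ and $\psi_2$ swapped, and two examples show it is false as stated:
\begin{itemize}
\item $\tupof{\setof{s,t},\setof s,\setof{(s,t)},\setof{(s,\psi_1)\mapsto\fff,(t,\psi_1)\mapsto\fff,(t,\psi_2)\mapsto\fff}}$ meets the row but violates $\counter$: the path $s\,t$ starts in a closed state lacking $\psi_2\mapsto\fff$.
\item Let $s$ be closed, labelled $\psi_1\mapsto\ttt$ and $\psi_2\mapsto\fff$, with a single successor labelled $\fff$ for both. This model satisfies $\counter$, yet no submodel meets the row, since every submodel would need $(s,\psi_1)\mapsto\fff$.
\end{itemize}
Read the row as $\psi_2\mapsto\fff$ on all of $S$ and $\psi_1\mapsto\fff$ on $S\setminus C$; this is what the discussion of \Cref{fig:E-win-minimal} describes. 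Under that reading your construction works, with closed states carrying only $\psi_2\mapsto\fff$. That each expanded state lies in $C_N\proj{\psi_2,\fff}$ follows by extending the path to it to a maximal path of $N$. Also, in (a), un-closing a state breaks $\counter$ for a syntactic reason: the path reaching it now meets an open state outside $S\proj{\psi_1,\fff}$. It has nothing to do with supermodels, and this reason also covers terminal closed states.

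\textbf{The $\EG$ counterexample row.} ``$S$ is finite'' cannot be obtained without finite branching, and no compactness argument on the bar can supply it. Take $s$ closed with infinitely many successors, each labelled $\psi\mapsto\fff$. This model satisfies $\counter$ and is its own unique minimal submodel doing so, yet it is infinite. The paper's setting allows this, as the arbitrary index set $I$ in the $\EX$ row shows. You must either assume finite branching, in which case your K\"onig argument is fine, or replace finiteness by ``every $R$-path from $s$ is finite''. The latter is also what (a) actually uses, since acyclicity alone does not make maximal paths finite in an infinite graph.
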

For unconstrained formulas, the \emph{semantic} counterpart of \Cref{thm:min-syntactic-evidence} holds.

\begin{corollary}[minimal semantic evidence]\label{co:minimality}
Let $F\in\CForm$, $M\in\Model(F)$, $s\in S$ and $\phi\in F_\Oper$ such that $\child(\phi)$ is unconstrained and $\dom L\subseteq S\times \child(\phi)$.
\begin{enumerate}
\item $M$ is a $\submodel$-minimal witness for $(s,\phi)$ iff $\minwitness_M(s,\phi)$ holds.
\item $M$ is a $\submodel$-minimal counterexample for $(s,\phi)$ iff $\mincounter_M(s,\phi)$ holds.
\end{enumerate}
\end{corollary}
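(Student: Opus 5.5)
The plan is to show that, under the hypotheses of the corollary, the set of witnesses for $(s,\phi)$ restricted to $\dom L\subseteq S\times\child(\phi)$ coincides with $W=\gensetof{N}{\witness_N(s,\phi)}$ restricted in the same way. The minimal elements of the two sets then coincide, and \Cref{thm:min-syntactic-evidence} finishes the argument. The counterexample case is symmetric: we replace $\witness$ by $\counter$ and $\minwitness$ by $\mincounter$. We present only the witness part.

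First, inclusion of witnesses in $W$ follows directly from \Cref{thm:evidence-only-if}. That theorem requires $\child(\phi)$ to be unconstrained and $\dom L\subseteq S\times\child(\phi)$, and every model considered here has these properties. Conversely, let $N$ satisfy $\witness_N(s,\phi)$ with $\dom L_N\subseteq S\times\child(\phi)$. Since $\child(\phi)$ is unconstrained, $N$ is consistent, so \Cref{thm:evidence-if} applies and $N$ is a witness. The minimal models $M$ given by $\minwitness_M(s,\phi)$ label only children of $\phi$, as can be read off \Cref{tab:min-evidence}, so they lie in this restricted class.

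Next we transfer minimality. Suppose $M$ is a $\submodel$-minimal witness. Note that $\submodel$-submodels of a model whose labelling lives on $\child(\phi)$ again have their labelling on $\child(\phi)$. Hence, by the set equality above, $M$ is minimal in $W$. By \Cref{thm:min-syntactic-evidence}, however, $W$ is characterised through its \emph{minimum} elements. The main obstacle is to show that a minimal element of $W$ actually satisfies $\minwitness$.

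To handle this, we argue that every $N\in W$ contains, as a submodel, some $M'$ with $\minwitness_{M'}(s,\phi)$. We extract it from the overlap demanded in \Cref{tab:evidence}:
\begin{itemize}
\item For $\EU$, take the witnessing path, cut it at the first $\psi_2$-state and shortcut any cycles. This yields an acyclic path whose states are all open.
\item For $\EG$, take the witnessing path and cut it at its first repetition, closing only its last state if necessary.
\end{itemize}
Such an $M'$ belongs to $W$ and is $\submodel N$, so it must equal $N$ whenever $N$ is minimal. Consequently minimal elements of $W$ are exactly the models satisfying $\minwitness$, and by \Cref{thm:min-syntactic-evidence} these are precisely the minimum-type elements. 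Conversely, if $\minwitness_M(s,\phi)$ holds, then \Cref{thm:min-syntactic-evidence} makes $M$ a minimum of $W$, so in particular it is minimal in $W$. Hence $M$ is a minimal witness.
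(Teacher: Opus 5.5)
Your set-equality step (unconstrainedness of $\child(\phi)$ gives consistency, then \Cref{thm:evidence-if} and \Cref{thm:evidence-only-if}) and your transfer of minimality are exactly how the paper obtains this corollary. The paper gives no separate proof: it reads the result off from \Cref{thm:min-syntactic-evidence}, where ``minimum'' can only mean ``minimal element''. Two $\psi$-successors of $s$ already give incomparable minimal $\EX$-witnesses, so no least element exists. On that reading your extraction detour is unnecessary, but it does real work. The appendix proof of \Cref{thm:min-syntactic-evidence} only shows that $\minwitness$ implies minimality, and your path-cutting argument correctly supplies the converse for witnesses (the cases $\True$, $\lneg$, $\lor$, $\EX$ are immediate).

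The gap is the sentence ``the counterexample case is symmetric''. For the extraction step it is not. A submodel may not drop any outgoing transition of a state it keeps closed, so you cannot select one path. You must keep the whole region reachable from $s$ through closed states, up to the first cut-off states: the $\psi$-false ones for $\EG\psi$, and the states that are both $\psi_1$- and $\psi_2$-false for $\EUof{\psi_1}{\psi_2}$. This region includes all outgoing transitions of the closed states, and the cut-off states are left open. For $\EG$ the region is acyclic and has no infinite paths, but it need not be finite, whereas $\mincounter$ requires $S$ to be finite.

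Concretely, let $\phi=\EG p$ with $p\in\Prop$, so $\child(\phi)=\setof p$ is unconstrained. Let $M$ have $s$ closed, with transitions to infinitely many open states $t_i$, each labelled $(t_i,p)\mapsto\fff$. Then $\counter_M(s,\phi)$ holds, and every proper submodel violates it:
\begin{itemize}
\item removing $s$ from $C$, or removing any label, breaks some maximal path $s\,t_i$;
\item no transition out of the closed state $s$ may be removed.
\end{itemize}
So $M$ is a minimal counterexample that fails $\mincounter$.

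Hence the ``only if'' half of part~2 for $\EG$ holds only for finitely branching models, where K\"onig's lemma yields finiteness, or if ``$S$ finite'' is weakened to ``no infinite $R$-paths''. The same defect sits in part~2 of \Cref{thm:min-syntactic-evidence}, so falling back on that theorem does not close the gap either.
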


\section{Proofs and their Visualisation}
\label{sec:proofs}

The purpose of evidence visualisation is to aid the comprehension of model checking outcomes. In terms of this paper, this translates to ``understanding why a given model is sound'' --- which in turn is a matter of providing evidence for all its compound assertions. We formalise this in the notion of \emph{proof}.

\begin{definition}[proof]\label{def:proof}
Let $F\in \CForm$. A \emph{proof} over $F$ is a tuple $P=\tupof{M,\pi}$ such that $M\in\Model(F)$ and $\pi\of \hat L_\Oper\mapsto \gensetof{N\in \Model(F)}{N\submodel M}$ such that, for all $a\in \hat L_\Oper$, $\pi(a)$ is evidence for $a$.
\end{definition}
The following proposition makes clear that proofs are a tool to establish consistency. Recalling that sound models are just consistent full ones, as a special case we get that proofs are a tool to establish soundness.

\begin{restatable}{proposition}{propproof}\label{prop:proof}
Let $F\in \CForm$. A model $M\in\Model(F)$ is consistent if and only if $M\submodel N$ for some proof $\tupof{N,\pi}$ over $F$.
\end{restatable}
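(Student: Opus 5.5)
We prove the two directions separately, using \Cref{prop:evidence} for the "only if" direction and an induction over subformulas for the "if" direction.

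\emph{Only if.} Assume $M$ is consistent. By definition it has a sound supermodel $N\supmodel M$. We turn $N$ into a proof by setting, for every $a\in\hat L_{N,\Oper}$,
\[
\pi(a)=\tupof{S_N,C_N,R_N,L_N\proj{S_N\times\child(\phi_a)}} .
\]
By \Cref{prop:evidence} each $\pi(a)$ is evidence for $a$. It is also clear that $\pi(a)\submodel N$. States, closed states and transitions are the same, and the labelling is a restriction of $L_N$. Since $R$ is unchanged, the side condition on closed states holds trivially. Hence $\tupof{N,\pi}$ is a proof over $F$ with $M\submodel N$.

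\emph{If.} Let $\tupof{N,\pi}$ be a proof with $M\submodel N$. Submodels of consistent models are consistent, since $\submodel$ is transitive, so it suffices to show that $N$ is consistent. Define $N'$ by keeping $S_N$, taking $C=S_N$, and letting $R$ consist of $R_N$ plus, for each open state, its transitions in some fixed underlying structure. Then let the labelling be the propositional part of $L_N$, extended to all propositions and composite formulas according to the semantics $\sat^{N'}$.

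\emph{Main obstacle.} The difficulty is choosing these completions so that $N'$ is a supermodel of $N$. The propositional part is harmless, since propositional assertions are axiomatic and any total extension is fine. The closed/open completion is harder: an open state of $N$ may have to acquire new successors, and these must not destroy the truth values already asserted in $L_N$.

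\emph{Key argument.} We show $L_N\subseteq L_{N'}$ by induction on formula depth. Take $a\in\hat L_{N,\Oper}$. The evidence $\pi(a)$ satisfies $\pi(a)\submodel N$, and we must show $\pi(a)\submodel N'$. Every sound supermodel of $\pi(a)$ contains $a$. If $N'$ is sound and extends $\pi(a)$, then $a\in\hat L_{N'}$, so the assertion is reproduced.

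\emph{Why this is not circular.} The evidence condition speaks only about sound supermodels, and at first sight it is unclear whether any sound supermodel exists. The induction resolves this. Let $N_k$ denote the sound completion restricted to formulas of depth $\le k$. The children labels in $\pi(a)$ have smaller depth, so by the induction hypothesis they agree with the labels in the sound completion. Therefore the fully sound $N'$ extends $\pi(a)$, and evidence then forces $a$ to have the correct value.

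The delicate point is constructing the transition completion for open states. The first attempt will be simply to keep $R_N$ and make all states closed. This changes $C$ but not $R$, and it is compatible with $N\submodel N'$, since no transitions of $N$ are removed.
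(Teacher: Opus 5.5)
Your proof is correct and follows essentially the paper's route. The ``only if'' direction is identical: restrict the labelling of a sound supermodel to $\child(\phi_a)$ and invoke \Cref{prop:evidence}. For ``if'' you use the same depth induction, exploiting that $\pi(a)$ only labels children of $\phi_a$. The only difference is that you fix one explicit sound completion $N'$ (all states closed, $R_{N'}=R_N$, propositions completed arbitrarily, composite formulas labelled by $\sat^{N'}$), whereas the paper takes an arbitrary sound supermodel of the depth-truncated model at each stage.

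The ``main obstacle'' you raise is not a real one. Evidence quantifies over \emph{all} sound supermodels, and this all-closed completion with $R_{N'}=R_N$ is a supermodel of every $\pi(a)$ as soon as the lower-depth labels agree (the transition side condition is inherited from $\pi(a)\submodel N$). You should therefore state it as the construction outright, and drop both the vague ``some fixed underlying structure'' and the ``first attempt'' hedging.
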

\begin{note}[complexity]\label{note:complexity}
Although in this paper we are not concerned with questions of complexity, it is worth pointing out that proofs can be constructed, at no extra cost, as a byproduct of (explicit-state) \CTL model checking. Very roughly, to understand why this is the case, recall that checking a compound formula $\psi=(o,\bar\phi)$ is a recursive procedure, in which first all $\bar\phi\proj i$ are checked, followed by a state space traversal depending on the operator $o$, in which all states are marked either $\ttt$ or $\fff$ for $\psi$. Generating a proof (in the sense of \cref{def:proof}) is a matter of recording the reason why a $\ttt$- or $\fff$-mark is given.

For instance, to check $\psi=\EUof{\phi_1}{\phi_2}$, the standard algorithm immediately marks all states that satisfy $\phi_2$ with $\ttt$, after which (recursively) the incoming transitions of all freshly $\ttt$-marked states are explored backwards, and their sources are also marked $\ttt$ if they additionally satisfy $\phi_1$. After this stage has terminated, all remaining states are marked $\fff$. The proof generation starts with all states of the system, none of which are closed, and no transitions. For the positive (witness) part of the proof, all transitions that give rise to a $\ttt$-mark of their source should be added. Witnesses are minimal if the backward exploration is done breadth-first. For the negative (counterexample) part of the proof, every $\fff$-marked state that satisfies $\phi_1$ should be closed and all its outgoing transitions should be added. The labelling of the proof is obvious.\qed
\end{note}
Minimal evidence for any individual assertion can be visualised in a straightforward manner as a transition system, using visual cues for open states and three-valued colour coding for the labelling (e.g., green for $\ttt$, red for $\fff$ and grey for undefinedness). The visualisation of a complete proof is obviously more complicated. One solution is to do this interactively, by showing the evidence upon selection of a state/formula pair.

Our visualisation of model checking outcomes, evidence and proof is based on the abstract syntax tree (AST) of the formulas, shown in every state. Every AST node is coloured according to the satisfaction of the subtree depending on it in the state concerned. Hence, for instance, the formula $\EUof{\,\lneg\done\,}{\win}$ (expressing that the game can be won without ever throwing~1), when checked on this system, gives rise to the full model shown in \Cref{fig:E-win-complete} --- which shows that the formula is actually only satisfied in state $s_{12}$. (All visualisation snapshots in this section have been produced using the tool demonstrator \cite{CTLviz-artefact}.)

\begin{figure}[t]\centering
\subcaptionbox{Model checking outcome\label{fig:E-win-complete}}[.49\textwidth]%
  {\includegraphics[scale=.2]{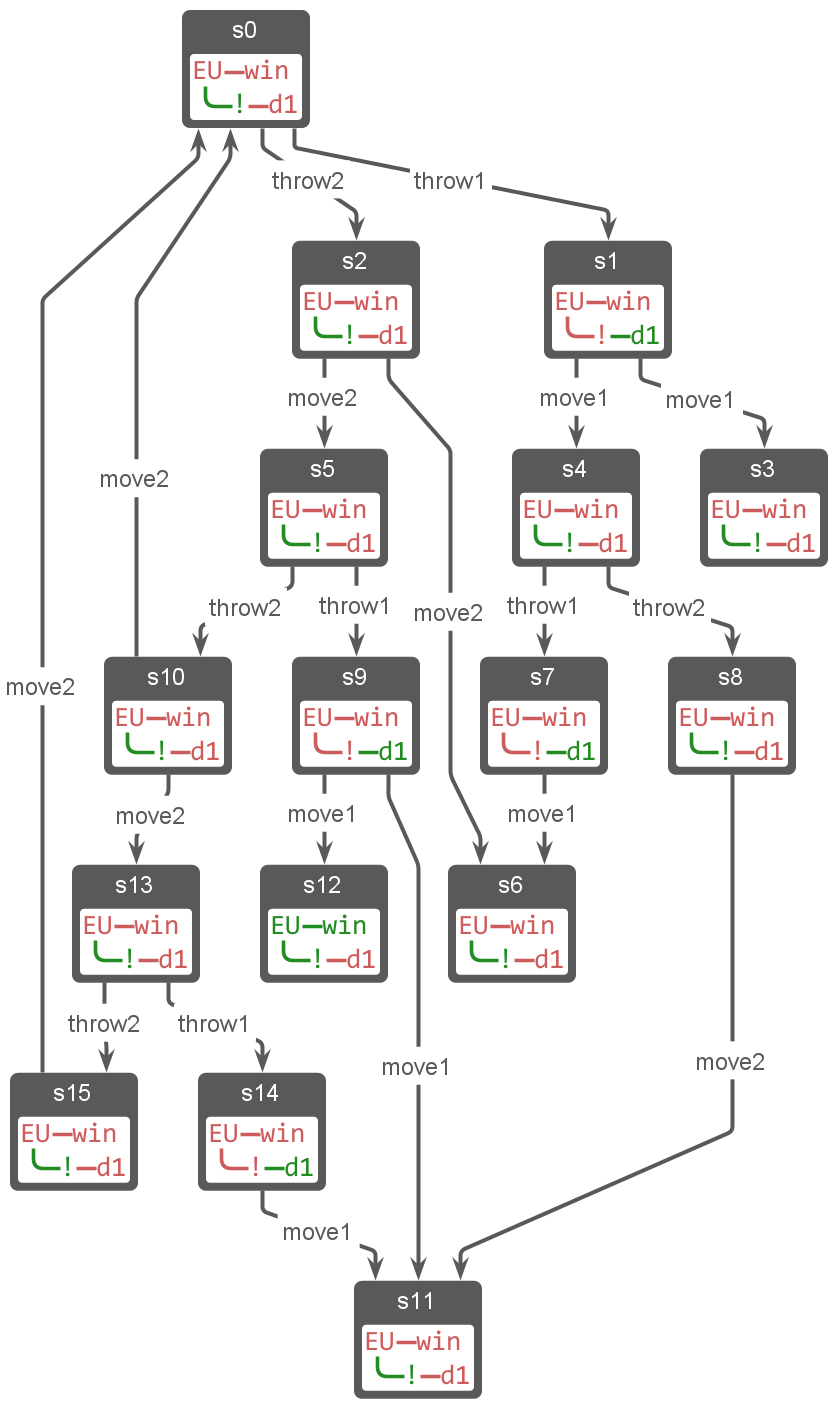}}
\subcaptionbox{Minimal evidence. Dotted states are open; states and formula nodes outside the witness are greyed out\label{fig:E-win-minimal}}[.49\textwidth]%
  {\includegraphics[scale=.2]{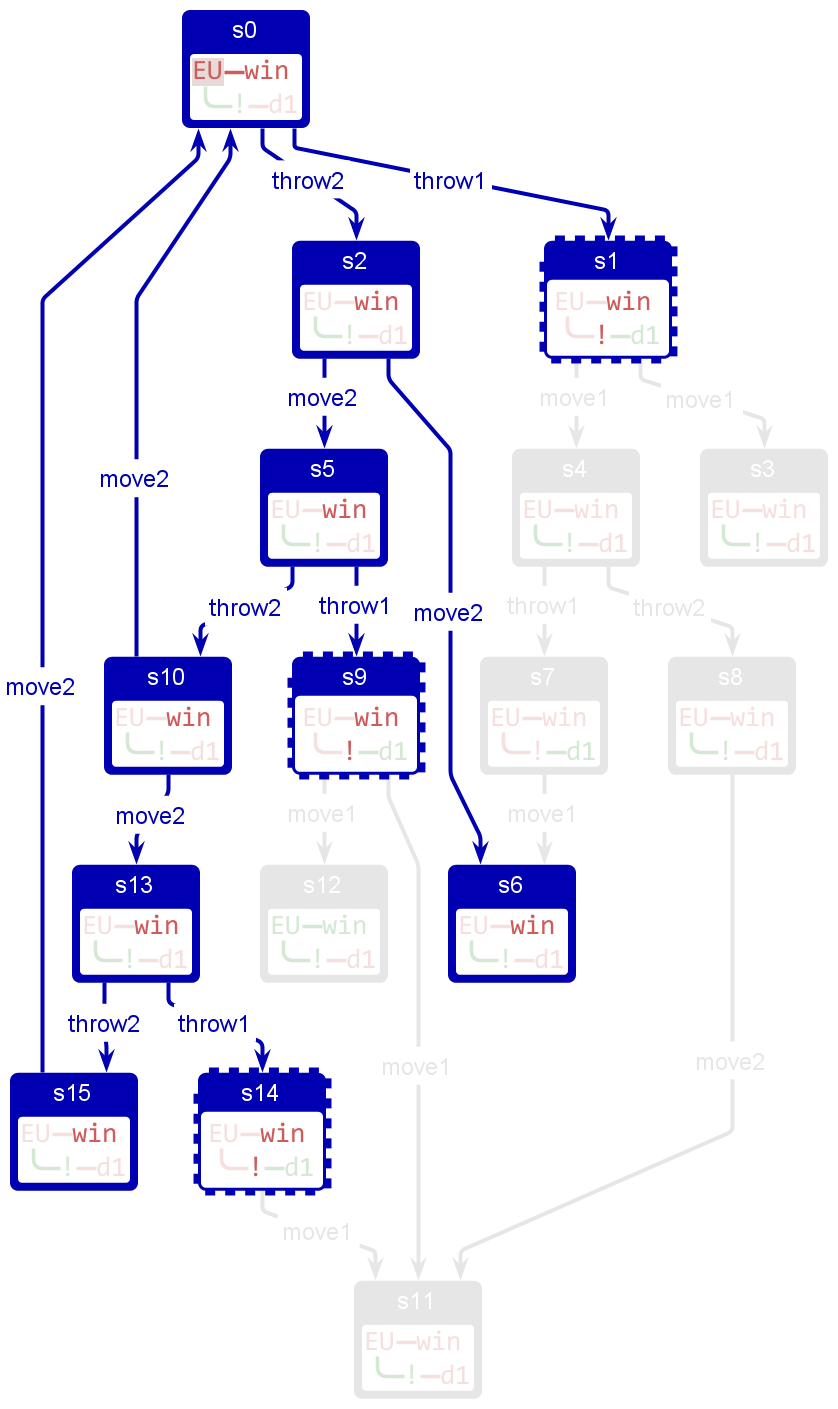}}
\caption{Model checking result and minimal witness for $\EUof{\,\lneg\done\,}{\win}$. The ASTs are displayed with roots top-left; children are ordered from bottom to top}
\label{fig:E-win}
\end{figure}
\Cref{fig:E-win-minimal} shows the minimal evidence on the initial state. Since the formula is not satisfied, the evidence is a counterexample, which is an instance of the template shown in \Cref{fig:EU-minimal-counterexample}. Note that, indeed, the blue part contains all the information required to conclude that $\EUof{\,\lneg\done\,}{\win}$ is not satisfied on $s_0$: in particular, $\win$ is not satisfied in any of the states shown, and the maximal paths either contain closed states only (either ending in the terminal state $s_6$ or cycling through $s_{10}$ and possibly $s_{15}$), or end in an open state where $\lneg\done$ is not satisfied ($s_1$, $s_9$ and $s_{14}$).

From this first visualisation of evidence, recalling that the main purpose is human comprehension, there are two immediate improvements that can be made. Both raise the amount of relevant information in the figure.
\paragraph{Local closure.}

The satisfaction of local (i.e., non-temporal) operators only depends on the satisfaction of their children in the same state. It can therefore not give rise to confusion in include the evidence for those children in the same picture. The result is shown in \Cref{fig:E-win-minimal-local}. (For this example, the difference is small: the only local operator is the negation.)

\paragraph{Natural evidence.}

\newcommand{\greenphi}{\textcolor{greenformula}{\phi}}%
\newcommand{\bluepsi}{\textcolor{blueformula}{\psi}}%
Minimal witnesses for $\EU$ and minimal counterexamples for either $\EU$ and $\EG$ do not have full information about the satisfaction of all children for any of the non-terminal states (see \Cref{fig:core-minimal}: all non-terminal states are blank for one of the children). Though, as we have seen (\Cref{co:minimality}), such information is indeed formally superfluous, we feel that omitting it actually hampers comprehensibility. For instance, consider the minimal witness shown in \Cref{fig:EU-minimal-witness}. If the path of that witness is $\tupof{s_0\,s_2\,s_3\,s_4}$ then it is a submodel of all of the following models, hence evidence that $\EUof\greenphi\bluepsi$ holds for $s_0$:
\[ \includegraphics[scale=.4]{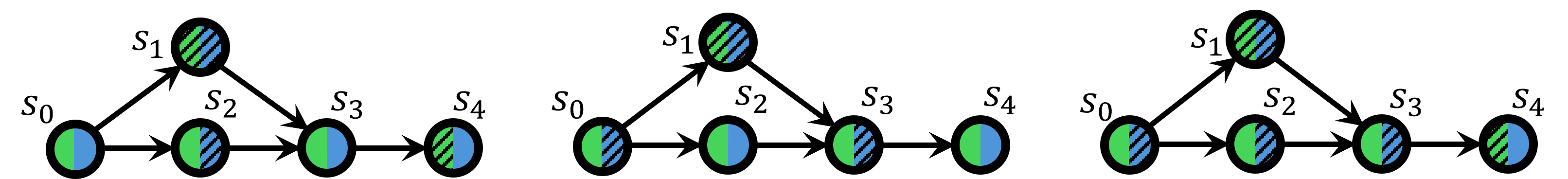} \]
In the first two of these models, this particular minimal witness is \emph{not} the most obvious explanation of $\EUof\greenphi\bluepsi$: both have ``shorter'' witnesses, in terms of state count, consisting of just state $s_0$ with $L=\setof{(s_0,\bluepsi)\mapsto\ttt}$, respectively $\tupof{s_0\,s_2}$ with $L=\setof{(s_0,\greenphi)\mapsto\ttt, (s_2,\bluepsi)\mapsto\ttt}$. To capture the distinction between the non-obvious, overlong witnesses and the more intuitive, shorter ones, we introduce the concept of \emph{natural} evidence.

\begin{definition}[natural evidence]
Evidence $M$ for an assertion $a\in S\times F_{\EU,\EG}\times\Bool$ is \emph{natural} if $L(s,\psi)$ is defined for all $s\notin \max_R S$ and $\psi\in \child(\phi_a)$.
\end{definition}
Hence $\EU$- and $\EG$-evidence is natural when its labelling includes the satisfaction of all children in all non-terminal states. We believe that minimal \emph{natural} evidence is more informative than minimal evidence. Rather than Figs.\ \ref{fig:EU-minimal-witness}, \ref{fig:EU-minimal-counterexample} and~\ref{fig:EG-minimal-counterexample}, typical examples of minimal natural evidence are
\begin{center}
\begin{tabular}{c@{{\enspace}}c@{{\enspace}}c}
\includegraphics[scale=.4]{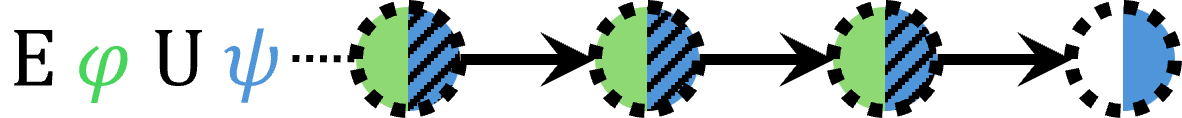}
& \includegraphics[scale=.4]{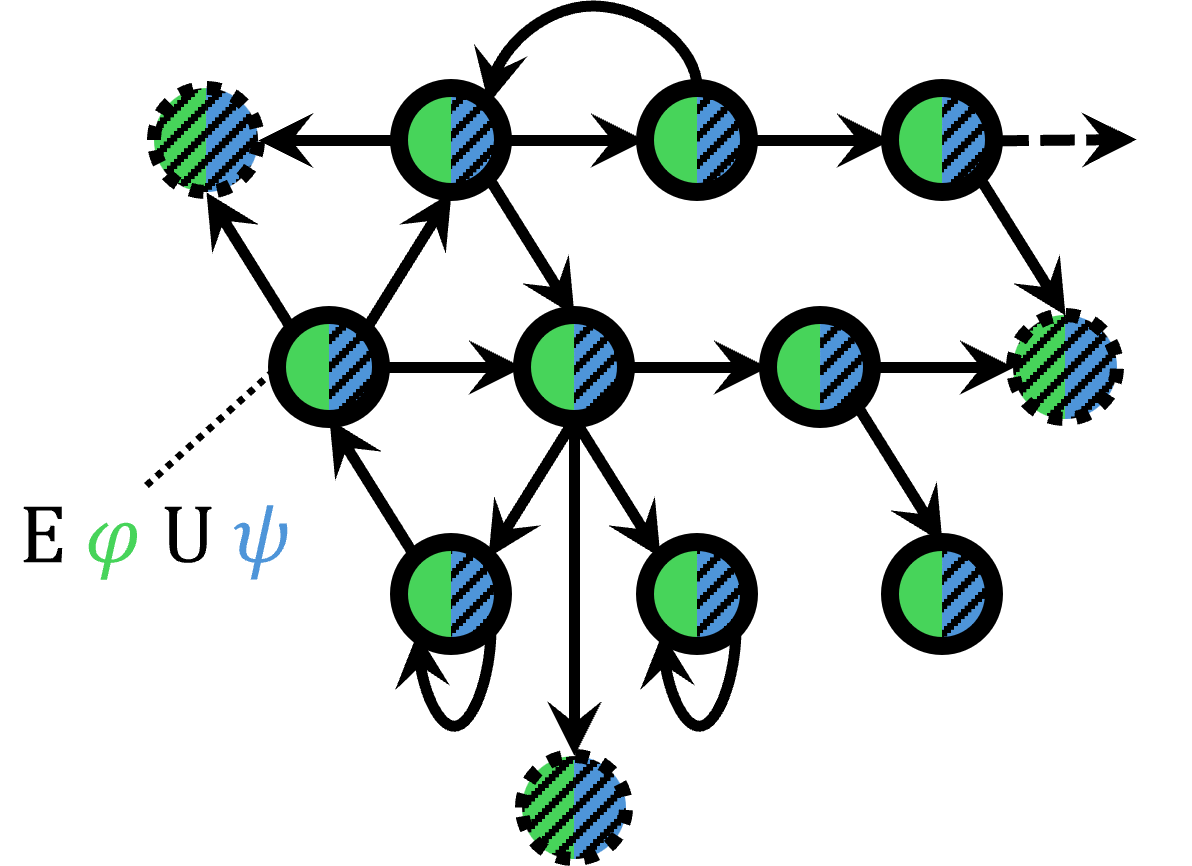}
& \includegraphics[scale=.4]{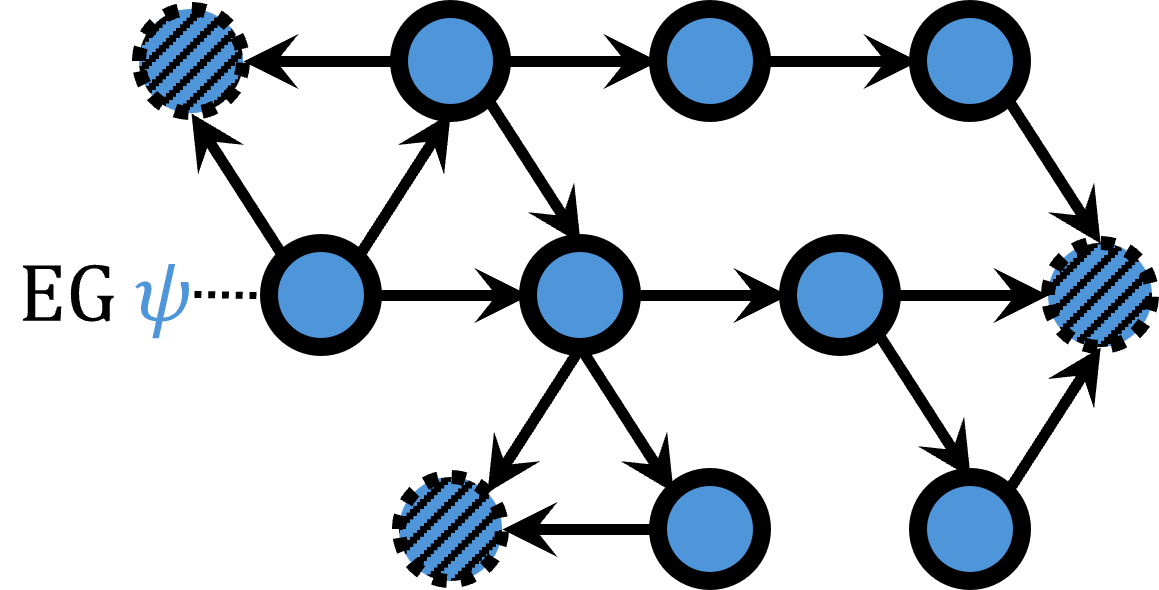}
\\
\tiny Natural $\EU$-witness
& \tiny Natural $\EU$-counterexample
& \tiny Natural $\EG$-counterexample
\end{tabular}
\end{center}
For the example formula of \Cref{fig:E-win}, the minimal natural counterexample is shown in \Cref{fig:E-win-minimal-natural-local}. Here it is clear that, in all intermediate states, $\lneg\done$ holds, and hence any counterexample must include all states shown here.

\begin{figure}[t]\centering
\subcaptionbox{Locally closed minimal evidence\label{fig:E-win-minimal-local}}[.5\textwidth]%
  {\includegraphics[scale=.2,trim={0cm 6cm 5cm 0},clip]{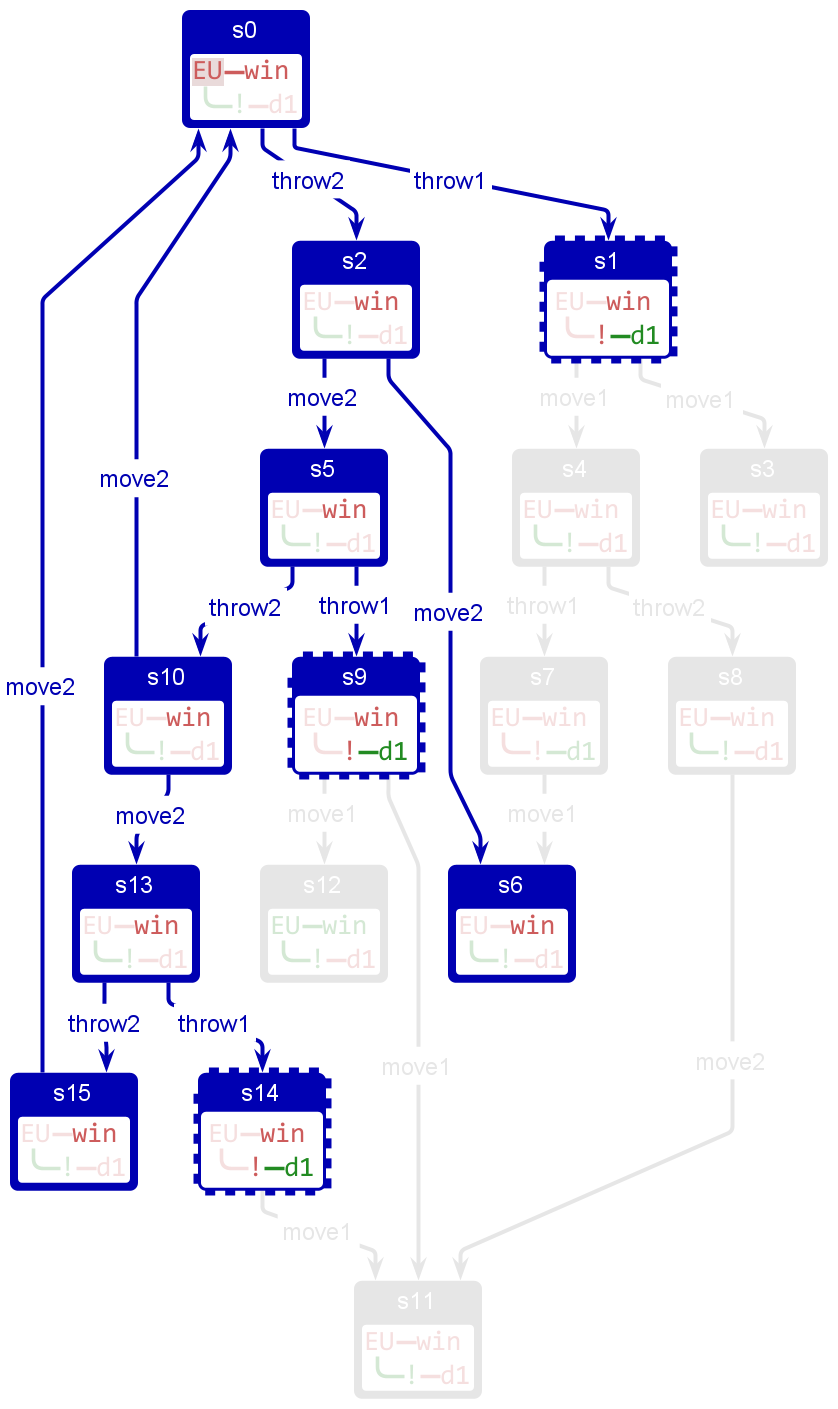}}%
\subcaptionbox{Locally closed minimal natural evidence\label{fig:E-win-minimal-natural-local}}[.5\textwidth]%
  {\includegraphics[scale=.2,trim={0cm 6cm 5cm 0},clip]{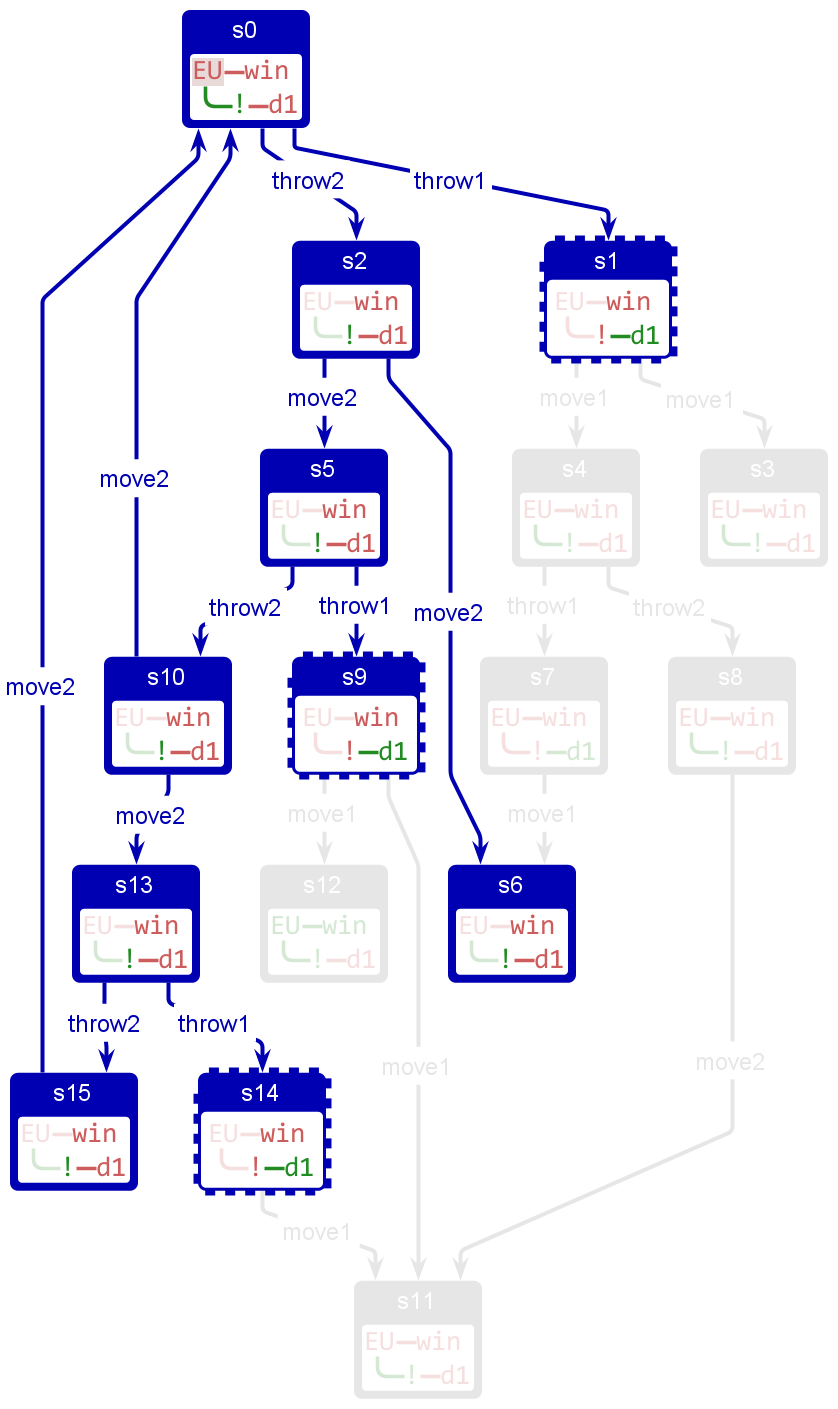}}
\caption{Locally closed and natural evidence for $\EUof{\,\lneg\done\,}\win$ (compare \cref{fig:E-win-minimal}).}
\end{figure}

\section{Combined Evidence}
\label{sec:combined}

The visualisation of a \CTL proof typically involves many individual snapshots, each one showing the evidence for a given pair of state and temporal subformula. Even if these can be called up by interactively selecting state/formula instances, it is still hard to obtain a comprehensive understanding.

Combining different evidence models into a single static representation, on the other hand, generally runs into the problem that (in particular for path-based temporal formulas) each individual witness and counterexample is a model of potentially unbounded size, which for different states will generally overlap. The union of two evidence models may very well fail to be evidence, hence its visualisation provides no useful information.

Nevertheless, somewhat surprisingly, a specific form of evidence combination is actually possible: namely, for a given model and a single, fixed formula.

\begin{definition}[combined evidence]\label{def:combined}
Let $F\in\CForm$ and $\phi\in F$, and let $K\of S\times \setof \phi\to\Bool$ be a total labelling. A model $M\in\Model(F)$ is \emph{combined [natural] evidence for $\phi$} if it is [natural] evidence for all $a\in \hat K$.
\end{definition}
In the next proposition, we use $M\proj s$ (for $s\in S$) to denote the restriction of $M$ to the states reachable from $s$.

\begin{restatable}[combined evidence always exists]{proposition}{propcombinedevidenceexists}\label{prop:combined}
Let $F\in\CForm$ and let $M\in\Model(F)$ be sound. For every $\phi\in F_\Oper$ there exists combined evidence $E_\phi\submodel M$. If, moreover,  $\phi\in \setof{\EU,\EG}\times \Form\finseq$, then there exists combined [natural] evidence $E_\phi\submodel M$ such that for all $s\in S$, $E_\phi\proj s$ is minimal [natural] evidence.
\end{restatable}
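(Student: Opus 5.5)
The first claim (plain combined evidence for an arbitrary $\phi\in F_\Oper$) follows directly from \Cref{prop:evidence}. Take $E_\phi=\tupof{S,C,R,L\proj{S\times\child(\phi)}}$. Since $M$ is sound, $L$ is total, so the labelling $K$ of \Cref{def:combined} is forced to be $K=L\proj{S\times\setof\phi}$. \Cref{prop:evidence} shows that $E_\phi$ is evidence for every $a\in\hat K$. Clearly $E_\phi\submodel M$: no transitions are dropped, so the closedness proviso of \Cref{def:submodel} holds trivially. Consistency of $E_\phi$ is witnessed by $M$ itself.

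For $\phi=\EUof{\psi_1}{\psi_2}$ and $\phi=\EG\psi$, I will build $E_\phi$ along the lines of \Cref{note:complexity}. The idea is to split $S$ into $S\proj{\phi,\ttt}$ and $S\proj{\phi,\fff}$ and choose, uniformly for each part, a transition structure whose restriction to each reachable set is an instance of $\minwitness$ or $\mincounter$ from \Cref{tab:min-evidence}. Minimality of each $E_\phi\proj s$ then follows from \Cref{co:minimality}, or rather from \Cref{thm:min-syntactic-evidence} together with \Cref{thm:evidence-if}. Note that \Cref{co:minimality} needs unconstrainedness only for the ``only if'' direction. The ``if'' direction we need is that $\minwitness$ implies $\witness$, which makes $E_\phi\proj s$ evidence by \Cref{thm:evidence-if}; minimality among syntactic evidence comes from \Cref{thm:min-syntactic-evidence}. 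If the intended notion is minimality among all evidence, I would instead argue directly that removing any component breaks the evidence property, using soundness of $M$.

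For the $\EU$ witness part, I will pick for each $\ttt$-state a successor along a shortest $\psi_1$-path to a $\psi_2$-state, as in breadth-first backward search. States satisfying $\psi_2$ get no outgoing transition and are labelled $\psi_2\mapsto\ttt$. Every other $\ttt$-state gets one transition to a successor of strictly smaller distance and is labelled $\psi_1\mapsto\ttt$. All these states are open. Because distances strictly decrease, the path from each $s$ is acyclic and finite, which is exactly $\minwitness$.

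For the $\EU$ counterexample part, $\fff$-states with $L(s,\psi_1)=\ttt$ are closed, receive all their $M$-successors (which are again $\fff$-states), and are labelled $\psi_2\mapsto\fff$ together with $\psi_1$. Here the $\psi_1$-label is needed only for naturality. $\fff$-states with $\psi_1$ false are open terminal states labelled $\psi_1\mapsto\fff$ (and $\psi_2\mapsto\fff$). This matches $\mincounter$: the closed states are the non-maximal ones, and the terminal ones are open. I need to check the side condition ``$C\supseteq S\setminus\max R$'' and the labelling pattern. The natural variant only adds labels at non-terminal states.

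For $\EG$ witnesses, I choose for each $\ttt$-state one successor that is again a $\ttt$-state, or none if the state is a $\psi$-state that is terminal in $M$ (closed). The states reachable from $s$ then form a single path, possibly infinite or lasso-shaped, whose maximal state, if any, is closed, matching $\minwitness$. The ``every proper prefix acyclic'' condition holds because following a functional successor choice revisits a state only when the lasso closes.

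The main obstacle is the $\EG$ counterexample. $\mincounter$ requires $S$ to be finite and $R$ acyclic, but $M$ may be infinite, or cyclic among $\psi$-states that nevertheless fail $\EG\psi$. I would use the rank of the least fixpoint: the $\fff$-states of $\EG\psi$ are the least fixpoint of $\lneg\psi\lor\AX(\cdot)$, so each such state has an ordinal rank. States of rank 0 ($\psi$ false) become open terminal states. A higher-rank state becomes closed and gets all its $M$-successors, each of which has strictly smaller rank. This gives acyclicity. Finiteness of the reachable part requires finitely branching $M$ (then König's lemma gives finite ranks and a finite reachable set). If the paper's framework allows infinite branching, I would have to invoke finite branching there or weaken ``finite'', and I would flag this as the delicate point.

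Finally, I will check that the union of these per-state choices still restricts correctly under $\proj s$, since reachable sets from $\ttt$-states never enter $\fff$-states and vice versa, and that $E_\phi\submodel M$ via closure of exactly those states that receive all their outgoing transitions.
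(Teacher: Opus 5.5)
The paper never actually proves this proposition; the appendix skips it. The only indication of the intended argument is \cref{note:complexity}. Your construction is exactly that recipe, extended to $\EG$: breadth-first successor choice for $\EU$-witnesses, and closing every $\fff$-state that satisfies $\psi_1$ together with all its outgoing transitions. The core is right. The $\ttt$- and $\fff$-regions are closed under the chosen transitions. Only states that keep all their $M$-transitions are closed, so the submodel proviso holds. Since $E_\phi\proj s\submodel E_\phi$, evidence for the restriction is evidence for $E_\phi$. A few points need repair.

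\emph{Minimality.} Your remark that \Cref{co:minimality} needs unconstrainedness only for its ``only if'' direction is wrong. To show that no $N\propersubmodel E_\phi\proj s$ is evidence, the ``if'' direction also needs \Cref{thm:evidence-only-if}, hence unconstrainedness.

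Your fallback (arguing directly from soundness of $M$) cannot work. Evidence quantifies over all sound supermodels of the smaller model, and $M$ is only one of them. Concretely, for $\phi=\EG\,\False$ your construction yields $E_\phi\proj s=\tupof{\setof s,\emptyset,\emptyset,\setof{(s,\False)\mapsto\fff}}$. But $\tupof{\setof s,\emptyset,\emptyset,\emptyset}$ is a strictly smaller counterexample. So either read ``minimal'' as ``satisfies $\minwitness$/$\mincounter$'' (\Cref{thm:min-syntactic-evidence}), or add the hypothesis that $\child(\phi)$ is unconstrained.

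For the natural variants neither table applies, so you need a short direct argument. What the shortest-distance choice buys there is not acyclicity, since any acyclic path already gives a minimal witness. It buys that every interior state fails $\psi_2$, which is what excludes a shorter natural witness. Also, do not put the $\psi_1$-label on closed \emph{deadlock} states of the $\EU$-counterexample. They lie in $\max_R S$, so naturality does not require the label, and keeping it breaks minimality.

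\emph{Details.}
\begin{itemize}
\item The $\EU$ row of $\mincounter$ in \Cref{tab:min-evidence} has $\psi_1$ and $\psi_2$ swapped relative to $\counter$. Read literally, it demands $\psi_1\mapsto\fff$ on closed $\psi_1$-states, which is impossible inside $M$. State that you match the corrected row ($\psi_2\mapsto\fff$ everywhere, $\psi_1\mapsto\fff$ on open states), which is what your construction produces.
\item Your fixpoint for the $\fff$-states of $\EG\psi$ is off. Under the paper's maximal-path semantics a $\psi$-deadlock satisfies $\EG\psi$, yet $\lneg\psi\lor\AX(\cdot)$ admits it, since $\AX$ is vacuous there. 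Use $\lneg\psi\lor(\EX\True\land\AX(\cdot))$ instead.
\item The infinite-branching issue you flag is a real counterexample to the statement as read via \Cref{tab:min-evidence}. Let a $\psi$-state $s$ have successors $t_1,t_2,\ldots$, where $t_n$ starts a path of $n$ $\psi$-states followed by a $\lneg\psi$-state, and nothing else. Then every $N\submodel M$ with $\counter_N(s,\EG\psi)$ must close $s$ and all these $\psi$-states, so none has finite $S$. Either assume finite branching, in which case your K\"onig argument finishes the proof. Or replace ``$S$ finite'' by ``no infinite $R$-paths'', which is what your rank construction actually delivers.
\end{itemize}
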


\begin{figure}[t]
\subcaptionbox{Evidence for the top-level $\EG$, selected for $s_0$}[.5\textwidth]%
  {\includegraphics[width=.5\textwidth]{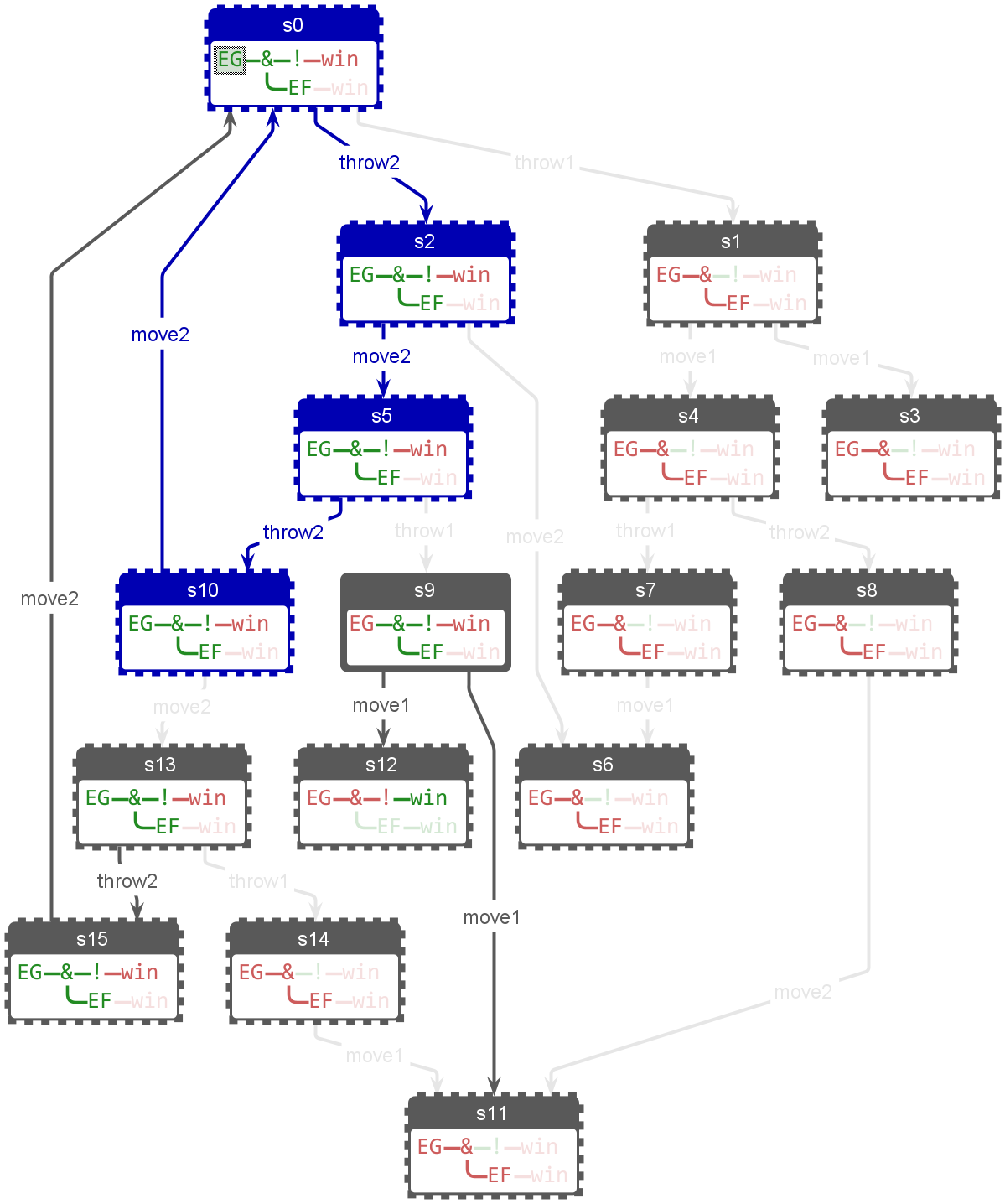}}%
\subcaptionbox{Evidence for the nested $\EF$, selected for $s_{5}$}[.5\textwidth]%
  {\includegraphics[width=.5\textwidth]{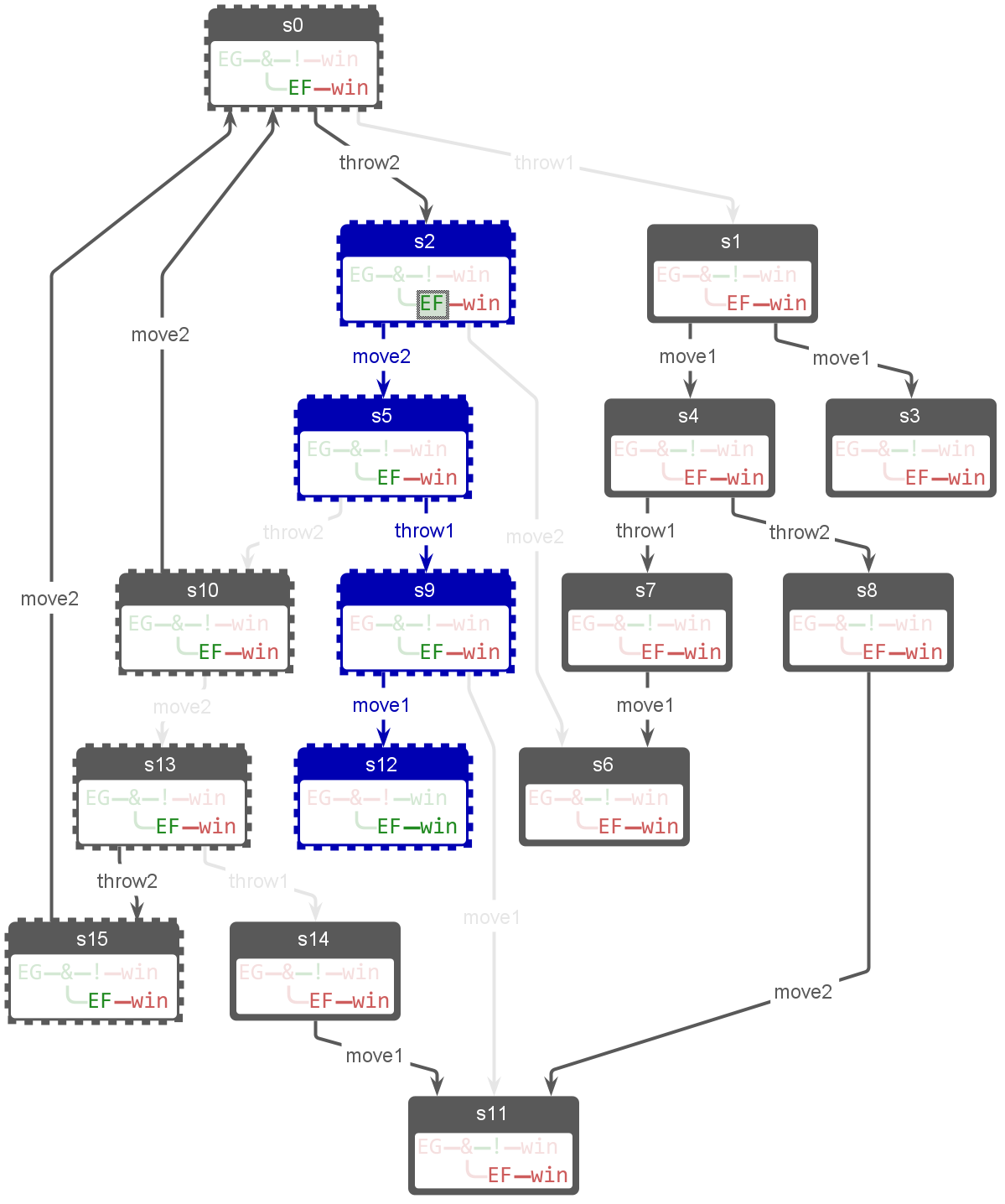}}
\caption{Combined natural evidence for the temporal operators of $\EG (\lneg\win\land \EF \win)$.}
\label{fig:combined}
\end{figure}

\begin{example}\label{ex:combined}
Consider once more $\phi=\EG (\lneg\win\land \EF \win)$ (cf.\ \Cref{ex:game}). \Cref{fig:combined} shows combined locally closed, natural evidence for the two temporal operators of $\phi$ based on the model of \Cref{fig:state-space}. The fragments reachable from the selected states $s_0$ resp.\ $s_5$ have been coloured blue; however, these two models together actually explain the entire model checking outcome for all states.\qed
\end{example}
\section{Related work}
\label{sec:related}

There is substantial prior work on the notion of evidence, some of which is actually in the context of (much) more powerful logics, such as \CTL enriched with description logic \cite{ALCCTL}, the modal $\mu$-calculus \cite{Efficient-Evidence-Generation} or even more general fixpoint logics \cite{Evidence-for-Fixpoint-Logic}. Comprehensive overviews of diagnostics for model checking can be found in \cite{Rich-Explanations,Review-on-Counterexample-Explanation,Counterexamples-Survey}.

To start with, \emph{linear} evidence (in our setting: evidence built on $\tupof\rho$ for some $\rho\in\State\finfseq$) has been studied in \cite{Counterexamples-and-Witnesses}, where it was shown how such evidence can be generated efficiently (using symbolic model checking) for fragments of \CTL and also \CTLstar for which this type of evidence actually exists. In the same vein, \cite{ACTL-Linear-Counterexamples} studies the complexity of determining whether linear evidence (in this case always a counterexample) exists for a given \ACTL formula (the universal fragment of \CTL). Linear evidence was extended in \cite{Tree-Like-Counterexamples} to \emph{tree-like} evidence for \ACTL and certain $\omega$-regular extensions. In \cite{Witness-Generation-Existential-CTL} it is shown how to efficiently generate \emph{minimal} evidence, in this case for the dual existential fragment $\ECTL$. (It should be pointed out that, in all the cited papers, the terms ``evidence'', ``witness'' and ``counterexample'' do not refer to a single operator, as in this paper, but to the entire nested structure that we have called ``proof''. Moreover, \emph{minimality} is defined only relative to the set of states, and not to the labelling.)

In contrast, in this paper we have not concerned ourselves with the complexity of \emph{generating} evidence (except for the brief excursion in \cref{note:complexity}), but with its (as precise as possible) \emph{characterisation}. The essential technical novelty are the closed states, which enable us to generalise to full CTL. For the fragments studied in the papers above, our evidence essentially reduces to their linear and tree-like counterpart, in which all states are open.

One further point of difference is that \cite{Tree-Like-Counterexamples} is based on \emph{simulation} of evidence by the Kripke structures to which it applies, where we consider \emph{supermodels}. In principle, simulation is the more general concept, as it abstracts from state identities. Our motivation to concentrate on sub- and supermodels, rather than going for a simulation-like relation, lies in the desire to obtain a human-comprehensible explanation for model checking outcomes: the fact that the evidence is displayed on the same states as the model that has been checked, rather than having to be understood through the indirection of a simulation relation, is actually an advantage in this context. In any case, the standard notion of simulation does not work as required for our concept of evidence, since it does not take closed states into account. For instance, consider once more the counterexample (say $M_1$) of \cref{fig:E-win-minimal} (ignoring the greyed-out part). Now imagine a model $M_2$ that extends $M_1$ with a single transition from $s_0$ to a new state $s_{15}$ in which \win is satisfied. We would have that $M_2$ simulates $M_1$, while $M_2$ does satisfy the property $\EUof{\,\lneg\done\,}\win$ to which $M_1$ is supposed to be a counterexample. We briefly return to this point in the future work discussion below.

In \cite{ALCCTL}, the logic is extended by replacing atomic predicates by description logic statements; however, the notion of evidence is essentially still restricted to tree-like structures. In particular, witnesses for properties of the form $\AF \phi$ are (essentially) just given as $\ttt$ or $\fff$, providing no useful structural information.

A very powerful notion of evidence, based on proof graphs for Parameterised Boolean Equation Systems, is studied in \cite{Evidence-for-Fixpoint-Logic,Evidence-Extraction-from-PBES,Efficient-Evidence-Generation} for logics that go far beyond \CTL in expressive power, up to the \emph{fixpoint logic} in \cite{Evidence-for-Fixpoint-Logic} that generalises the modal $\mu$-calculus. They show that such proof graphs can be generated effectively and, in some cases, efficiently. However, with great power comes greater complexity; the visualisation of general proof graphs has not been addressed systematically. With respect to proof graphs, our closed states can be thought of as concisely encoding a lot of \emph{negative} information --- namely, about the absence of transitions; as argued before, this is a contributing factor to the relative ease of representing evidence for path-existential operators. For instance, once more taking \cref{fig:E-win-minimal} as an example, an equivalent proof graph would have to contain a node for each of the closed states to explicitly represent the \emph{absence} of a transition from that state to the greyed-out state $s_{12}$, as adding such a transition would result in a model satisfying $\EUof{\,\lneg\done\,}\win$.

The work in \cite{Framework-for-Counterexample-Generation} is motivationally quite close to ours, driven by similar considerations of visualisation and comprehensibility of \CTL model checking outcomes. The visualisation framework presented there is in some ways more powerful than ours, since it involves interaction during the generation of evidence. The nature and visualisation of their counterexamples is close to the proof graphs cited in the previous paragraph; there are no considerations of minimality. Finally, \cite{Counterexamples-for-Classification} also pursues comprehensibility of model checking evidence, in this case for the modal $\mu$-calculus, but only for trace-like counterexamples.

\section{Conclusion and future work}
\label{sec:conclusion}

We have answered the questions posed in the introduction as follows.

\begin{enumerate}[label=\emph{(\roman*)},topsep=\smallskipamount]
\item \emph{What is the simplest notion of evidence that precisely captures the reason for satisfaction or violation of an arbitrary \CTL formula?}

We have given a formal characterisation of \emph{minimal evidence} for individual operators (\Cref{co:minimality}), leading to proofs for entire formulas (\Cref{def:proof}) that map every state/subformula pair to corresponding evidence. The main new concept here is that of \emph{closed states}, concisely capturing the negative information about the absence of further outgoing transitions.

\item \emph{How can we effectively visualise this notion of evidence?}

We have implemented an interactive visualisation demonstrator \cite{CTLviz-artefact}, available as software artefact accompanying this paper, which shows the evidence upon selecting any state/subformula pair (\cref{sec:proofs}). To make the visualisation more effective, we have limited the number of individual snapshots required to collectively represent a proof through \emph{local closure}, in which the evidence of non-temporal subformulas is added to that of temporal operators without risk of confusion or loss of information. Furthermore, we have introduced the concept of \emph{natural} evidence to increase comprehensibility and to rule out cases that, though minimal in a formal sense, are not intuitively obvious. Finally, we have shown that it is actually possible to combine the (natural) evidence for a given subformula over \emph{all} states in a single model (\Cref{sec:combined}), such that the minimal evidence for a given state emerges as the reachable fragment of that combined evidence. Thus, in the end, the number of evidence models that collectively represent a complete proof is reduced to one per temporal operator, rather than one per state/subformula pair.
\end{enumerate}
We see two concrete avenues for future research.

\paragraph{More expressive logics.}It is worth investigating to what extent the concepts of this paper (especially the particular notion of evidence) remain useful in more expressive logics. The primary candidate is $\CTLstar$. Obviously, the difficulty lies in the fact that the assertions are then not based on state/formula pairs, but on path/formula pairs. However, since all such formulas are still bound by a universal or existential path quantifier, it might be possible to combine the evidence for paths from a given state and so retain some of the advantages of the current setup.

\paragraph{Simulation in the presence of closed states.} In \Cref{sec:related} we have recalled that \cite{Tree-Like-Counterexamples} uses simulation to relate evidence (in that case tree-like counterexamples) and the Kripke structures to which that evidence applies: if a Kripke structure simulates a counterexample, the property is violated in the Kripke structure. For us, instead, evidence applies to any \emph{supermodel}. We have sketched why simulation not work ``out of the box'' in the setting of this paper: it allows the (simulating) model to have transitions from states that are closed in the (simulated) evidence. We hypothesise that our results can be generalised to a simulation-like relation by requiring \emph{bisimulation} on closed states.

\subsection*{Data Availability} The demonstrator tool for evidence visualisation accompanying this paper, and used for the screenshots of \cref{fig:state-space,fig:core-minimal,fig:combined}, is available as software artefact, and has received the ``Availability'' and ``Reusability'' badges. The artefact is archived and can be found at \cite{CTLviz-artefact}.

\bibliographystyle{splncs04}
\bibliography{refs}

@article{Counterexamples-Survey,
  author       = {Hichem Debbi},
  title        = {Counterexamples in Model Checking - {A} survey},
  journal      = {Informatica (Slovenia)},
  volume       = {42},
  number       = {2},
  year         = {2018},
  url          = {http://www.informatica.si/index.php/informatica/article/view/1442},
  bibsource    = {dblp computer science bibliography, https://dblp.org}
}

@inproceedings{Witness-Generation-Existential-CTL,
  author       = {Chuan Jiang and
                  Gianfranco Ciardo},
  editor       = {Dirk Beyer and
                  Marieke Huisman},
  title        = {Generation of Minimum Tree-Like Witnesses for Existential {CTL}},
  booktitle    = {Tools and Algorithms for the Construction and Analysis of Systems (TACAS), Part {I}},
  series       = {Lecture Notes in Computer Science},
  volume       = {10805},
  pages        = {328--343},
  publisher    = {Springer},
  year         = {2018},
  doi          = {10.1007/978-3-319-89960-2\_18},
  bibsource    = {dblp computer science bibliography, https://dblp.org}
}

@inproceedings{Tree-Like-Counterexamples,
  author       = {Edmund M. Clarke and
                  Somesh Jha and
                  Yuan Lu and
                  Helmut Veith},
  title        = {Tree-Like Counterexamples in Model Checking},
  booktitle    = {17th {IEEE} Symposium on Logic in Computer Science (LICS)},
  pages        = {19--29},
  publisher    = {{IEEE} Computer Society},
  year         = {2002},
  doi          = {10.1109/LICS.2002.1029814},
  bibsource    = {dblp computer science bibliography, https://dblp.org}
}

@inproceedings{Counterexamples-for-Classification,
  author       = {Fabio Martinelli and
                  Francesco Mercaldo and
                  Antonella Santone},
  title        = {On the Adoption of Counterexample for Classification Task Explainability},
  booktitle    = {32nd International Conference on Enabling Technologies: Infrastructure
                  for Collaborative Enterprises (WETICE)},
  pages        = {140--145},
  publisher    = {{IEEE}},
  year         = {2024},
  doi          = {10.1109/WETICE64632.2024.00033},
  bibsource    = {dblp computer science bibliography, https://dblp.org}
}

@misc{Counterexample-driven-MC,
  title={Counterexample-driven model checking},
  author={Shankar, Natarajan and Sorea, Maria},
  howpublished={SRI International},
  url={http://www.csl.sri.com/people/shankar/wmc.pdf},
  year={2003}
}

@inproceedings{Counterexamples-and-Witnesses,
  author       = {Edmund M. Clarke and
                  Orna Grumberg and
                  Kenneth L. McMillan and
                  Xudong Zhao},
  editor       = {Bryan Preas},
  title        = {Efficient Generation of Counterexamples and Witnesses in Symbolic
                  Model Checking},
  booktitle    = {Proceedings of the 32st Conference on Design Automation},
  pages        = {427--432},
  publisher    = {{ACM} Press},
  year         = {1995},
  doi          = {10.1145/217474.217565},
  bibsource    = {dblp computer science bibliography, https://dblp.org}
}

@inproceedings{CTL1,
  author       = {E. Allen Emerson and
                  Edmund M. Clarke},
  editor       = {J. W. de Bakker and
                  Jan van Leeuwen},
  title        = {Characterizing Correctness Properties of Parallel Programs Using Fixpoints},
  booktitle    = {Automata, Languages and Programming (ICALP)},
  series       = {Lecture Notes in Computer Science},
  volume       = {85},
  pages        = {169--181},
  publisher    = {Springer},
  year         = {1980},
  doi          = {10.1007/3-540-10003-2\_69},
  bibsource    = {dblp computer science bibliography, https://dblp.org}
}

@inproceedings{LTL,
  author       = {Amir Pnueli},
  title        = {The Temporal Logic of Programs},
  booktitle    = {18th Annual Symposium on Foundations of Computer Science (FoCS)},
  pages        = {46--57},
  publisher    = {{IEEE} Computer Society},
  year         = {1977},
  doi          = {10.1109/SFCS.1977.32},
  bibsource    = {dblp computer science bibliography, https://dblp.org}
}

@inproceedings{CTL2,
  author       = {Edmund M. Clarke and
                  E. Allen Emerson},
  editor       = {Dexter Kozen},
  title        = {Design and Synthesis of Synchronization Skeletons Using Branching-Time
                  Temporal Logic},
  booktitle    = {Logics of Programs},
  series       = {Lecture Notes in Computer Science},
  volume       = {131},
  pages        = {52--71},
  publisher    = {Springer},
  year         = {1981},
  doi          = {10.1007/BFB0025774},
  bibsource    = {dblp computer science bibliography, https://dblp.org}
}

@inproceedings{CTL-vs-LTL,
  author       = {Moshe Y. Vardi},
  editor       = {Tiziana Margaria and
                  Wang Yi},
  title        = {Branching vs. Linear Time: Final Showdown},
  booktitle    = {Tools and Algorithms for the Construction and Analysis of Systems (TACAS)},
  series       = {Lecture Notes in Computer Science},
  volume       = {2031},
  pages        = {1--22},
  publisher    = {Springer},
  year         = {2001},
  doi          = {10.1007/3-540-45319-9\_1},
  bibsource    = {dblp computer science bibliography, https://dblp.org}
}

@article{CTL*,
  author       = {E. Allen Emerson and
                  Joseph Y. Halpern},
  title        = {{``Sometimes'' and ``Not Never'' revisited: on branching versus linear
                  time temporal logic}},
  journal      = {J. {ACM}},
  volume       = {33},
  number       = {1},
  pages        = {151--178},
  year         = {1986},
  doi          = {10.1145/4904.4999},
  bibsource    = {dblp computer science bibliography, https://dblp.org}
}

@inproceedings{mu1,
  author       = {Dexter Kozen},
  editor       = {Mogens Nielsen and
                  Erik Meineche Schmidt},
  title        = {Results on the Propositional {\(\mathrm{\mu}\)}-Calculus},
  booktitle    = {Automata, Languages and Programming (ICALP)},
  series       = {Lecture Notes in Computer Science},
  volume       = {140},
  pages        = {348--359},
  publisher    = {Springer},
  year         = {1982},
  doi          = {10.1007/BFB0012782},
  bibsource    = {dblp computer science bibliography, https://dblp.org}
}

@article{mu2,
  author       = {Dexter Kozen},
  title        = {Results on the Propositional mu-Calculus},
  journal      = {Theor. Comput. Sci.},
  volume       = {27},
  pages        = {333--354},
  year         = {1983},
  doi          = {10.1016/0304-3975(82)90125-6},
  bibsource    = {dblp computer science bibliography, https://dblp.org}
}

@inproceedings{ALCCTL,
  author       = {Franz Weitl and
                  Shin Nakajima and
                  Burkhard Freitag},
  editor       = {Jos{\'{e}} Luiz Fiadeiro and
                  Stefania Gnesi and
                  Andrea Maggiolo{-}Schettini},
  title        = {Structured Counterexamples for the Temporal Description Logic {ALCCTL}},
  booktitle    = {8th {IEEE} International Conference on Software Engineering and Formal
                  Methods (SEFM)},
  pages        = {232--243},
  publisher    = {{IEEE} Computer Society},
  year         = {2010},
  doi          = {10.1109/SEFM.2010.36},
  bibsource    = {dblp computer science bibliography, https://dblp.org}
}

@article{ACTL-Linear-Counterexamples,
  author       = {Francesco Buccafurri and
                  Thomas Eiter and
                  Georg Gottlob and
                  Nicola Leone},
  title        = {On {ACTL} Formulas Having Linear Counterexamples},
  journal      = {J. Comput. Syst. Sci.},
  volume       = {62},
  number       = {3},
  pages        = {463--515},
  year         = {2001},
  doi          = {10.1006/JCSS.2000.1734},
  bibsource    = {dblp computer science bibliography, https://dblp.org}
}

@article{Framework-for-Counterexample-Generation,
  author       = {Marsha Chechik and
                  Arie Gurfinkel},
  title        = {A framework for counterexample generation and exploration},
  journal      = {Int. J. Softw. Tools Technol. Transf.},
  volume       = {9},
  number       = {5-6},
  pages        = {429--445},
  year         = {2007},
  doi          = {10.1007/S10009-007-0047-9},
  bibsource    = {dblp computer science bibliography, https://dblp.org}
}

@article{Review-on-Counterexample-Explanation,
  author       = {Arut Prakash Kaleeswaran and
                  Arne Nordmann and
                  Thomas Vogel and
                  Lars Grunske},
  title        = {A systematic literature review on counterexample explanation},
  journal      = {Inf. Softw. Technol.},
  volume       = {145},
  year         = {2022},
  doi          = {10.1016/J.INFSOF.2021.106800},
  bibsource    = {dblp computer science bibliography, https://dblp.org}
}

@inproceedings{Efficient-Evidence-Generation,
  author       = {Anna Stramaglia and
                  Jeroen J. A. Keiren and
                  Maurice Laveaux and
                  Tim A. C. Willemse},
  editor       = {Arie Gurfinkel and
                  Marijn Heule},
  title        = {Efficient Evidence Generation for Modal {\(\mu\)}-Calculus Model Checking},
  booktitle    = {Tools and Algorithms for the Construction and Analysis of Systems
                  (TACAS), Part {I}},
  series       = {Lecture Notes in Computer Science},
  volume       = {15696},
  pages        = {191--210},
  publisher    = {Springer},
  year         = {2025},
  doi          = {10.1007/978-3-031-90643-5\_10},
  bibsource    = {dblp computer science bibliography, https://dblp.org}
}

@phdthesis{Rich-Explanations,
  title={Symbolic Model Checking of Multi-Modal logics: Uniform Strategies and Rich Explanations},
  author={Busard, Simon},
  year={2017},
  school={Catholic University of Louvain},
  url          = {https://hdl.handle.net/2078.1/186372},
  bibsource    = {dblp computer science bibliography, https://dblp.org}
}

@inproceedings{Evidence-for-Fixpoint-Logic,
  author       = {Sjoerd Cranen and
                  Bas Luttik and
                  Tim A. C. Willemse},
  editor       = {Stephan Kreutzer},
  title        = {Evidence for Fixpoint Logic},
  booktitle    = {24th {EACSL} Annual Conference on Computer Science Logic (CSL)},
  series       = {LIPIcs},
  volume       = {41},
  pages        = {78--93},
  publisher    = {Schloss Dagstuhl - Leibniz-Zentrum f{\"{u}}r Informatik},
  year         = {2015},
  doi          = {10.4230/LIPICS.CSL.2015.78},
  bibsource    = {dblp computer science bibliography, https://dblp.org}
}

@inproceedings{Evidence-Extraction-from-PBES,
  author       = {Wieger Wesselink and
                  Tim A. C. Willemse},
  editor       = {Christoph Benzm{\"{u}}ller and
                  Jens Otten},
  title        = {Evidence Extraction from Parameterised Boolean Equation Systems},
  booktitle    = {Proceedings of the 3rd International Workshop on Automated Reasoning
                  in Quantified Non-Classical Logics (ARQNL)},
  series       = {{CEUR} Workshop Proceedings},
  volume       = {2095},
  pages        = {86--100},
  publisher    = {CEUR-WS.org},
  year         = {2018},
  url          = {https://ceur-ws.org/Vol-2095/paper6.pdf},
  bibsource    = {dblp computer science bibliography, https://dblp.org}
}

@Misc{CTLviz-artefact,
  author="Arend Rensink",
  year=2026,
  title="{CTLViz}: Visualising {CTL} Witnesses and Counterexamples",
  howpublished="Software artefact; published at Zenodo",
  publisher    = {Zenodo},
  version      = {1.2.0},
  doi          = {10.5281/zenodo.19169335},
}

@Misc{CTLviz-extended,
  author="Arend Rensink",
  year=2026,
  title="Visualising {CTL} Witnesses and Counterexamples --- Extended Version",
  howpublished="Published at arXiv",
  doi          = { },
}

@InProceedings{CTLviz-spin,
  author="Arend Rensink",
  year=2026,
  title="Visualising {CTL} Witnesses and Counterexamples",
  publisher    = {Springer},
  Series="Lecture Notes in Computer Science",
  booktitle="32nd International Symposium on Model Checking Software (SPIN)",
  editor="Vincenzo Ciancia and Arnd Hartmanns",
}

@article{Kripke,
	author = {Saul Kripke},
	journal = {Acta Philosophica Fennica},
	pages = {83--94},
	title = {Semantical Considerations on Modal Logic},
	volume = {16},
	year = {1963}
}

\begin{small}
\noindent
\textbf{Disclosure of Interests.} The
author has no competing interests to declare that are relevant to the content of this article.
\end{small}

\ifextended
\clearpage
\appendix
\crefalias{section}{appendix}
\section{Proofs of the Main Results}
\label{sec:appendix}

This appendix contains proofs of the main results in the paper. Some of the proofs use induction over the \emph{depth} of a formula, given in the form of a function $\depth\colon \Form\to \Nat$ defined by
\[ \depth:\phi \mapsto
\begin{cases}
1 & \text{if } \child(\phi)=\emptyset \\
1 + \max_{\psi\in \child(\phi)} \depth(\psi) & \text{otherwise.}
\end{cases}
\]
The depth of a set $F\subseteq \Form$ is defined by $\depth(F)=\max_{\phi\in F} \depth(\phi)$.

\propevidence*
\begin{proof}
Let $M_a=\tupof{S,C,R,L\proj{S\times \child(\phi_a)}}$. Clearly $M_a$ is consistent ($M$ is a sound supermodel), and its domain is correct by construction. Let $N\supmodel M_a$ be sound. Since $s_a\models^N \phi_a$ is completely determined by the combined satisfaction of the children of $\phi_a$ in $N$, which is faithfully recorded by both $L_M$ and $L_N$ since $M$ and $N$ are both sound, it follows that $L_N(s_a,\phi_a)=L_M(s_a,\phi_a)=b_a$, hence $a\in \hat L_N$.\qed
\end{proof}
The proof of \Cref{thm:evidence-if} (which formally states the properties of $\witness$ and $\counter$) depends on some intermediate results. First, we show that supermodels reflect and preserve certain successors and maximal paths.
\begin{lemma}\label{prop:supermodel preservation}
Let $M_1\submodel M_2$ and let $s\in S_1$.
\begin{enumerate}[smallsep]
\item $\succ_1(s)\subseteq \succ_2(s)$, and $\succ_1(s)=\succ_2(s)$ if $s\in C_1$.
\item For all $\rho\in\maxpath_1(s)$, there is a $\rho'\in \maxpath_2(s)$ such that $\rho\prfeq \rho'$, and $\rho=\rho'$ if $\rho$ ends in $C_1$.
\item For all $\rho\in\maxpath_2(s)$, there is a $\rho'\in \maxpath_1(s)$ such that $\rho'\prfeq \rho$, and $\rho'=\rho$ if $\rho'$ ends in $C_1$.
\end{enumerate}
\end{lemma}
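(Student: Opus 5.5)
The plan is to treat the three items in order, each reducing to the submodel clause that $(t,t')\in R_2\setminus R_1$ implies $t\notin C_1$, together with the inclusions $S_1\subseteq S_2$ and $R_1\subseteq R_2$.

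\emph{Item 1.} From $R_1\subseteq R_2$ we get $\succ_1(s)\subseteq\succ_2(s)$. For the converse when $s\in C_1$, take $s'\in\succ_2(s)\setminus\succ_1(s)$. Then $(s,s')\in R_2\setminus R_1$, so $s\notin C_1$ by \Cref{def:submodel}, a contradiction.

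\emph{Item 2.} Let $\rho\in\maxpath_1(s)$.
\begin{itemize}
\item If $\rho$ is infinite, it is a path of $M_2$ because $R_1\subseteq R_2$. An infinite path is trivially maximal, so $\rho'=\rho$.
\item If $\rho$ is finite, extend it greedily in $M_2$. If $\rho\last$ has an $R_2$-successor, append one and repeat. Either this never stops, giving an infinite path, or it stops at a state with no $R_2$-successor. In both cases the result $\rho'$ is a maximal path of $M_2$ with $\rho\prfeq\rho'$; the infinite case uses dependent choice.
\item If $\rho\last\in C_1$, maximality of $\rho$ in $M_1$ gives $\succ_1(\rho\last)=\emptyset$. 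Item~1 then gives $\succ_2(\rho\last)=\emptyset$, so the extension is empty and $\rho'=\rho$.
\end{itemize}

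\emph{Item 3.} Let $\rho\in\maxpath_2(s)$.
\begin{enumerate}
\item Let $\rho'$ be the longest prefix of $\rho$ all of whose steps lie in $R_1$. This is well defined because $s\in S_1$, so the one-element prefix $s$ qualifies. It may be $\rho$ itself when $\rho$ is infinite.
\item Show that $\rho'$ is a path of $M_1$ whose states lie in $S_1$. The states are the endpoints of $R_1$-steps, so this only needs $R_1\subseteq S_1\times S_1$, which I take as implicit in the notion of a model.
\item If $\rho'=\rho$, then $\rho'$ is maximal in $M_1$. Any proper $M_1$-extension would be an $M_2$-path properly extending $\rho$, contradicting maximality of $\rho$ in $M_2$.
\item If $\rho'\neq\rho$, then $\rho'$ is finite, and the next step $(\rho'\last,\rho\proj{|\rho'|+1})$ lies in $R_2\setminus R_1$. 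Hence $\rho'\last\notin C_1$. Contrapositively, if $\rho'$ ends in $C_1$, then $\rho'=\rho$, which is the second clause of item~3.
\end{enumerate}

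The main obstacle is the remaining case: $\rho'$ is a proper prefix of $\rho$ ending in an open state of $M_1$, and I must show $\rho'\in\maxpath_1(s)$. I do not see how to establish this in general. The open state $\rho'\last$ could still have $R_1$-successors other than $\rho\proj{|\rho'|+1}$. In that case no prefix of $\rho$ would be maximal in $M_1$. A concrete example is $R_2=\setof{(s,a),(s,b)}$, $R_1=\setof{(s,b)}$, $s\notin C_1$, with $a$ terminal and $\rho=s\,a$.

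So I would first try to exclude such configurations, either from hypotheses implicit in how the lemma is used or from a reading of ``maximal'' relative to prefixes of $\rho$. If that fails, I would prove item~3 in the form the later argument actually needs. That form is: there is an $M_1$-path $\rho'\prfeq\rho$ that cannot be extended in $M_1$ along $\rho$, and which coincides with $\rho$ and is genuinely in $\maxpath_1(s)$ whenever it ends in $C_1$. This closed-state case is the one used for the $\counter$ conditions in \Cref{thm:evidence-if}.
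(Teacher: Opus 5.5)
Your arguments for items~1 and~2 are correct, and so are steps~1--4 of item~3. (The requirement $R_1\subseteq S_1\times S_1$ is explicit in \Cref{def:model}, not just implicit.) The paper states this lemma without proof, so there is no argument to compare yours against.

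Your counterexample to item~3 is valid. Take $S_1=\setof{s,b}$, $C_1=\emptyset$, $R_1=\setof{(s,b)}$ and $S_2=C_2=\setof{s,a,b}$, $R_2=\setof{(s,a),(s,b)}$, with empty labellings. Every clause of $M_1\submodel M_2$ holds, and $s\,a\in\maxpath_2(s)$. But $\maxpath_1(s)=\setof{s\,b}$ contains no prefix of $s\,a$. So the existence part of item~3 is false as stated, and you were right not to force a proof. The second clause ($\rho'=\rho$ when $\rho'$ ends in $C_1$) is fine, as your step~4 shows.

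The gap is in your fallback. Your weakened form puts $\rho'$ in $\maxpath_1(s)$ only when it ends in $C_1$, and you claim that this closed case is all the later arguments need. That claim is wrong. Item~3 is invoked in the $\EU$ and $\EG$ counterexample cases of \Cref{lem:witness-counter-upward-closed}, with $M_1=M$ and $M_2=N$, for an arbitrary $\rho_N\in\maxpath_N(s)$. However, $\counter_M(s,\phi)$ constrains only members of $\maxpath_M(s)$. When the longest $R_M$-prefix of $\rho_N$ ends in an open state, that prefix is not maximal in $M$, and your weakened form says nothing about it. This open case is exactly where the argument must locate the decisive $\fff$-state.

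The repair that works is the following. For every $\rho\in\maxpath_2(s)$ there is a $\rho''\in\maxpath_1(s)$ such that either $\rho''=\rho$, or $\rho$ and $\rho''$ share a finite common prefix ending in $S_1\setminus C_1$. To prove it, take your $\rho'$. If $\rho'=\rho$, use your step~3. Otherwise extend $\rho'$ greedily inside $M_1$ exactly as in item~2, and use step~4 to see that $\rho'$ ends in an open state.

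This suffices for both $\counter$ conditions:
\begin{itemize}
\item The decisive state lies in $S\proj{\psi,\fff}$ for $\EG$, and in $S\proj{\psi_1,\fff}\cap S\proj{\psi_2,\fff}$ for $\EU$. In both cases it is preceded only by closed states.
\item It therefore occurs no later than the first open state of $\rho''$, and so inside the common prefix.
\item For $\EU$, the alternative $\rho''\in C\proj{\psi_2,\fff}\finfseq$ is excluded, because $\rho''$ contains an open state.
\item Monotonicity of the sets $C$ and $S\proj{\xi,b}$ from $M$ to $N$ then transfers the membership to $\rho$. This also covers the case $\rho''=\rho$.
\end{itemize}
The only other use of item~3 is the $\EU$ witness case of \Cref{thm:evidence-only-if}. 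There you just need some maximal $M$-path extending the finite witnessing prefix, and greedy extension again provides it.
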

The following states that, for full models, precisely one of $\witness$ and $\counter$ always holds (for any state and formula).
\begin{lemma}[$\witness$ and $\counter$ are complementary in full models.]\label{lem:witness-counter-complementary}
Let $F\in\CForm$ and let $M\in\Model(F)$ be full. For all $s\in S$ and $\phi\in F_\Oper$, exactly one of $\witness(s,\phi)$ and $\counter(s,\phi)$ holds.
\end{lemma}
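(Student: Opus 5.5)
In a full model $C=S$ and $L$ is total, so every state is closed and every child has a defined truth value in every state. This lets me simplify both conditions of \Cref{tab:evidence}: each $S\proj{\psi,\fff}$ becomes the complement of $S\proj{\psi,\ttt}$, and $C\proj{\psi,v}=S\proj{\psi,v}$. I will go through the operators one at a time. For each one, I rewrite $\counter(s,\phi)$ in this simplified form and show that it is the logical negation of $\witness(s,\phi)$. Writing $T_\psi=S\proj{\psi,\ttt}$ and $\overline{T_\psi}=S\proj{\psi,\fff}=S\setminus T_\psi$:

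The local cases are immediate. $\True$: $\witness$ always holds and $\counter$ never does. $\lneg\psi$: $s\in\overline{T_\psi}$ versus $s\in T_\psi$, and these are exclusive and exhaustive because $L$ is total. $\psi_1\lor\psi_2$: $s\in T_{\psi_1}\cup T_{\psi_2}$ versus $s\in\overline{T_{\psi_1}}\cap\overline{T_{\psi_2}}$, which is De~Morgan. $\EX\psi$: the clause $s\in C$ holds trivially, so $\counter$ says $\succ(s)\subseteq\overline{T_\psi}$. This is exactly the negation of ``$\succ(s)$ overlaps $T_\psi$''.

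For the path operators I reduce everything to a statement about a single maximal path. Take $\rho\in\maxpath(s)$. It suffices to show that $\rho$ lies in the set $W$ from the witness condition if and only if it does not lie in the set $K$ from the counterexample condition. Then ``$\maxpath(s)$ overlaps $W$'' is precisely the negation of ``$\maxpath(s)\subseteq K$''.

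For $\EUof{\psi_1}{\psi_2}$, I take
$W=T_{\psi_1}\finseq\cc T_{\psi_2}\cc S\finfseq$ and
$K=\overline{T_{\psi_2}}\finfseq\cup\overline{T_{\psi_2}}\finseq\cc(\overline{T_{\psi_1}}\cap\overline{T_{\psi_2}})\cc S\finfseq$.
Let $j$ be the first index at which $\rho$ satisfies either $\psi_2$ or $\lneg\psi_1\land\lneg\psi_2$, i.e.\ the first position that is not in $T_{\psi_1}\cap\overline{T_{\psi_2}}$.
\begin{itemize}
\item If no such $j$ exists, then every element of $\rho$ is in $\overline{T_{\psi_2}}$, so $\rho\in K$. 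Moreover $\rho\notin W$, because $\rho$ never reaches $T_{\psi_2}$.
\item If $\rho\proj j\in T_{\psi_2}$, then $\rho\in W$. Also $\rho\notin K$: every $K$-path either avoids $T_{\psi_2}$ entirely, or hits $\overline{T_{\psi_1}}$ strictly before its first $T_{\psi_2}$ element, and neither is possible here by the minimality of $j$.
\item If $\rho\proj j\in\overline{T_{\psi_1}}\cap\overline{T_{\psi_2}}$, then $\rho\in K$ and, symmetrically, $\rho\notin W$.
\end{itemize}
One detail needs care: a $W$-path requires $\rho$ to have length at least $j$. This is automatic, since $j$ is an index of $\rho$.

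For $\EG\psi$, with $C=S$ the set $W$ collapses to $T_\psi\finseq\cc T_\psi\cup T_\psi\infseq$. These are exactly the nonempty paths all of whose states lie in $T_\psi$. Maximal paths from $s$ are nonempty, since they contain $s$. The set $K$ is $S\finseq\cc\overline{T_\psi}\cc S\finfseq$, which consists of the paths that contain some state outside $T_\psi$. These two conditions are complementary for each $\rho$.

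The only step that requires real care is the $\EU$ case, specifically verifying the disjointness of $W$ and $K$ for a single path. This is handled by the first-index argument above, so I do not expect a genuine obstacle. The remaining cases are bookkeeping once closedness and totality have been used to remove the $C$ and $\fff$ qualifiers.
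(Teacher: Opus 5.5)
Your proof is correct and follows essentially the paper's route. The paper also uses $C=S$ and totality of $L$, then reduces each temporal case to two set equations: $W\cup K$ covers all sequences, and $W\cap K=\emptyset$. Your per-path complementarity is the same content, and your first-index argument for $\EU$ supplies the detail the paper leaves as ``straightforward''. Restricting to maximal paths, which are nonempty, is slightly more careful than the paper's $\EG$ equation (labelled $\EF$ there), which as written fails for the empty sequence $\emptystr\in S\finfseq$.
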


\begin{proof}
Note that, because $M$ is full, $C=S$ and for all $\xi\in F$, $S$ is partitioned into $S\proj{\xi,\ttt}$ and $S\proj{\xi,\fff}$.

The cases of $\phi=\lneg\psi$ and $\phi= \psi_1\lor \psi_2$ are immediate. For each of the temporal operators, the complementarity of $\counter(s,\phi)$ and $\witness(s,\phi)$ comes down to two set equations, one implying that either $\witness$ or $\counter$ holds, and the other implying that they cannot both hold. Below, $\bar X$ (for $X\subseteq S$) stands for $S\setminus X$.
\begin{itemize}
\item For $\EX\psi$, setting $X=S\proj{\psi,\ttt}$ and hence $\bar X=S\proj{\psi,\fff}$:
\begin{align*}
X\cup \bar X
 & = S \\
X\cap \bar X
  & = \emptyset \enspace.
\end{align*}
\item For $\EUof{\psi_1}{\psi_2}$, setting $X=S\proj{\psi_1,\ttt}$ and $Y=S\proj{\psi_2,\ttt}$, and hence $\bar X=S\proj{\psi_1,\fff}$ and $\bar Y=S\proj{\psi_2,\fff}$:
\begin{align*}
X\finseq\cc Y\cc S\finfseq
  \cup
\left(\bar Y\finfseq\cup \bigl(\bar Y\finseq\cc (\bar X\cap \bar Y)\cc S\finfseq\bigr)\right)
 & = S\finfseq \\
X\finseq\cc Y\cc S\finfseq
  \cap 
\left(\bar Y\finfseq\cup \bigl(\bar X\finseq\cc (\bar X\cap \bar Y)\cc S\finfseq\bigr)\right)
 & = \emptyset \enspace.
\end{align*}

\item For $\EF\psi$, setting $X=S\proj{\psi,\ttt}$ and hence $\bar X=S\proj{\psi,\fff}$:
\begin{align*}
\left((X\finseq\cc X) \cup X\infseq\right)
  \cup
S\finseq \cc \bar X\cc S\finfseq
 & = S\finfseq \\
\left((X\finseq\cc X) \cup X\infseq\right)
  \cap 
S\finseq \cc \bar X\cc S\finfseq
 & = \emptyset \enspace.
\end{align*}
\end{itemize}
All of these equations are straightforward to show.\qed
\end{proof}
The following lemma states that $\witness$ and $\counter$ are preserved in supermodels.

\begin{lemma}\label{lem:witness-counter-upward-closed}
If $M\submodel N$, then $\witness_M(s,\phi)$ implies $\witness_N(s,\phi)$ and $\counter_M(s,\phi)$ implies $\counter_N(s,\phi)$.
\end{lemma}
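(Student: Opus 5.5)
The plan is a case analysis on the top-level operator of $\phi$, following the rows of \Cref{tab:evidence}. Each row is read as a monotonicity statement about three ingredients: the labelled state sets $S\proj{\xi,b}$ and $C\proj{\xi,b}$, the successor sets, and the sets of maximal paths. All of these are compared using \Cref{prop:supermodel preservation}. From $M\submodel N$ I record the basic inclusions first. Since $L_M\subseteq L_N$, we get $S_M\proj{\xi,b}\subseteq S_N\proj{\xi,b}$. Since $C_M\subseteq C_N$, we also get $C_M\proj{\xi,b}\subseteq C_N\proj{\xi,b}$. Every concatenation language built from such sets in the table (e.g.\ $S\proj{\psi_1,\ttt}\finseq\cc S\proj{\psi_2,\ttt}\cc S\finfseq$) is monotone in its constituent sets, so it only grows from $M$ to $N$.

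The local cases $\True$, $\lneg\psi$ and $\psi_1\lor\psi_2$ are then immediate from the inclusions of labelled state sets. For the witness conditions of $\EX$, $\EU$ and $\EG$, the plan is as follows.
\begin{itemize}
\item For $\EX$ I use $\succ_M(s)\subseteq\succ_N(s)$.
\item For $\EU$, a witnessing path $\rho\in\maxpath_M(s)$ extends by \Cref{prop:supermodel preservation}(2) to some $\rho'\in\maxpath_N(s)$ with $\rho\prfeq\rho'$. The target language has the form $\ldots\cc S\finfseq$, so it is closed under extension, and $\rho'$ still lies in it.
\item For $\EG$ there are two subcases. If $\rho\in S\proj{\psi,\ttt}\infseq$, then $\rho$ cannot be properly extended, so $\rho'=\rho$. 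If $\rho$ ends in $C_M\proj{\psi,\ttt}$, then the ``ends in $C_1$'' clause of \Cref{prop:supermodel preservation}(2) gives $\rho'=\rho$, and the last state is in $C_N$ because $C_M\subseteq C_N$.
\end{itemize}

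For the counterexample conditions I argue in the reverse direction. Take any $\rho\in\maxpath_N(s)$, and use \Cref{prop:supermodel preservation}(3) to obtain $\rho''\in\maxpath_M(s)$ with $\rho''\prfeq\rho$. By assumption $\rho''$ lies in the required language.
\begin{itemize}
\item For $\EX$: $s\in C_M$ gives $\succ_N(s)=\succ_M(s)$ by part (1), and $s\in C_N$.
\item For $\EG$: $\rho''\in C\finseq\cc S\proj{\psi,\fff}\cc S\finfseq$, and this language is closed under extension and monotone, so $\rho$ belongs to it.
\item For $\EU$: if $\rho''$ lies in the second disjunct, which ends in $\cc S\finfseq$, the same extension argument applies. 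If $\rho''\in C\proj{\psi_2,\fff}\finfseq$ and it is finite, it ends in a closed state, so part (3) forces $\rho''=\rho$. If it is infinite, then $\rho''=\rho$ trivially. In both subcases $\rho\in C_N\proj{\psi_2,\fff}\finfseq$ by monotonicity.
\end{itemize}

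The main obstacle will be the precise use of \Cref{prop:supermodel preservation}: the ``ends in $C_1$'' clauses, and the treatment of the empty path. For $\EG$ and $\EU$, maximal paths that end in a closed state must not be extendable in $N$. I will check carefully that every finite path in the relevant languages either ends in a closed state or sits in an extension-closed language, and that part (3) gives a \emph{maximal} path of $M$ as a prefix. That is exactly the point where the submodel condition on $R_2\setminus R_1$ is used.
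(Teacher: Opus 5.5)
\textbf{The gap: part (3) of \Cref{prop:supermodel preservation} is false.} Your route is the same as the paper's: a case split on the operator, monotonicity of the labelled sets, and \Cref{prop:supermodel preservation} to move between paths of $M$ and $N$. The witness half is fine. Extending a witnessing path with part (2), and noting that infinite paths and paths ending in a closed state stay maximal, is more careful than the paper's appeal to $\maxpath_M(s)\subseteq\maxpath_N(s)$, which fails in general.

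The problem is the counterexample cases for $\EU$ and $\EG$. They rest entirely on part (3): that every $\rho\in\maxpath_N(s)$ has a prefix in $\maxpath_M(s)$. This is exactly the claim you say you will ``check carefully'', and it does not hold. Take $M=\tupof{\setof{s,t},\emptyset,\setof{(s,t)},\emptyset}$ and $N=\tupof{\setof{s,t,u},\setof{s,t,u},\setof{(s,t),(s,u)},\emptyset}$. Then $M\submodel N$, because the only new transition leaves the open state $s$, and $\maxpath_M(s)=\setof{s\,t}$. But $s\,u\in\maxpath_N(s)$ has no prefix in $\maxpath_M(s)$. In general, a maximal path of $N$ can leave $M$ at an open state where $M$ offers other continuations, so no prefix of it need be maximal in $M$. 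The paper's proof invokes the same part (3), so it has the same hole.

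\textbf{The repair.} The lemma itself is true, but the proof needs one extra idea.
\begin{enumerate}
\item Given $\rho\in\maxpath_N(s)$, let $\rho'$ be the longest prefix of $\rho$ that is a path of $M$. It is nonempty, since $s\in S_M$.
\item If $\rho'=\rho$, then $\rho\in\maxpath_M(s)$: a finite $\rho$ ends in a state without $N$-successors, hence without $M$-successors. The hypothesis plus monotonicity then suffices.
\item Otherwise $\rho'$ is finite and its last state is not in $C_M$. If it were closed, the next transition of $\rho$ would lie in $R_M$ by the submodel condition, contradicting the choice of $\rho'$. This, not part (3), is where the condition on $R_N\setminus R_M$ enters.
\item Extend $\rho'$ to any $\rho''\in\maxpath_M(s)$ and apply the hypothesis to $\rho''$. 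Since $\rho''$ has an open state at position $|\rho'|$, it cannot lie in $C_M\proj{\psi_2,\fff}\finfseq$.
\item In the remaining languages, $C_M\finseq\cc S_M\proj{\psi,\fff}\cc S_M\finfseq$ and $C_M\proj{\psi_2,\fff}\finseq\cc (S_M\proj{\psi_1,\fff}\cap S_M\proj{\psi_2,\fff})\cc S_M\finfseq$, every state before the decisive failing state is closed. So that state occurs at some position $i\leq|\rho'|$.
\item Hence $\rho''\proj 1\cdots\rho''\proj i$ is a prefix of $\rho'$, and thus of $\rho$, and $\counter_N(s,\phi)$ follows by monotonicity.
\end{enumerate}
Extension-closure of the languages alone does not give this conclusion, because $\rho''$ itself is not a prefix of $\rho$.
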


\begin{proof}
Note that we have $S_M\proj{\xi,b}\subseteq S_N\proj{\xi,b}$ and $C_M\proj{\xi,b}\subseteq C_N\proj{\xi,b}$ for all $\xi\in F$ and $b\in\Bool$.
\begin{itemize}
\item For $\phi=\True$, $\phi=\lneg\psi$ and $\phi=\psi_1\lor \psi_2$, this follows immediately from the subset relations recalled above.

\item Assume $\phi=\EX\psi$ and $\witness_M(s,\phi)$ holds, hence $\succ_M(s)\cap S_M\proj{\psi,\ttt}\neq\emptyset$. Since $\succ_M(s)\subseteq \succ_N(s)$ (\Cref{prop:supermodel preservation}) and $S_M\proj{\psi,\ttt}\subseteq S_N\proj{\psi,\ttt}$, $\witness_N(s,\phi)$ then certainly also holds.

\item Assume $\phi=\EUof{\psi_1}{\psi_2}$ and $\witness_M(s,\phi)$ holds, hence $\maxpath_M(s)\cap S\proj{\psi_1,\ttt}\finseq\cc S\proj{\psi_2,\ttt}\cc S\finfseq \neq\emptyset$. Since $\maxpath_M(s)\subseteq \maxpath_N(s)$ (\Cref{prop:supermodel preservation}), $S_M\proj{\psi_1,\ttt} \subseteq S_N\proj{\psi_1,\ttt}$ and $S_M\proj{\psi_2,\ttt} \subseteq S_N\proj{\psi_2,\ttt}$, $\witness_N(s,\phi)$ then certainly also holds.

\item Assume $\phi=\EG\psi$ and $\witness_M(s,\phi)$ holds, hence $\maxpath_M(s)\cap \left((S\proj{\psi,\ttt}\finseq\cc C\proj{\psi,\ttt}) \cup S\proj{\psi,\ttt}\infseq\right) \neq\emptyset$. Since $\maxpath_M(s)\subseteq \maxpath_N(s)$ (\Cref{prop:supermodel preservation}), $S_M\proj{\psi,\fff} \subseteq S_N\proj{\psi,\fff}$ and $C_M\proj{\psi,\fff} \subseteq C_N\proj{\psi,\fff}$, $\witness_N(s,\phi)$ then certainly also holds.

\item Assume $\phi=\EX\psi$ and $\counter_M(s,\phi)$ holds, hence $s\in C_M$ and $\succ_M(s)\subseteq S_M\proj{\psi,\fff}$. Since $\succ_N(s)=\succ_M(s)$ (\Cref{prop:supermodel preservation}) and $S_N\proj{\psi,\fff}\supseteq S_M\proj{\psi,\fff}$, $\counter_N(s,\phi)$ then certainly also holds.

\item Assume $\phi=\EUof{\psi_1}{\psi_2}$ and $\counter_M(s,\phi)$ holds, hence $\maxpath(s)\subseteq C_M\proj{\psi_2,\fff}\finfseq\cup \bigl(C_M\proj{\psi_2,\fff}\finseq\cc (S_M\proj{\psi_1,\fff}\cap S_M\proj{\psi_2,\fff})\cc S_M\finfseq\bigr)$. We also know $C_M\proj{\psi_2,\fff}\subseteq C_N\proj{\psi_2,\fff}$, $S_M\proj{\psi_i,\fff}\subseteq S_N\proj{\psi_i,\fff}$ for $i=1,2$ and $S_M\subseteq S_N$.

Let $\rho_N\in\maxpath_N(s)$; then there is a $\rho_M\in \maxpath_M(s)$ such that $\rho_M\prfeq \rho_N$ and if $\rho_M$ ends in $C_M$, $\rho_M=\rho_N$ (\Cref{prop:supermodel preservation}). If $\rho_M\in C_M\proj{\psi_2,\fff}\finfseq$ then $\rho_M$ ends in $C_M$, hence $\rho_N=\rho_N\in C_N\proj{\psi_2,\fff}\finfseq$. Otherwise, $\rho_M\in C_M\proj{\psi_2,\fff}\finseq\cc (S_M\proj{\psi_M,\fff}\cap S_M\proj{\psi_2,\fff})\cc S_M\finfseq$ and hence $\rho_M\in C_N\proj{\psi_2,\fff}\finseq\cc (S_N\proj{\psi_1,\fff}\cap S_N\proj{\psi_2,\fff})\cc S_N\finfseq$; but then also $\rho_M\in C_N\proj{\psi_2,\fff}\finseq\cc (S_N\proj{\psi_1,\fff}\cap S_N\proj{\psi_2,\fff})\cc S_N\finfseq$. It follows that $\counter_N(s,\phi)$ holds.

\item Assume $\phi=\EG\psi$ and $\counter_M(s,\phi)$ holds, hence $\maxpath_M(s)\subseteq C_M\finseq \cc S_M\proj{\psi,\fff}\cc S_M\finfseq$. We also know $C_M\subseteq C_N$, $S_M\proj{\psi,\fff}\subseteq S_N\proj{\psi,\fff}$ and $S_M\subseteq S_N$.

Let $\rho_N\in\maxpath_N(s)$; then there is a $\rho_M\in \maxpath_M(s)$ such that $\rho_M\prfeq \rho_N$ (\Cref{prop:supermodel preservation}). Since $\rho_M\in C_M\finseq \cc S_M\proj{\psi,\fff}\cc S_M\finfseq$, we also have $\rho_M\in C_N\finseq \cc S_N\proj{\psi,\fff}\cc S_N\finfseq$ and hence $\rho_N\in C_N\finseq \cc S_N\proj{\psi,\fff}\cc S_N\finfseq$. It follows that $\counter_N(s,\phi)$ holds.\qed
\end{itemize}
\end{proof}
The proof of the theorem is now straightforward:
\thmevidenceif*
\begin{proof}
First we observe (eliding proof details) that, if $M$ is sound, $\witness(s,\phi)$ is equivalent to the condition for $s\sat^M \phi$ (for $\phi\in F_\Oper$). \Cref{lem:witness-counter-complementary} then implies that $\counter(s,\phi)$ is equivalent to $s\not\sat^M \phi$.

Now let $M$ be a consistent model, and let $N\supmodel M$ be an arbitrary sound supermodel.
\begin{enumerate}
\item If $\witness_M(s,\phi)$ holds, then (due to \Cref{lem:witness-counter-upward-closed}) also $\witness_N(s,\phi)$ holds, hence (due to the observation above) $s\models^N \phi$. It follows that $M$ is a witness for $(s,\phi)$.
\item If $\counter_M(s,\phi)$ holds, then (due to \Cref{lem:witness-counter-upward-closed}) also $\counter_N(s,\phi)$ holds, hence (due to the observation above) $s\not\models^N \phi$. It follows that $M$ is a counterexample for $(s,\phi)$.\qed
\end{enumerate}
\end{proof}
\propunconstrained*
\begin{proof}
The proof proceeds by general induction on a well-founded ordering ${\leq}\subseteq \pow\Form\times \pow\Form$ that we first have to define. For arbitrary non-empty $F_1,F_2\subseteq \Form$ let
\[ F_1\leq F_2 \;\iffdef\; \depth(F_1)<\depth(F_2) \vee (\depth(F_1)=\depth(F_2) \wedge |F_1|\leq |F_2|) \enspace.
\]
\begin{description}
\item[Base case.] Assume $G\subseteq F$ has distinct propositions and $\depth(G)=1$. Given that $\True$ is not among the allowed operators, we have $\phi=p\in \Prop$ for all $\phi\in G$, and all those $p$ are distinct. Any $M$ with $\dom L\subseteq S\times G$ is consistent: a sound supermodel is obtained by taking any completion of $L$ to a full function $L'\of S\times F_\Prop\to \Bool$ and defining $M'=\tupof{S,S,R,L'}$, and then defining $L''\of S\times F\to \Bool$ by
\[ L'':(s,\phi)\mapsto \begin{cases} \ttt & \text{if } s\sat^{M'} \phi \\ \fff & \text{otherwise.}
\end{cases} \]
Clearly, $M''=M'\sqcup\tupof{L''}$ is sound.

\item[Induction step.] Assume that $H\subseteq F$ has distinct propositions, $\depth(H)>1$ and all $G<H$ with distinct propositions are unconstrained (induction hypothesis). Let $M$ be arbitrary with $\dom L\subseteq S\times H$. We show that $M$ is consistent.

Let $\phi\in H$ be such that $\depth(\phi)=\depth(H)$ and let $G= (H\setminus \setof{\phi}) \cup \child(\phi)$; then $G<H$ and $G$ has distinct propositions, hence (by the induction hypothesis) $G$ is unconstrained. Let $L'=L\proj{S\times G}$ and $M'=\tupof{S,C,R,L'}$. It follows that $M=M'\sqcup\tupof{L}$.

We proceed by a case distinction on $\phi$. Due to $\depth(\phi)>1$ we know $\phi\in \Form_\Oper$.
\begin{itemize}
\item $\phi=\psi_1\lor \psi_2$. Let
\[ L''=L'\cup\gensetof{(s,\psi_i)\mapsto b}{L(s,\phi)=b, i=1,2} \]
For any sound supermodel $N\supmodel\tupof{L''}$, it follows that $L_N\supseteq L$; hence $N\supmodel\tupof{L}$.

\item $\phi=\lneg\psi$. Let
\[ L''=L'\cup\gensetof{(s,\psi)\mapsto\ttt}{L(s,\phi)=\fff} \cup \gensetof{(s,\psi)\mapsto \fff}{L(s,\phi)=\ttt} \]
For any sound supermodel $N\supmodel\tupof{L''}$, it follows that $L_N\supseteq L$; hence $N\supmodel\tupof{L}$.

\item $\phi=\EUof{\psi_1}{\psi_2}$. Let
\[ L''=L'\cup\gensetof{(s,\psi_2)\mapsto\ttt}{L(s,\phi)=\ttt} \cup \gensetof{(s,\psi_i)\mapsto \fff}{L(s,\phi)=\fff, i=1,2} \]
For any sound supermodel $N\supmodel\tupof{L''}$, it follows that $L_N\supseteq L$; hence $N\supmodel\tupof{L}$.
\end{itemize}
In each of these cases, since $G$ is unconstrained, $M''=M'\sqcup\tupof{L''}$ is consistent, meaning that there is a sound supermodel $N\supmodel M''$. Then $N\supmodel \tupof{L''}$ and hence (for all cases of $\phi$) $N\supmodel \tupof L$, which in turn implies $N\supmodel M'\sqcup \tupof L=M$. It follows that $N$ is a sound supermodel of $M$.\qed
\end{description}
\end{proof}
\thmnecessary*
\begin{proof}
Let $b\in\Bool$ such that $b=\ttt$ when we are proving Clause~1 and $\phi$ does not equal $\lneg\psi$ or we are proving Clause~2 and $\phi$ equals $\lneg\psi$, and $b=\fff$ otherwise, and let $\bar b$ be the complement of $b$. For all $s\in S\setminus C$, let $\hat s\notin S$ be a fresh state. We define
\begin{align*}
\hat S & = S\cup\gensetof{\hat s}{s\in S\setminus C} \\
\hat R & = R\cup\gensetof{(s,\hat s)}{s\in S\setminus C} \\
\hat L & = L\cup\gensetof{((s,\psi)\mapsto \bar b)}{s\in \hat S, \psi\in\child(\phi), (s,\psi)\notin\dom L} \\
\hat M & = \tupof{\hat S,\hat S,\hat R,\hat L} \enspace.
\end{align*}
Clearly, $\hat M\supmodel M$. Because $\child(\phi)$ is unconstrained, $\hat M$ is consistent; let $M'\supmodel \hat M$ be a sound supermodel. Since we are only interested in properties of states in $S$ and none of the states in $S'\setminus \hat S$ can be reachable from $S$, w.l.o.g.\ assume $S'=\hat S$ (implying $C'=\hat C$ and $R'=\hat R$, hence the only component that differs between $\hat M$ and $M'$ is the labelling, $\hat L\subseteq L'$). By construction, any path $\rho$ in $M'$ starting from $s$ such that $[\rho']\subseteq S$ must consist only of transitions in $R$, hence be a path in $M$. Furthermore, $S'\proj{\psi,b}=S\proj{\psi,b}$ for all $\psi\in\child(\phi)$.

Consider an arbitrary $s\in S$.
\begin{enumerate}
\item Since $M$ is a witness for $(s,\phi)$, we have $s\sat^{M'} \phi$, hence (due to \Cref{thm:evidence-if}) $\counter'(s,\phi)$ cannot hold. \Cref{lem:witness-counter-complementary} then implies that $\witness'(s,\phi)$ holds. We prove by case distinction on $\phi$ that $\witness(s,\phi)$ then also holds.
\begin{itemize}
\item $\phi=\ttt$. $\witness(s,\phi)$ holds by definition.
\item $\phi=\lneg\psi$. $\witness(s,\phi)$ holds due to $s\in S'\proj{\psi,\fff}= S\proj{\psi,\fff}$.
\item $\phi=\psi_1\lor\psi_2$. $\witness(s,\phi)$ holds due to $s\in S'\proj{\psi_1,\ttt}\cup S'\proj{\psi_2,\ttt}= S\proj{\psi_1,\ttt}\cup S\proj{\psi_2,\ttt}$.
\item $\phi=\EX\psi$. Let $s'\in \succ'(s)\cap S'\proj{\psi,\ttt}$; then $s'\in S\proj{\psi,\ttt}$, hence also $(s,s')\in R$. It follows that $\witness(s,\phi)$ holds.

\item $\phi=\EUof{\psi_1}{\psi_2}$. Let $\rho'\in \maxpath'(s)\cap (S'\proj{\psi_1,\ttt}\finseq \cc S'\proj{\psi_2,\ttt}\cc {S'}\finfseq)$. Due to \Cref{prop:supermodel preservation} there is some $\rho\prfeq \rho'$ such that $\rho\in\maxpath(s)$. Let $\rho''\prfeq \rho'$ be the prefix in $S'\proj{\psi_1,\ttt}\finseq \cc S'\proj{\psi_2,\ttt} = S\proj{\psi_1,\ttt}\finseq \cc S\proj{\psi_2,\ttt}$ Since $[\rho'']\subseteq S$, all transitions are in $R$, implying $\rho''\prfeq \rho$. It follows that $\rho\in S\proj{\psi_1,\ttt}\finseq \cc S\proj{\psi_2,\ttt}\cc S\finfseq$. We may conclude that $\witness(s,\phi)$ holds.

\item $\phi=\EG\psi$. Let $\rho'\in \maxpath'(s)\cap ((S'\proj{\psi,\ttt}\finseq \cc C'\proj{\psi,\ttt})\cup S'\proj{\psi,\ttt}\infseq)$. It follows that $\rho'\in S\proj{\psi,\ttt}\finfseq$, hence $[\rho']\subseteq S$, implying $\rho'\in\maxpath(s)$. There are two cases.
\begin{itemize}
\item $\rho'\in S'\proj{\psi,\ttt}\finseq \cc C'\proj{\psi,\ttt}$; then $\rho'=s_1\cdots s_n\in S\proj{\psi,\ttt}\finseq$. Since $\rho'$ is maximal, it follows that $\succ'(s_n)=\emptyset$, hence $s_n\in C\proj{\psi,\ttt}$.
\item $\rho'\in S'\proj{\psi,\ttt}\infseq$; then also $\rho'\in S\proj{\psi,\ttt}\infseq$.
\end{itemize}
In either case, we may conclude that $\witness(s,\phi)$ holds.
\end{itemize}

\item Since $M$ is a counterexample for $(s,\phi)$, we have $s\not\sat^{M'} \phi$, hence (due to \Cref{thm:evidence-if}) $\witness'(s,\phi)$ cannot hold. \Cref{lem:witness-counter-complementary} then implies that $\counter'(s,\phi)$ holds. We prove by case distinction on $\phi$ that $\counter(s,\phi)$ then also holds.

\begin{itemize}
\item $\phi=\ttt$. Contradiction ($\counter'(s,\phi)$ cannot hold).
\item $\phi=\lneg\psi$. $\counter(s,\phi)$ holds due to $s\in S'\proj{\psi,\ttt}= S\proj{\psi,\ttt}$.
\item $\phi=\psi_1\lor\psi_2$. $\counter(s,\phi)$ holds due to $s\in S'\proj{\psi_1,\fff}\cap S'\proj{\psi_2,\fff}= S\proj{\psi_1,\fff}\cap S\proj{\psi_2,\fff}$.
\item $\phi=\EX\psi$. $\counter'(s,\phi)$ implies $\succ(s')\subseteq S'\proj{\psi,\fff}=S\proj{\psi,\fff}$, which in turn implies $s\in C$ as well as $\succ(s)=\succ'(s)$; hence $\counter(s,\phi)$ holds.

\item $\phi=\EUof{\psi_1}{\psi_2}$. Let $\rho\in\maxpath(s)$; then $\rho\prfeq \rho'$ for some $\rho'\in \maxpath'(s)$ such that $\rho'=\rho$ if $\rho$ ends in $C$ (\Cref{prop:supermodel preservation}).

\smallskip
First assume that $s_1\cdots s_n\prfeq \rho'$ such that $s_i\in C$ for all $1\leq i<n$ and $s_n\notin C$ (which ensures $s_1\cdots s_n\prfeq \rho$), and let $\rho''=s_1\cdots s_n\cc \hat s_n$; then $\rho''\in\maxpath'(s)$. Since $\counter'(s,\phi)$ holds and $\rho''\notin C'\proj{\psi_2,\fff}\finfseq$ (since that would imply $[\rho'']\subseteq S$, contradicting $\hat s_n\notin S$), it must be the case that $\rho'' \in C'\proj{\psi_2,\fff}\infseq \cc (S'\proj{\psi_1,\fff}\cap S'\proj{\psi_2,\fff})\cc {S'}\finfseq$. Let $s_1\cdots s_i$ (with $i\leq n$) be the prefix of $\rho''$ (and hence also of $\rho$) that is in $C'\proj{\psi_2,\fff}\infseq \cc (S'\proj{\psi_1,\fff}\cap S'\proj{\psi_2,\fff})$; then clearly also $s_1\cdots s_i\in C\proj{\psi_2,\fff}\infseq \cc (S\proj{\psi_1,\fff}\cap S\proj{\psi_2,\fff})$. It follows that $\rho\in C\proj{\psi_2,\fff}\infseq \cc (S\proj{\psi_1,\fff}\cap S\proj{\psi_2,\fff})\cc S\finfseq$.

\smallskip
Otherwise, it follows that $[\rho']\subseteq C$. In that case, $\rho'=\rho$, and $\rho'\in C'\proj{\psi_2,\fff}\finfseq \cup C'\proj{\psi_2,\fff}\finseq \cc (S'\proj{\psi_1,\fff}\cap S'\proj{\psi_2,\fff}) \cc {S'}\finfseq$ immediately guarantees $\rho\in C\proj{\psi_2,\fff}\finfseq \cup C\proj{\psi_2,\fff}\finseq \cc (S\proj{\psi_1,\fff}\cap S\proj{\psi_2,\fff}) \cc S\finfseq$

\smallskip
We may conclude that $\counter(s,\phi)$ holds.

\item $\phi=\EG\psi$. Let $\rho\in\maxpath(s)$; then $\rho\prfeq \rho'$ for some $\rho'\in \maxpath'(s)$ (\Cref{prop:supermodel preservation}).

\smallskip
First assume that $s_1\cdots s_n\prfeq \rho'$ such that $s_i\in C$ for all $1\leq i<n$ and $s_n\notin C$ (which ensures $s_1\cdots s_n\prfeq \rho$), and let $\rho''=s_1\cdots s_n\cc \hat s_n$; then $\rho''\in\maxpath'(s)$. Since $\counter'(s,\phi)$ holds, we have $\rho''\in {C'}\finseq\cc S'\proj{\psi,\fff}\cc {S'}\finfseq$. Let $s_1\cdots s_i$ (with $i\leq n$) be the prefix of $\rho''$ (and hence also of $\rho$) that is in ${C'}\finseq\cc S'\proj{\psi,\fff}$; then clearly also $s_1\cdots s_i\in C\finseq\cc S\proj{\psi,\fff}$. It follows that $\rho \in C\finseq\cc S\proj{\psi,\fff}\cc S\finfseq$.

\smallskip
Otherwise, it follows that $[\rho']\subseteq C$. In that case, $\rho'=\rho$, and $\rho'\in {C'}\finseq\cc S'\proj{\psi,\fff}\cc {S'}\finfseq$ immediately guarantees $\rho\in C\finseq\cc S\proj{\psi,\fff}\cc S\finfseq$.

\smallskip
We may conclude that $\counter(s,\phi)$ holds.\qed
\end{itemize}
\end{enumerate}
\end{proof}
\thmminevidence*
\begin{proof}
The proof consists of two parts: showing that $\minwitness_M(s,\phi)$ implies $\witness_M(s,\phi)$, and showing that if $\minwitness_M(s,\phi)$ holds and $N\propersubmodel M$, $\witness_N(s,\phi)$ does \emph{not} hold --- and analogously for $\mincounter$ and $\counter$. We will prove this for all $N$ that are \emph{direct} predecessors of $M$ in the $\propersubmodel$-relation, i.e., for which there is no intermediate $N'$ with $N\propersubmodel N'\propersubmodel M$.  (Due to \Cref{lem:witness-counter-upward-closed}, this suffices.) We denote $N\directsubmodel M$ in this case. This implies that one of the components $S_N$, $C_N$, $R_N$, or $L_N$ is just one element smaller than its counterpart in $M$, and the others are equal.
\begin{enumerate}[topsep=\smallskipamount]
\item By a case distinction on $\phi$.
\begin{itemize}
\item $\phi=\ttt$. $\witness_M(s,\phi)$ holds always, and $M$ is the smallest model that includes $s$ so there does not exist any $N\submodel M$.

\item $\phi=\lneg\psi$. $\witness_M(s,\phi)$ clearly holds, and the only $N\propersubmodel M$ has $L_N=\emptyset$, hence $\witness_N(s,\phi)$ does not hold.

\item $\phi=\psi_1\lor\psi_2$. $\witness_M(s,\phi)$ clearly holds, and the only $N\propersubmodel M$ has $L_N=\emptyset$, hence $\witness_N(s,\phi)$ does not hold.

\item $\phi=\EX\psi$. $\succ_M(s)=\setof{s'}$ and $s'\in S_M\proj{\psi,\ttt}$ so $\witness_M(s,\phi)$ holds. If $N\directsubmodel M$ then either $R_N=\emptyset$, hence $\succ_N(s)=\emptyset$, or $L_N=\emptyset$, hence $S_N\proj{\psi,\ttt}=\emptyset$. In either case, $\witness_N(s,\phi)$ does not hold.

\item $\phi=\EUof{\psi_1}{\psi_2}$. $\maxpath_M(s)=\setof\rho$ with $\rho\in S_M\proj{\psi_1,\ttt}\finseq\cc S_M\proj{\psi_2,\ttt}$, hence $\witness_M(s,\phi)$ holds. If $N\directsubmodel M$ then either $R_N\subset R_M$, in which case $\maxpath_N(s)=\setof{\rho'}$ for some $\rho'\prf \rho$, which then does not end on a state in $S_N\proj{\psi_2,\ttt}$; or $R_N=R_M$ but $L_N\subset L_M$, in which case $\rho\notin S_N\proj{\psi_1,\ttt}\finseq\cc S_N\proj{\psi_2,\ttt}$. In either case, $\witness_N(s,\phi)$ does not hold.

\item $\phi=\EG\psi$. $\maxpath_M(s)=\setof\rho$ with $\rho\in S_M\proj{\psi,\ttt}\finseq\cc C_M\proj{\psi,\ttt}$ if $<_\rho$ has a maximum (``first kind''), or $\rho\in S_M\proj{\psi,\ttt}\infseq$ and $C=\emptyset$ otherwise (``second kind''). If $N\directsubmodel M$, then either $C_N\subset C_M$ and $R_N=R_M$, which means that $\rho$ was of the first kind but does not end in $C_N$; or $C_N=C_M$ and $R_N\subset R_M$, in which case $\maxpath_N(s)=\setof{\rho'}$ for some $\rho'\prf\rho$, which is then certainly finite and does not end in $C_N$; or $C_N=C_M$, $R_N=R_M$ and $L_N\subset L_M$, in which case $\maxpath_N(s)=\setof\rho$ but $\rho\notin S_N\proj{\psi,\ttt}\finfseq$. In all cases, $\witness_N(s,\phi)$ does not hold.
\end{itemize}
\item By a case distinction on $\phi$.
\begin{itemize}
\item $\phi=\ttt$. Contradiction.

\item $\phi=\lneg\psi$. $\counter_M(s,\phi)$ clearly holds, and the only $N\propersubmodel M$ has $L_N=\emptyset$, hence $\counter_N(s,\phi)$ does not hold.

\item $\phi=\psi_1\lor\psi_2$. $\counter_M(s,\phi)$ clearly holds, and if $N\propersubmodel M$ then $|L_N|=1$, hence $\counter_N(s,\phi)$ does not hold.

\item $\phi=\EX\psi$. $s\in C_M$ and $\succ(s)\subseteq S_M\proj{\psi,\fff}$, so $\counter_M(s,\phi)$ holds. If $N\directsubmodel M$ then $S_M=S_N$ (otherwise $R_M=R_N$ could not hold), so either $C_N=\emptyset$ (in which case $s\notin C_N$) or $C_N=C_M$ and $R_N\subset R_M$ (which contradicts the conditions for $N\propersubmodel M$) or $R_N=R_M$ and $L_N\subset L_M$ (in which case $\succ_N(s)=\succ_M(s)\not\subseteq S_N\proj{\psi,\fff}$). In all these cases, $\counter_N(s,\phi)$ does not hold.

\item $\phi=\EUof{\psi_1}{\psi_2}$. For all $\rho\in\maxpath_M(s)$, either $\rho\in C_M\proj{\psi_1,\fff}\finfseq$ or $\rho\in S_M\proj{\psi_1,\fff}\finseq$ with $\rho\last\notin C_M$; in the latter case, $\rho\last\in S_M\proj{\psi_2,\fff}$ and $\rho\proj i\in C_M$ for all $i<|\rho|$. It follows that $\counter_M(s,\phi)$ holds. Now assume $N\directsubmodel M$. Since all of $S_M$ is reachable from $s$, $S_N\subset S_M$ contradicts $R_N=R_M$. We proceed by a case distinction on the conditions for $N\directsubmodel M$.
\begin{itemize}
\item $C_N\subset C_M$, $R_N=R_M$ and $L_N=L_M$; then $\maxpath_N(s)=\maxpath_M(s)$. Let $\rho\in \maxpath_N(s)$ such that $\rho_i\in C_M\setminus C_N$ for some $1\leq i\leq |\rho|$. If $i<|\rho|$ then $\counter_N(s,\phi)$ fails at once. Otherwise, $\rho\last\in C_M$, which means $\rho\last\notin S_M\proj{\psi_2,\fff} = S_N\proj{\psi_2,\fff}$; again, $\counter_N(s,\phi)$ fails for $\rho$.

\item $C_N=C_M$, $R_N\subset R_M$ and $L_N=L_M$. Since all transitions in $R_M$ start in a state in $C_M$, this contradicts the conditions for $N\submodel M$.

\item $C_N=C_M$, $R_N=R_M$ and $L_N\subset L_M$. This means that either there is some $s'\in {S_M}\proj{\psi_1,\fff}\setminus S_N\proj{\psi_1,\fff}$ or there is some $s'\in S_M\proj{\psi_2,\fff}\setminus S_N\proj{\psi_2,\fff}$ (implying $s\notin C_M$). Since all $s'\in S_M$ are reachable from $s$, there is a $\rho\in\maxpath_N(s)$ that includes $s'$; this $\rho$ then no longer satisfies the conditions of $\counter_N(s,\phi)$.
\end{itemize}

\item $\phi=\EG\psi$. For all $\rho\in\maxpath_M(s,\phi)$, it follows that $\rho\in S_M\finseq$ (because $R_M$ is acyclic), $\rho\proj i\in C_M$ for all $i<|\rho|$ and $\rho\last\in S_M\proj{\psi,\fff}$; hence $\counter_M(s,\phi)$ holds. Now assume $N\directsubmodel M$. Since all of $S_M$ is reachable from $s$, $S_N\subset S_M$ contradicts $R_N=R_M$. We proceed by a case distinction on the conditions for $N\directsubmodel M$.
\begin{itemize}
\item $C_N\subset C_M$, $R_N=R_M$ and $L_N=L_M$; then $\maxpath_N(s)=\maxpath_M(s)$. Let $\rho\in \maxpath_N(s)$ such that $\rho_i\in C_M\setminus C_N$ for some $1\leq i\leq |\rho|$. Since $C\cap\max R=\emptyset$, it must be the case that $i<|\rho|$; but then $\rho$ fails to satisfy the conditions of $\counter_N(s,\phi)$.
\item $C_N=C_M$, $R_N\subset R_M$ and $L_N=L_M$. Since all transitions in $R_M$ start in a state in $C_M$, this contradicts the conditions for $N\submodel M$.
\item $C_N=C_M$, $R_N=R_M$ and $L_N\subset L_M$. This means there is some $s'\in \max R_N\setminus S_N\proj{\psi,fff}$. Any $\rho\in \maxpath_N(s)$ that ends on $s'$ (of which is at least one) fails to satisfy the conditions of $\counter_N(s,\phi)$.\qed
\end{itemize}
\end{itemize}
\end{enumerate}
\end{proof}
\propproof*
\begin{proof}
Let $F\in \CForm$.
\begin{description}[topsep=\smallskipamount]
\item[If.] Let $\tupof{N,\pi}$ be a proof over $F$. We show that $N$ is consistent; this then naturally also holds for any $M\submodel N$.

Let $n=\depth(F)$ and for $0\leq i\leq n$ let $M_i\submodel M$ with $S_i=S$, $C_i=C$, $R_i=R$ and $\dom L_i=\dom L\cap (S\times \gensetof{\phi \in F_\Oper}{\depth(\phi)\leq i})$. Clearly $M= M_n$. We show by induction that all $M_i$ are consistent.
\begin{description}
\item[Base case.] If $i=0$ then $\dom L_i=S\times F_\Prop$; since $F_\Prop$ is unconstrained, $M_i$ is consistent.
\item[Induction step.] Assume $M_{i-1}$ is consistent, and let $N\supmodel M_{i-1}$ be a sound supermodel. For all $a=(s,\phi,b)\in \hat L_i$ we have $\depth(\child(\phi))<i$, hence $P(a)\submodel M_{i-1}$. It follows that $N\supmodel P(a)$. Since $P(a)$ is evidence for $a$, it follows that $a\in \hat L_N$. But then also $M_i\submodel N$, hence $M_i$ is consistent.
\end{description}

\item[Only if.] Let $M\in\Model(F)$ be consistent, and let $N\supmodel M$ be sound. For all $a\in \hat {L_N}_\Oper$ let $\pi(a)$ be defined by $\tupof{S,C,R,L\cap(S\times \child(\phi_a))}$. Clearly, $\pi(a)\submodel M$ and according to \Cref{prop:evidence}, $\pi(a)$ is evidence for $a$. It follows that $P=\tupof{N,\pi}$ is the required proof.\qed
\end{description}
\end{proof}

\fi
\end{document}
% The following is recommended on https://trevorcampbell.me/html/arxiv.html
\typeout{get arXiv to do 4 passes: Label(s) may have changed. Rerun}